\setlist[itemize]{leftmargin=*}
\newtheorem{theorem}{Theorem}
\newtheorem{lemma}{Lemma}
\newcommand{\Rmnum}[1]{\expandafter\@slowromancap\romannumeral #1@}
\newcommand{\pfx}{\frac{\partial}{\partial x}}
\newcommand{\pfxi}{\frac{\partial}{\partial \xi}}
\newcommand{\igzo}{\int_0^1}
\newcommand{\igzx}{\int_0^x}
\newcommand{\igxo}{\int_x^1}
\newcommand{\wh}{\hat{w}}
\newcommand{\lt}{L_2(0,1)}
\newcommand{\mcl}[1]{\mathcal{#1}}
\newcommand{\hlf}{\frac{1}{2}}
\newcommand{\pop}{\mathcal{P}}
\newcommand{\pinv}{\mathcal{P}^{-1}}
\tikzset{
  -|-/.style={
    to path={
      (\tikztostart) -| ($(\tikztostart)!#1!(\tikztotarget)$) |- (\tikztotarget)
      \tikztonodes
    }
  },
  -|-/.default=0.5,
  |-|/.style={
    to path={
      (\tikztostart) |- ($(\tikztostart)!#1!(\tikztotarget)$) -| (\tikztotarget)
      \tikztonodes
    }
  },
  |-|/.default=0.5,
}
\tikzstyle{decision} = [diamond, draw, text width=4.5em,
\tikzstyle{block} = [rectangle, draw, text width=10cm,
\tikzstyle{line} = [draw, -latex']
\tikzstyle{smallblock} = [rectangle, draw,
\tikzstyle{cloud} = [draw, circle, node distance=2.5cm, minimum height=2.8em]
\tikzstyle{blank} = [node distance=1cm]
\begin{document}
%
\title{A Convex Sum-of-Squares Approach to Analysis, \\
 State Feedback and Output Feedback \\
  Control of Parabolic PDEs}
%
%
%

\author{Aditya~Gahlawat
         and~Matthew.~M.~Peet
\thanks{ This research was supported by the Chateaubriand program and NSF CAREER Grant CMMI-1151018. }         
\thanks{Aditya Gahlawat is with the Department
of Mechanical, Materials and Aerospace Engineering at the Illinois Institute of Technology, Chicago,
IL, 60616 USA e-mail: (agahlawa@hawk.iit.edu).}
\thanks{Matthew. M. Peet is with the School of Engineering of Matter, Transport and Energy at Arizona State University, Tempe, AZ, 85287-6106 USA e-mail: (mpeet@asu.edu).}
\thanks{Color versions of one or more of the figures in this paper are available online at http://ieeexplore.ieee.org.}
\thanks{Digital Object Identifier 10.1109/TAC.2016.2593638}
}


\maketitle
\begin{abstract}
We present an optimization-based framework for analysis and control of linear parabolic Partial Differential Equations (PDEs) with spatially varying coefficients without discretization or numerical approximation. For controller synthesis, we consider both full-state feedback and point observation (output feedback). The input occurs at the boundary (point actuation). We use positive definite matrices to parameterize positive Lyapunov functions and polynomials to parameterize controller and observer gains. We use duality and an invertible state variable transformation to convexify the controller synthesis problem. Finally, we combine our synthesis condition with the Luenberger observer framework to express the output feedback controller synthesis problem as a set of LMI/SDP constraints. We perform an extensive set of numerical experiments to demonstrate accuracy of the conditions and to prove necessity of the Lyapunov structures chosen. We provide numerical and analytical comparisons with alternative approaches to control including Sturm Liouville theory and backstepping. Finally we use numerical tests to show that the method retains its accuracy for alternative boundary conditions.

\textit{Index Terms}---Distributed parameter systems, partial differential equations (PDEs), control design, sum of squares.
\end{abstract}
\IEEEpeerreviewmaketitle

\section{Introduction}
Partial Differential Equations (PDEs) are used to model quantities which vary in both space and time with early examples including the D'Alembert wave equation (1746); the Euler-Bernoulli beam (1750); the Euler equations (1757); and the Fourier heat equation (1822). Today, the use of PDE models has expanded to include phenomena such as the magnetohydrodynamics of plasma in a fusion reactor~\cite{witrant2007control}, tumour growth, infectious diseases, and ecological succession~\cite[Chapter~$11$]{murray2002mathematical}. However, despite the variety of phenomena modeled by PDEs, compared to the literature on Ordinary Differential Equations (ODEs), our knowledge of how to analyze and control PDEs remains incomplete.

Consider the following class of scalar-valued anisotropic parabolic PDEs with input $u(t) \in \R$,
\begin{equation}
 w_t(x,t)=a(x)w_{xx}(x,t)+b(x)w_x(x,t)+c(x)w(x,t), \label{eqn:prob:PDE_form}
\end{equation} $x \in [0,1]$, $t \ge 0$,
which has output $v(t)=w(1,t) \in \R$ and mixed boundary conditions of the form
\begin{equation}
\label{eqn:prob:PDE_form_BC}
w(0,t)=0, \qquad w_x(1,t)=u(t),
\end{equation}
where $a$, $b$ and $c$ are polynomials with $a(x) \geq \alpha >0$, for $x \in [0,1]$. We assume the controller is parameterized by scalar $R_1$ and function $R_2$ as $u(t)=R_1 \hat w(1,t)+\int_0^1 R_2(x)\hat w(x,t)dx$ where $\hat w$ is an estimate of $w$ obtained from some set of observer dynamics. The objective of the paper is to propose an optimization-based method for determining controller gains $R_1$ and $R_2$ and observer dynamics which minimize certain closed-loop gains.

Control of PDE models is a challenging problem in that slight variations in the type of PDE, boundary conditions, etc. may dramatically alter properties of the solution~\cite{lions1972non}. The model defined above is classified as an anisotropic parabolic PDE with point inputs and point outputs. The term anisotropic means that the values of the coefficients $a(x),b(x)$ and $c(x)$ depend on the spatial variable $x\in [0,1]$. Examples of anisotropic systems include heat conduction with non-homogeneous conductive properties or a wave propagating through a medium of varying density. The term point input (boundary actuated) means that the control input determines one of the boundary values and therefore has no direct measurable effect on Equation~\eqref{eqn:prob:PDE_form}. This is in contrast to the case of distributed inputs, wherein the control effort is spread over some measurable subset of the domain. In a similar manner, the term point output means that the sensor measures the state at a single point in the domain and hence the output operator is unbounded in the $L_2$ induced norm.

Perhaps the most common approach to analysis and control of PDEs is based on the use of discrete approximation. Such approximation techniques typically use a model reduction wherein the PDE is approximated by a set of ODEs. Finite-dimensional linear control theory is then used to analyze stability and design control laws for the finite-dimensional approximations~\cite{morris1994design,morris2010approximation}. Furthermore, results have been obtained which show that as the order of the discrete approximation increases, stability of the closed-loop approximations will eventually imply stability of the closed-loop PDE. A disadvantage of the discrete approach, however, is that the required order of the approximation cannot be established a priori. Consequently, the stability of any particular approximation is not guaranteed to imply stability of the actual PDE. For this reason, among others, there has been some interest in finding approaches to analysis and control which can be applied directly to the PDE model without the use of discretization or numerical approximation. Such methods are sometimes termed direct or infinite-dimensional.

There has been significant progress in the use of direct methods for control of PDE systems. One approach is to express the control problem as the solution to a set of operator-valued Riccati equations. This approach was applied to distributed input/distributed output optimal control problems in~\cite{van1993h}. The problem of point actuation with full-state feedback was considered in~\cite{lasiecka2000control} (and related work) and extended in~\cite{lasiecka1994control} to output feedback controller synthesis through the use of a Luenberger observer. An alternative Riccati-based approach for static output feedback of a certain class of well-posed operators can be found in~\cite{staffans1997quadratic,staffans1998quadratic,weiss1997optimal}. A limitation of these Riccati-based methods, however, is that they rely on finite-dimensional numerical methods for obtaining the operator-valued solution. While convergence of these approximations has been demonstrated~\cite{lasiecka2000control}, for a given level of approximation, it is not possible to determine whether existence of a solution implies the closed loop is stable when applied to the original PDE.

Backstepping~\cite{krstic2008boundary} is a  popular and well-developed method for boundary control of parabolic PDE systems. This approach is based on the use of a boundary controller to transform the PDE to a simpler model for which the existence of a decreasing Lyapunov function has previously been established. The backstepping approach is commonly used in the literature and has been extended to many classes of PDE systems - see, e.g.~\cite{krstic2008adaptive,smyshlyaev2007adaptive,smyshlyaev2007adaptive2,smyshlyaev2006lyapunov}. A highlight of the backstepping method is that for certain types of system, stabilizability guarantees the existence of a backstepping transformation. However, a drawback of the backstepping approach is that it is not based on optimization, but rather typically requires numerical integration of a PDE in order to obtain the stabilizing controller - thereby making extensions to robust and optimal control more difficult.  Although a complete survey the of the literature on direct control of PDEs is beyond the scope of this paper, we do note some other significant results on the use of Lyapunov functions for analysis and control of infinite dimensional systems including: a rotating beam~\cite{coron1998stabilization}; quasilinear hyperbolic systems~\cite{coron2008dissipative}; and control of systems governed by conservation laws \cite{coron2007strict}.
As an alternative to Lyapunov-based methods, a classical spectral approach to stability and stabilization is based on Sturm-Liouville theory. In particular, the differential operators which define the PDEs in this paper can be adapted to the Sturm-Liouville framework, from whence one can attempt to determine stability and design \textbf{static} output-feedback controllers. As is demonstrated in Section~\ref{sec:comparison}, however, the use of dynamic output feedback offers considerable advantages over this classical framework.

The goal of this paper is to design stabilizing static state feedback and dynamic output feedback controllers for PDE systems. Our approach is inspired by the use of Linear Matrix Inequalities (LMIs) and Semi-Definite Programming (SDP) in control of ODEs. For stability analysis, as discussed in Sections~\ref{sec:posop} and~\ref{sec:stability}, we use positive definite matrices to create a linear parametrization of a cone of Lyapunov functions which are positive on the Hilbert space $L_2$. Specifically, the Lyapunov functions have the quadratic form $V=\ip{\mcl{Z}(w)}{P\mcl{Z}(w)}_{L_2}$ where $w\in L_2$ is the infinite-dimensional state, $P$ is a positive definite matrix and $\mcl{Z}$ is a fixed vector of multiplication and integral operators with monomial multipliers and kernels. The derivative of the Lyapunov function is likewise constrained to be a negative definite quadratic form. If such a Lyapunov function exists it directly proves stability of the PDE - i.e. there is no numerical approximation. For state-feedback controller synthesis, the controller, as defined above, is parameterized by a scalar $R_1$ and a function $R_2$. Combining these gains with the quadratic Lyapunov functions used for stability analysis yields synthesis conditions which are bilinear in the design variables. However, as described in Sections~\ref{sec:operators} and~\ref{sec:synthesis}, by defining an invertible state transformation and a variable substitution, we derive synthesis conditions which are linear in the optimization variables. Next, in Section~\ref{sec:obsynth} we introduce a class of infinite-dimensional Luenberger observers with observer gains, again parameterized by the coefficients of polynomials. Again, using the Lyapunov function from Section~\ref{sec:posop} and the invertible state variable transformation from Section~\ref{sec:operators}, we obtain SDP-based observer synthesis conditions. Finally, in Section~\ref{sec:num_results}, we verify the accuracy of the method with a series of numerical tests which indicate that the proposed stability conditions are accurate to several decimal places and suggest that for any suitably controllable and observable system, the algorithm will return an observer-based controller. This is followed by Section~\ref{sec:comparison}, wherein we include numerical and analytical comparisons with other results in the literature, including Sturm-Liouville and backstepping.

A significant contribution of the paper, in addition to a new approach to analysis and control of PDEs, lies in the flexibility of the optimization-based approach. Specifically, as the use of LMIs for control of ODEs enabled the field of robust control, so too does our LMI/Lyapunov-based approach to control of PDEs allow the extension to analysis and control of PDEs with parametric uncertainty, PDEs with nonlinearity, multivariate PDEs and PDEs coupled with ODEs or delays. Finally, we note that our approach is complementary to several recent results in the use of LMIs for stability and control of PDEs, including, e.g. our early work in~\cite{papachristodoulou2006analysis}, modeling and control of nonlinear dynamic systems in~\cite{tanaka2009sum}, stability analysis of semilinear parabolic and hyperbolic systems in~\cite{fridman2009lmi} and the numerous results contained in~\cite{orlov2014advanced}.


\section{Notation}\label{sec:notation}
We denote the vector space of $m$-by-$n$ real matrices by $\R^{m \times n}$ and the subspace of symmetric matrices by $\S^n \subset \R^{n \times n}$ where the multiplicative and additive identities are denoted by $I_n \in \S^n$ and $0_{m,n} \in \R^{m \times n}$, respectively. For $P \in \S^n$, $P>0$ $(P \geq 0)$ denotes that $P$ is a positive definite (positive semi-definite) matrix. The spaces of $n-$times continuously differentiable and infinitely differentiable functions on an interval $W \subset \R$ are denoted by $C^n(W)$ and $C^\infty(W)$, respectively. In a similar manner, $C^{n,m}(W_1,W_2)$ represents the space of $n$ and $m-$times continuously differentiable functions on intervals $W_1 \subset \R$ and $W_2 \subset \R$, respectively. The shorthand $u_x$ and $u_t$ denote the partial derivative of $u$ with respect to independent variables $x$ and $t$, respectively. For a bivariate function, $f(x,y)$, we denote $D_1 f:=f_x$ and $D_2 f:=f_y$ - i.e. $D_1$ is differentiation with respect to the first variable and $D_2$ is differentiation with respect to the second. In a similar manner, $D_1^2:=f_{xx}$ and $D_2^2:=f_{yy}$. Recall $L_2(W)$ is the standard Hilbert space of square Lebesgue integrable functions with standard norm and inner product. We use $H^n(W)$ to denote the Sobolev subspace $H^n(W):=\left\{y \in L_2(W) \,: \, \frac{d^{n}y}{dt^{n}} \in L_2(W)\right\}$ with inner product $\ip{x}{y}_{H^n} = \sum_{m=0}^n\ip{\frac{d^m x}{dt^m}}{\frac{d^m y}{dt^m}}_{L_2}$. We occasionally let $L_2(0,1):=L_2([0,1])$ and $H^n(0,1):=H^n([0,1])$. For normed spaces $X$ and $Y$, $\mcl{L}(X,Y)$ denotes the Banach space of bounded linear operators from $X$ to $Y$ with induced norm $\|\mcl{G}\|_\mathcal{L}:=\sup_{\norm{x}_X=1}\norm{\mcl{G}x}_Y$ and we denote $\mcl{L}(X):=\mcl{L}(X,X)$. We define $Z_d(x) \in \R^{d+1 \times 1}$ to be the column vector of all monomials in variables $x$ of degree $d$ or less arranged in increasing lexicographical order. We often use the notation $Z_d(x,y):=Z_d([x;y])$ to denote the vector of monomials in both $x$ and $y$. For any function $T\in L_2$ we use $\mathcal{M}_T: L_2 \rightarrow L_2$ to denote the multiplier operator defined by $T$. i.e. $(\mathcal{M}_T w)(x)=T(x)w(x)$. For any functions $M,K_1,K_2 \in C^\infty$ we define
\begin{align}
&\left(\mcl{X}_{\{M,K_1,K_2\}}w  \right)(x) \notag \\
&\label{eqn:X}=M(x)w(x)+\igzx K_1(x,\xi)w(\xi)d\xi+\igxo K_2(x,\xi)w(\xi)d\xi.
\end{align}

\section{Problem Statement}\label{proset}
 For the system of Equations~\eqref{eqn:prob:PDE_form}~-~\eqref{eqn:prob:PDE_form_BC}, the strict positivity of $a(x)$ implies that the differential operator defining the PDE is uniformly elliptic~\cite[Section~$6.1$]{evans2009partial}. This means that $w$ diffuses from higher density to lower density, a property which is representative of most physical systems.  The choice of sensor and actuator location is somewhat arbitrary. For the heat equation, input $w_x(1,t)=u(t)$ would represent heat flow into the rod and the output $v(t)=w(1,t)$ represents the temperature of the rod at that point. Note that the results of this paper can be adapted to Dirichlet, Neuman and Robin boundary conditions with only slight modifications to the conditions and proofs. These extensions are addressed in Section~\ref{sec:ABC}.

The goal of this article is to design algorithms which resolve the following problems:
\begin{enumerate}
\item \textit{Stability Analysis}: Establish global exponential stability of the trivial solution $w \equiv 0$ of the autonomous system $u(t)=0$ and determine the exponential rate of decay $\delta$.
\item \textit{State feedback control:} If the autonomous system is unstable, construct gains $R_1 \in \R$ and $R_2(x) \in C^\infty(0,1)$ such that if
\begin{equation}
u(t)=R_1 w(1,t)+\igzo R_2(x)w(x,t)dx,\label{eqn:feedback_structure}
\end{equation}
then the trivial solution $w \equiv 0$ is globally exponentially stable with some desired rate of decay, $\mu$.
\item \textit{Output feedback control:} If only output feedback is available ($v(t)=w(1,t)$), construct gains $L_1\in C^\infty(0,1)$ and $L_2 \in \R$ such that for stabilizing gains $R_1$ and $R_2$, if
\[
u(t)=R_1 \wh(1,t)+\igzo R_2(x)\wh(x,t)dx,
\]
where $\hat w$ satisfies
 \begin{align}
  \wh_t(x,t)=&a(x)\wh_{xx}(x,t)+b(x)\wh_x(x,t) \notag \\
  &\label{eqn:prob:observer_form}\qquad \quad +c(x)\wh(x,t)+L_1(x)\left(\hat{v}(t)-v(t) \right),
\end{align}
for $v(t)=w(1,t)$ and $\hat v(t)=\hat w(1,t)$  with boundary conditions
\begin{equation}
\label{eqn:prob:observer_form_BC}
 \wh(0,t)=0, \qquad \wh_x(1,t)=u(t)+L_2\left(\hat{v}(t)-v(t) \right),
\end{equation}
then the trivial solution $w\equiv 0$ of Equations~\eqref{eqn:prob:PDE_form}~-~\eqref{eqn:prob:PDE_form_BC} is globally exponentially stable.
\end{enumerate}

Note that if we consider only bounded linear operators, then the structure of the controller in~\eqref{eqn:feedback_structure} is not restrictive, as any bounded linear functional can be represented in this way using only the integral form (second term). However, we also would like to consider unbounded operators and hence we include the term $R_1 w(1,t)$ as well. If controllers of this form prove inadequate, then one can generalize the structure further to include terms such as $\int_0^1 R_3(x)w_x(x,t)dx$ as in~\cite{gahlawat2011designing}.

The choice for the structure of the Luenberger observer was similarly determined in an ad-hoc manner through inclusion of terms necessary to achieve separation of controller synthesis and observer design objectives. That is, the goal of the observer is to stabilize the dynamics of the estimation error $e=\wh-w$ and the terms in Equations~\eqref{eqn:prob:observer_form}~-~\eqref{eqn:prob:observer_form_BC} were chosen as the minimal necessary to achieve this objective. Again, this structure mirrors the structure of observers found in the backstepping approach.

\subsection{Existence and Uniqueness}
We now briefly discuss the uniqueness and existence of solutions. Define the operator
\begin{equation}\label{eqn:exist:A}
\mcl{A}=a(x) \frac{d^2}{dx^2}+b(x)\frac{d}{dx} + c(x).
\end{equation} It is known that the operator $\mcl{A}$ restricted to space
\begin{equation}\label{eqn:D0}
\mcl{D}_0=\{w \in H^2(0,1): \quad w(0)=w_x(1)=0\},
\end{equation}
 generates a strongly-continuous semigroup, or a $C_0$-semigroup, on $L_2(0,1)$ (see, e.g.,~\cite[Section~$2.1$]{curtain1995introduction}). More precisely, one can represent $\mcl{A}$ as the negative of a Sturm-Liouville operator on $\mcl{D}_0$ and hence, using the spectral properties of a Sturm-Liouville operator, it can be proven that $\mcl{A}$ restricted to $\mcl{D}_0$ generates a $C_0$-semigroup on $\lt$~\cite{delattre2003sturm}. Thus, using Theorems $3.1.3$ and $3.1.7$ in~\cite{curtain1995introduction} we conclude that in the autonomous case ($u(t)=0$), for any initial condition $w_0 \in \mcl{D}_0$ there exists a unique classical solution of~\eqref{eqn:prob:PDE_form}~-~\eqref{eqn:prob:PDE_form_BC}.

For the state-feedback case, using a fixed point argument similar to the one presented in~\cite{balogh2004stability} it can be shown that for $R_1 \in \R$ and $R_2 \in L_\infty (0,1)$, the closed loop system~\eqref{eqn:prob:PDE_form}~-~\eqref{eqn:prob:PDE_form_BC} with
\[
u(t)=R_1 w(1,t)+\igzo R_2(x)w(x,t)dx,
\]
admits a unique local in time solution $w \in C^{1,2}((0,T),[0,1])$, for $T>0$ sufficiently small, for any initial condition $w_0 \in \mcl{D}$, where
  \begin{align}
\mcl{D}=\{&w \in H^2(0,1) \colon w(0)=0 \text{ and } \notag \\
&\label{eqn:D} \qquad \qquad w_x(1)=R_1 w(1)+\igzo R_2(x)w(x)dx\}.
\end{align} Thus if we can establish that any solution of the closed loop system decays exponentially, then this implies the existence and uniqueness of a unique classical solution $w \in C^{1,2}((0,\infty),[0,1])$ for any $w_0 \in \mcl{D}$. The proof of this statement has been omitted, but follows the arguments presented in~\cite[Section~6]{balogh2004stability}.

Finally, consider the observer-based controller as defined in Equations~\eqref{eqn:prob:PDE_form}~-~\eqref{eqn:prob:PDE_form_BC} and~\eqref{eqn:prob:observer_form}~-~\eqref{eqn:prob:observer_form_BC}. Define the estimator error as $e=\wh-w$, which is governed by
\begin{equation}
 e_t(x,t)=a(x)e_{xx}(x,t)+b(x)e_x(x,t)+c(x)e(x,t) +L_1(x)e(1,t),\label{eqn:exist:error_form}
\end{equation}
with boundary conditions
\begin{equation}
\label{eqn:exist:error_form_BC}
 e(0,t)=0, \qquad e_x(1,t)=L_2e(1,t).
\end{equation}
It has been established in~\cite[Section~2]{fridman2009lmi} that for $L_1 \in C^1(0,\infty)$ and $L_2 \in \R$, Equations~\eqref{eqn:exist:error_form}~-~\eqref{eqn:exist:error_form_BC}, if exponentially stable,  admit a unique local in time solution $e \in C^{1,2}((0,T),[0,1])$, for $T>0$ sufficiently small, for any initial condition $e_0 \in \mcl{D}_e$, where
\begin{equation}\label{eqn:D_e}
\mcl{D}_e=\{w \in H^2(0,1) \colon w(0)=0 \text{ and }w_x(1)=L_2 w(1)\}.
\end{equation} \
Therefore, if we can establish that any solution of the coupled closed-loop dynamics decays exponentially, then the local in time solution can be extended to a classical solution $e \in C^{1,2}((0,\infty),[0,1])$ for any initial condition $e_0 \in \mcl{D}_e$.


\section{A Framework for Stability Analysis and Control}\label{sec:framework}
Our approach is motivated by the use of LMIs for optimal control of finite-dimensional systems. For example, consider the autonomous finite-dimensional ODE
 \[
 \dot{x}(t)=Ax(t),
\]
 where $x(t) \in \R^n$. This ODE is exponentially stable if and only if there exists a positive definite matrix $P \in \S^n$ such that
\[
A^T P+ P A < 0.
\]
Feasibility of this LMI implies that the Lyapunov function $V(x)=x^T P x$ is positive definite and its derivative along solutions $\dot V(x)=x^T (A^T P+PA)x$ is negative definite. For stability of PDEs, our approach is to use positive matrices to define positive quadratic Lyapunov functions, except that instead of $V(x)=x^T P x$, we will use the form $V(w)=\ip{\mathcal{Z}(w)}{P \mathcal{Z}(w)}$, where $\mathcal{Z}:L_2 \rightarrow \R^p$ is a vector of bases for a subspace of linear operators on $L_2$ (similar to how $x=[x_1,\cdots,x_n]^T$ is a vector of bases for the space of linear functions on $\R^n$). In our case, however, $\mathcal{Z}$ parameterizes a subspace of multiplier and integral operators with polynomial multipliers and semi-separable kernels. Then, if $P>\epsilon I$, it has a symmetric square root and hence $V(w)=\ip{P^{\half}\mathcal{Z}(w)}{P^{\half} \mathcal{Z}(w)}\geq \epsilon \norm{w}^2$. 
For the time derivative, we will similarly require $\dot V(w(t))+\mu V(w(t))=-\ip{\mathcal{Z}(w(t))}{Q \mathcal{Z}(w(t))}$, for some scalar $\mu>0$ and $Q > 0$ where here and throughout the paper we denote by $\dot V$ the function which satisfies $\dot V(w(t)):=\frac{d}{dt} V(w(t))$ for any solution of the associated PDE - i.e. the derivative along solutions or time-derivative. Existence of such $P,Q>0$ implies exponential stability of the system. As was done for LMIs in finite-dimensional systems, this approach can then be extended to controller and observer synthesis, as outlined below.
\paragraph{Controller Synthesis} For controller synthesis, again consider the LMI approach for the finite-dimensional system:
\[
\dot{x}(t)=Ax(t)+Bu(t),
\]
where $x(t) \in \R^n$ and $u(t) \in \R^m$. For this system, there exists a stabilizing state feedback controller of the form $u(t)=Rx(t)$ if and only if there exists a positive definite matrix $P$ and $Y \in \R^{m \times n}$ such that
\[
(AP+BY)+(AP+BY)^T <0.
\]
If this LMI is feasible, then for $R=YP^{-1}$, the Lyapunov function $V(x)=x^T P^{-1}x$ is positive definite and has time derivative
\begin{align*}
\dot V&=x^T(P^{-1}A+P^{-1}BR + A^T P^{-1}+(BR)^T P^{-1})x\\
&=(P^{-1}x)^T (AP+BRP + PA^T+(BRP)^T)(P^{-1}x)\\
&=y^T (AP+BY + PA^T+(BY)^T)y<0,
\end{align*}
where $y=P^{-1}x$. The extension of this LMI approach to PDEs is to search for a positive definite operator $\mathcal{P}=\mathcal{Z}^*P\mathcal{Z}$ for some $P> 0$ and operator $\mathcal{Y}$, defined by $(\mathcal{Y}z)(z):=Y_1 z(1) + \igzo Y_2(x)z(x)dx$, such that if $u=\mathcal{R}w=\mathcal{Y}\mathcal{P}^{-1}w$,
\begin{align*}
u(t) &= R_1 w(1,t)+\igzo R_2(x)w(x,t)dx\\
&= Y_1 (\mathcal{P}^{-1}w)(1,t)+\igzo Y_2(x)(\mathcal{P}^{-1}w)(x,t)dx,
\end{align*}
then the Lyapunov function $V=\ip{w}{\mathcal{P}^{-1}w}=\ip{\mathcal{Z}(\mathcal{P}^{-1} w)}{P\mathcal{Z}(\mathcal{P}^{-1} w)}$ satisfies $\dot V(w(t))+2\mu V(w(t))=-\ip{\mathcal{Z}\left( \pinv w\right)}{Q\mathcal{Z}\left( \pinv w\right)}$ for some scalar $\mu>0$ and $Q>0$, which implies the closed-loop system is exponentially stable. This is detailed in Section~\ref{sec:synthesis}.

\paragraph{Observer Synthesis}
As mentioned previously, for observer design, we use a Luenberger observer and a separation principle to decouple the error dynamics as defined in Equations~\eqref{eqn:exist:error_form}~-~\eqref{eqn:exist:error_form_BC}. For a finite-dimensional Luenberger observer, where the output is $v(t)=Cx(t)$, the estimator dynamics are defined using the controller gain $F$ and observer gain $L$ as
\[
\dot{\hat{x}}= (A+LC)\hat{x}-L v(t)+Bu(t).
\]
If $u(t)=F\hat x(t)$, then the error dynamics become
\[
\dot e(t)=(A+LC)e(t).
\]
Existence of an observer gain which renders the error dynamics stable is equivalent to the existence of a $P>0$ and $T$ such that
\[
PA+TC +A^TP +C^TT^T<0.
\]
If this LMI is feasible, then for $L=P^{-1}T$, the Lyapunov function $V(e)=e^T P e$ is positive definite and has derivative
\begin{align*}
\dot V(e)&=e^T(PA+PLC + A^T P+C^T L^T P)e\\
&=e^T(PA+TC + A^T P+C^T T^T)e<0.
\end{align*}
For the infinite-dimensional PDE, we have two observer gains which we construct as
\[
L_1(x)=\mathcal{P}^{-1}(T_1(x)+T_3(x)) \quad \text{and}\quad L_2=\mathcal{P}^{-1}(T_2),
\]
 for some gains $T_1$, $T_2$ and $T_3$ and where $\mcl{P}=\mcl{Z}^\star P \mcl{Z}$ for some $P>0$. We then use the Lyapunov function $V(e)=\ip{\mathcal{Z}(e)}{P \mathcal{Z}(e)}$ and search for a $Q>0$ such that $\dot V(e)=-\ip{\mathcal{Z}(e)}{Q\mathcal{Z}(e)}\leq -\delta V(e)<0$, for some $\delta>0$. This is detailed in Section~\ref{sec:obsynth}.

\section{Sum-of-Squares Lyapunov Functions with Semi-Separable Kernels}\label{sec:posop}
In this Section, we define the map $\mathcal Z$ and show how this map is used to construct Lyapunov functions of the form $V(w)=\ip{\mathcal{Z}(w)}{P\mathcal{Z}(w)}$. This approach is based on prior work, as described in~\cite{peetlmi}. Specifically, we define
\[
(\mathcal{Z}w)(x)=\bmat{Z_{d_1}(x)w(x)\\ \int_x^1 Z_{d_2}(x,\xi) w(\xi) d\xi \\ \int_0^x Z_{d_2}(x,\xi) w(\xi)d\xi},
\]
where recall $Z_{d_1}(x)$ and $Z_{d_2}(x,\xi)$ are the vectors of all monomials of degree $d_1$ and $d_2$ or less, starting with $1$.
\begin{theorem}\label{thm:jointpos}
Given $d_1, d_2 \in \mathbb{N}$ and $\epsilon > 0$, $\epsilon \in \mathbb{R}$, let $Z_1(x) = Z_{d_1}(x)$ and $Z_2(x,\xi) = Z_{d_2}(x,\xi)$, with $n=d_1+1$ and $m=\hlf (d_2+2)(d_2+1)$ denoting the length of these vectors, respectively.
Suppose that there exists a matrix $P\in \S^{n+2m}$ such that
\begin{equation}\label{eqn:jointpos:const1}
P=\left[\begin{array}{ccc} P_{11}-\bmat{\epsilon&0_{1,n-1}\\0_{n-1,1}&0_{n-1,n-1}}  & P_{12} & P_{13} \\
P_{12}^T & P_{22} & P_{23} \\
P_{13}^T & P_{23}^T & P_{33}
\end{array} \right]
 \ge 0,\end{equation} where $P_{ij}$ is a partition of $P$ such that $P_{11}\in \S^{n},P_{22}\in \S^{m}$ and $P_{33}\in \S^{m}$. Now let
\begin{align}
&\label{eqn:jointpos:const2}M(x) = Z_{1}(x)^T P_{11}Z_{1}(x),\\
&K_1(x,\xi) = Z_{1}(x)^T P_{12}Z_{2}(x,\xi) + Z_{2}(\xi,x)^T P_{31}Z_1(\xi) \notag \\
&+\int_0^\xi Z_{2}(\eta,x)^T P_{33}Z_{2}(\eta,\xi)d\eta  \notag \\
&+\int_\xi^x Z_{2}(\eta,x)^T P_{32}Z_{2}(\eta,\xi)d \eta \notag \\
&\label{eqn:jointpos:const3} +\int_x^1 Z_{2}(\eta,x)^T P_{22}Z_{2}(\eta,\xi)d\eta,\\
&\label{eqn:jointpos:const4}K_2(x,\xi) = K_1(\xi,x).
\end{align}
Then
\begin{align}
V(w)&=\int_0^1 w(x)M(x)w(x) dx \notag \\
& \qquad  + \int_0^1 \int_0^x w(x) K_1(x,\xi) w(\xi) d \xi dx \notag \\
&\label{eqn:Lyapunov} \qquad \qquad  + \int_0^1 \int_x^1 w(x)K_2(x,\xi) w(\xi) d \xi dx \\
&=\ip{\mathcal{Z}(w)}{P\mathcal{Z}(w)}=\ip{P^\half\mathcal{Z}(w)}{P^\half\mathcal{Z}(w)}\ge \epsilon \norm{w}^2.\notag
\end{align}
\end{theorem}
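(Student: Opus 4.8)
The plan is to prove the result in two stages: first establish the algebraic identity $V(w)=\ip{\mcl{Z}(w)}{P\mcl{Z}(w)}$, where the right-hand side is the $L_2$ inner product $\igzo (\mcl{Z}w)(x)^T P (\mcl{Z}w)(x)\,dx$, and then deduce the coercivity bound $V(w)\ge\epsilon\norm{w}^2$ from the matrix inequality \eqref{eqn:jointpos:const1}. The whole argument is a change-of-order-of-integration computation followed by a single splitting of the matrix $P$.

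For the identity I would first partition $P$ conformally with the three blocks of $\mcl{Z}w$ and expand the integrand $(\mcl{Z}w)^T P (\mcl{Z}w)$ into the diagonal terms (the $P_{11}$, $P_{22}$, $P_{33}$ blocks) and the three cross terms ($P_{12}$, $P_{13}$, $P_{23}$ together with their transposes). The $P_{11}$ term is pointwise in $x$ and integrates directly to $\igzo w(x)M(x)w(x)\,dx$, giving \eqref{eqn:jointpos:const2}. Each remaining term is a multiple integral in which a dummy variable $s$ (the outer variable of the operators composing $\mcl{Z}$) is coupled to the two state locations $x$ and $\xi$ carried by the two copies of $w$; the task is to apply Fubini's theorem to integrate out $s$ and re-express every term in the canonical kernel form $\igzo\igzx w(x)(\cdots)w(\xi)\,d\xi\,dx+\igzo\igxo w(x)(\cdots)w(\xi)\,d\xi\,dx$.

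The geometry of this reindexing is what produces the semi-separable (three-interval) structure of the kernel, and is the crux of the proof. The key observation is that the support constraints on $s$ differ by block: the product of two integral operators of the same type forces $s$ to lie either below both state locations (yielding an inner integral over $[0,\min(x,\xi)]$) or above both (yielding $[\max(x,\xi),1]$), whereas the mixed cross term $P_{23}$ forces $s$ to lie strictly between them (yielding the middle integral over $[\xi,x]$ when $\xi<x$). Restricting to the region $\xi<x$ and summing the contributions---$Z_1(x)^T P_{12}Z_2(x,\xi)$ from the monomial--integral cross term, the three integrals over $[0,\xi]$, $[\xi,x]$ and $[x,1]$ from the integral--integral blocks, and the remaining monomial--integral cross term supplied through symmetry---must reproduce exactly the definition \eqref{eqn:jointpos:const3} of $K_1$. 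I expect the bookkeeping of which matrix block lands on which subinterval, together with the accompanying transposes $P_{31}=P_{13}^T$ and $P_{32}=P_{23}^T$, to be the main obstacle; the transposed halves of each cross term are precisely what populate the complementary region $\xi>x$, and comparing the two regions yields the relation $K_2(x,\xi)=K_1(\xi,x)$ of \eqref{eqn:jointpos:const4}. Since $P$ is symmetric the entire form is symmetric, so no contribution is lost or double counted.

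Finally, for coercivity I would split $P=\widetilde P+E$, where $E$ has $\epsilon$ in its $(1,1)$ entry and zeros elsewhere and $\widetilde P$ is the matrix appearing in \eqref{eqn:jointpos:const1}, which is $\ge 0$ by hypothesis. Since the first entry of $Z_1(x)$ is the monomial $1$, the first component of $(\mcl{Z}w)(x)$ is $w(x)$, so $\igzo (\mcl{Z}w)^T E\,(\mcl{Z}w)\,dx=\epsilon\igzo w(x)^2\,dx=\epsilon\norm{w}^2$. Because $\widetilde P\ge 0$, the remaining term $\igzo(\mcl{Z}w)^T\widetilde P\,(\mcl{Z}w)\,dx$ is nonnegative, which gives $V(w)\ge\epsilon\norm{w}^2$; moreover $P=\widetilde P+E\ge 0$ admits a symmetric square root, justifying the factorization $\ip{P^{\hlf}\mcl{Z}(w)}{P^{\hlf}\mcl{Z}(w)}$ in \eqref{eqn:Lyapunov}.
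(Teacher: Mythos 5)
Your proposal is correct and takes exactly the route the paper intends: the paper's own proof is the single sentence that the result ``follows directly from the definition of $\mathcal{Z}$ and the Sum-of-Squares representation of $V$,'' and your blockwise expansion of the quadratic form, the Fubini reindexing that produces the three-interval semi-separable kernel, and the splitting $P=\widetilde{P}+E$ (using that the leading entry of $Z_1$ is the constant monomial, so the $E$-term integrates to $\epsilon\|w\|^2$) are precisely the details the paper omits. No gap.
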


\begin{proof}
The proof follows directly from the definition of $\mathcal{Z}$ and the Sum-of-Squares representation of $V$.
\end{proof}

The form of the Lyapunov function defined by Theorem~\ref{thm:jointpos} in Equation~\eqref{eqn:Lyapunov} is somewhat atypical for the study of parabolic PDEs. A more commonly used version would be $V(w)=\int_0^1 w(x)M(x)w(x)dx$ or even yet $V(w)=\int_0^1 w(x)M w(x)dx$ for $M>0$. Such forms can be obtained as a special case of Theorem~\ref{thm:jointpos} when $P_{ij}=0$ for $i\neq j \neq 1$. However, as we discuss in Section~\ref{sec:simpler}, neglect of the $K_1$ and $K_2$ terms results in significantly less accurate conditions for stability and control.

For polynomials $M$, $K_1$ and $K_2$, let $\mathcal{X}_{\{M,K_1,K_2\}}$ be defined as in~\eqref{eqn:X}. If $M$, $K_1$ and $K_2$ satisfy the conditions of Theorem~\ref{thm:jointpos}, then $V(w)=\ip{w}{\mathcal{X}_{\{M,K_1,K_2\}}w}\ge \epsilon \norm{w}^2$, which implies the operator $\mathcal{X}_{\{M,K_1,K_2\}}$ is positive definite and furthermore, coercive. Moreover, since $M$, $K_1$ and $K_2$ are polynomials, the operator is bounded, which implies that there exists a $\theta>0$ such that $\epsilon \norm{w}^2 \le V(w)\le \theta \norm{w}^2$. Finally, the constraint~\eqref{eqn:jointpos:const4} in Theorem~\ref{thm:jointpos} implies that the operator $\mcl{X}_{\{M,K_1,K_2\}}$ is self-adjoint.

As discussed in Section~\ref{sec:framework}, Theorem~\ref{thm:jointpos} allows us to use positive matrices to parameterize positive Lyapunov functions of the Form~\eqref{eqn:Lyapunov}. By expanding these forms, the coefficients of the polynomials $M$, $K_1$ and $K_2$ are linear combinations of the elements of $P>0$. Furthermore, if we can express the derivative $\dot V$ in the Form~\eqref{eqn:Lyapunov}, where the coefficients are again linear combinations of the elements of $P$, then we can enforce negativity of the derivative along the solutions $w$ by using $\dot V(w)=-\ip{\mathcal{Z}(w)}{Q\mathcal{Z}(w)}$ to equate these coefficients to those defined by $Q>0$. Constructing the matrices which relate the elements of $P$ and $Q$ can be automated using MATLAB toolboxes for polynomial manipulation such as MULTIPOLY, contained in the package SOSTOOLS~\cite{prajna2001introducing} and further developed in our package DELAYTOOLS~\cite{peetlmi}.

For polynomials $M$, $K_1$ and $K_2$, we represent the constraint $\ip{w}{\mathcal{X}_{\{M,K_1,K_2\}}w}= \ip{\mathcal{Z}(w)}{P\mathcal{Z}(w)}$ for some $P>0$ as $\{M, K_1, K_2\}\in \Xi_{\{d_1,d_2,\epsilon\}}$ where
\begin{align*}
 &\Xi_{\{d_1,d_2,\epsilon\}} :=\{ M,K_1,K_2 \, : \, M,K_1,K_2 \text{ satisfy} \\
 & \qquad \qquad \qquad \qquad \qquad \qquad \qquad  \text{Theorem~\ref{thm:jointpos} for $d_1,d_2,\epsilon$}\}.
\end{align*}
The constraint $\{M, K_1, K_2\}\in \Xi_{\{d_1,d_2,\epsilon\}}$ is an LMI constraint in the coefficients of the polynomials $M$, $K_1$ and $K_2$ and the unknown matrix $P > 0$. In this way, the shorthand $\{M, K_1, K_2\}\in \Xi_{\{d_1,d_2,\epsilon\}}$ allows us to define LMI constraints implicitly.


\section{A Test for Stability}\label{sec:stability}
In this section, we use the results of the previous section to test the existence of a Lyapunov function which establishes stability of the scalar parabolic PDE defined in Equations~\eqref{eqn:prob:PDE_form}~-~\eqref{eqn:prob:PDE_form_BC}.
Recall the autonomous ($u(t)=0$) form of the PDE
\begin{align}
&\label{eqn:stab:PDE_form} w_t(x,t)=a(x)w_{xx}(x,t)+b(x)w_x(x,t)+c(x)w(x,t), \\
&\label{eqn:stab:PDE_form_BC} w(0,t)=0, \qquad w_x(1,t)=0.
\end{align}
The main technical contribution of this section is reformulating the derivative of the Lyapunov function $V$ in~\eqref{eqn:Lyapunov} in the form of Equation~\eqref{eqn:Lyapunov}. This is achieved in the following theorem wherein we obtain functions $\hat M$, $\hat K_1$ and $\hat K_2$ such that
\begin{align*}
\dot V(w)&\le
\int_0^1 w(x)\hat M(x)w(x) dx  \\
& \qquad \qquad + \int_0^1 \int_0^x w(x) \hat K_1(x,\xi) w(\xi) d \xi dx \\
&\qquad \qquad \qquad \qquad + \int_0^1 \int_x^1 w(x) \hat K_2(x,\xi) w(\xi) d \xi dx.
\end{align*}
Note that the inequality in this expression is deliberate, i.e., certain negative semidefinite terms have been left out of $\hat M$, $\hat K_1$ and $\hat K_2$.

Before giving the main theorem, we define the following linear map, $\Omega_s$, which relates functions $M$, $K_1$ and $K_2$ to an upper bound on the time-derivative of the Lyapunov function defined by these functions. Specifically, we say that
\begin{equation}\label{eqn:omega_s}
\{\hat M, \hat K_1, \hat K_2\}:=\Omega_s(M,K_1,K_2),
\end{equation}
if
\begin{align}
\hat{M}(x)=& \pfx \left[\pfx a(x)M(x)-b(x)M(x) \right] \notag \\
& \qquad  + 2\left[\pfx \left[a(x)\left(K_1(x,\xi)-K_2(x,\xi) \right) \right] \right]_{\xi=x}  \notag \\
&\label{eqn:obs:Mhat} \qquad \qquad  +2M(x)c(x)-\frac{\pi^2}{2}\alpha \epsilon,\\
\hat{K}_1(x,\xi)=&  \pfx \left[\pfx \left[ a(x)K_1(x,\xi)\right]-b(x)K_1(x,\xi) \right] \notag \\
& \qquad   +\pfxi \left[\pfxi \left[a(\xi)K_1(x,\xi)\right]-b(\xi)K_1(x,\xi) \right] \notag \\
&\label{eqn:obs:K1hat} \qquad \qquad   +\left(c(x)+c(\xi) \right)K_1(x,\xi),\\
\hat{K}_2(x,\xi)=&\label{eqn:obs:K2hat}\hat{K}_1(\xi,x).
\end{align}
\begin{theorem}\label{thm:analysis}
Suppose that there exist scalars $\epsilon,\delta>0$, $d_1,d_2,\hat{d}_1,\hat{d}_2 \in \N$ and polynomials $M$, $K_1$ and $K_2$ such that
\begin{align*}
&\{M,K_1,K_2\} \in \Xi_{d_1,d_2,\epsilon},\\
&\{-\hat{M}-2\delta M,-\hat{K}_1-2\delta K_1,-\hat{K}_2-2\delta K_2\} \in \Xi_{\hat{d}_1,\hat{d}_2,0},\\
& (b(1)-a_x(1))K_1(1,x)-a(1)(D_1K_{1})(1,x)=0,\\
& (b(1)-a_x(1))M(1)-a(1)M_x(1) \leq 0,\\
& K_2(0,x)=0,
\end{align*} where $\{\hat M, \hat K_1, \hat K_2\}:=\Omega_s(M,K_1,K_2)$.
Then for any initial condition $w(0) \in \mcl{D}_0$, there exists a scalar $\gamma>0$ such that the classical solution $w$ of \eqref{eqn:stab:PDE_form}~-~\eqref{eqn:stab:PDE_form_BC} satisfies
\[\norm{w(t)} \leq \gamma \norm{w(0)}e^{-\delta t}, \quad t > 0,\] where $\mcl{D}_0$ is defined in Equation~\eqref{eqn:D0}.
\end{theorem}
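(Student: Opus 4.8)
The plan is to follow the standard Lyapunov route adapted to this infinite-dimensional setting: establish that $V$ in~\eqref{eqn:Lyapunov} is coercive and bounded, show its time-derivative along classical solutions satisfies $\dot V \le -2\delta V$, and then conclude exponential decay by a Gr\"onwall argument. By the first hypothesis $\{M,K_1,K_2\}\in\Xi_{d_1,d_2,\epsilon}$ together with Theorem~\ref{thm:jointpos} and the remarks following it, the operator $\mcl{X}_{\{M,K_1,K_2\}}$ is self-adjoint, bounded and coercive, so there exist $0<\epsilon\le\theta$ with $\epsilon\norm{w}^2 \le V(w)\le\theta\norm{w}^2$ for every $w\in\lt$.

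The computational core is to differentiate $V$ along a solution of~\eqref{eqn:stab:PDE_form}. Using self-adjointness I would write $\dot V(w(t)) = 2\ip{w}{\mcl{X}_{\{M,K_1,K_2\}} w_t}$ and substitute $w_t = a w_{xx}+b w_x + cw$. Next I would integrate by parts in $x$ (and in $\xi$ for the kernel integrals) to transfer the two spatial derivatives off $w_{xx}$ and $w_x$ and onto the multiplier $M$ and kernels $K_1,K_2$. This is precisely the computation encoded in the map $\Omega_s$ of~\eqref{eqn:obs:Mhat}--\eqref{eqn:obs:K2hat}: the distributed (bulk) contributions reassemble into $\int_0^1 w\hat M w\,dx + \int_0^1\int_0^x w\hat K_1 w\,d\xi\,dx + \int_0^1\int_x^1 w\hat K_2 w\,d\xi\,dx$, where the on-diagonal term $2[\pfx[a(K_1-K_2)]]_{\xi=x}$ in $\hat M$ arises from differentiating the variable limits of the two kernel integrals at $\xi=x$.

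The delicate part is the boundary bookkeeping. Integration by parts generates terms evaluated at $x=0,1$, at $\xi=0,1$, and along the diagonal $\xi=x$, together with a genuinely negative diffusion contribution of the form $-c\int_0^1 a\, w_x^2\,dx$. Here I would invoke the PDE boundary conditions $w(0,t)=0$ and $w_x(1,t)=0$ to annihilate the endpoint contributions carried by $w$, and the three kernel conditions in the hypothesis, namely $(b(1)-a_x(1))K_1(1,x)-a(1)(D_1K_1)(1,x)=0$, $K_2(0,x)=0$, and $(b(1)-a_x(1))M(1)-a(1)M_x(1)\le 0$, which are exactly what is needed to cancel (or render nonpositive) the remaining boundary terms not controlled by the state boundary conditions alone. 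This is the meaning of the deliberate inequality announced before the theorem. To finish the estimate I would spend part of the diffusion term via the Poincar\'e/Wirtinger inequality: since $w(0,t)=0$, we have $\int_0^1 w_x^2\,dx \ge \frac{\pi^2}{4}\int_0^1 w^2\,dx$, so, schematically, the combination $-c\int_0^1 a w_x^2\,dx + \frac{\pi^2}{2}\alpha\epsilon\int_0^1 w^2\,dx$ is nonpositive; dropping it subtracts the constant $\frac{\pi^2}{2}\alpha\epsilon$ from $\hat M$, exactly the term appearing in~\eqref{eqn:obs:Mhat}. Collecting, $\dot V(w(t)) \le \int_0^1 w\hat M w + \int_0^1\int_0^x w\hat K_1 w + \int_0^1\int_x^1 w\hat K_2 w$.

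The endgame then follows from the second hypothesis. The membership $\{-\hat M-2\delta M,\,-\hat K_1-2\delta K_1,\,-\hat K_2-2\delta K_2\}\in\Xi_{\hat d_1,\hat d_2,0}$ asserts, via Theorem~\ref{thm:jointpos} with $\epsilon=0$, that the quadratic form built from these functions is nonnegative for all $w$; recognizing by linearity that its $M,K_1,K_2$ part is $2\delta V(w)$, this says $\int_0^1 w\hat M w + \dots \le -2\delta V(w)$, whence $\dot V(w(t)) \le -2\delta V(w(t))$. Gr\"onwall gives $V(w(t))\le V(w(0))e^{-2\delta t}$, and combining with $\epsilon\norm{w}^2\le V(w)\le\theta\norm{w}^2$ yields $\norm{w(t)}\le\sqrt{\theta/\epsilon}\,\norm{w(0)}e^{-\delta t}$, i.e. the claim with $\gamma=\sqrt{\theta/\epsilon}$. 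I expect the main obstacle to be the integration-by-parts step: correctly tracking every boundary and diagonal contribution for the semi-separable kernels, verifying that the stated kernel conditions cancel precisely those contributions, and justifying differentiation under the integral sign for classical solutions with data in $\mcl{D}_0$.
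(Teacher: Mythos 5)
Your proposal is correct and follows essentially the same route as the paper: coercivity and boundedness of $V$ from Theorem~\ref{thm:jointpos}, the integration-by-parts/Wirtinger computation of $\dot V$ (which the paper relegates to Lemma~\ref{lem:appendix_1} in the appendix, with the stated kernel and boundary conditions cancelling exactly the boundary terms you identify), the second $\Xi$ membership giving $\dot V \le -2\delta V$, and Gr\"onwall with $\gamma=\sqrt{\theta/\epsilon}$. No gaps beyond the detailed bookkeeping you already flag as the remaining work.
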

\begin{proof}
Recall the operator $\mcl{X}_{\{M,K_1,K_2\}}$ is as defined in~\eqref{eqn:X}. As discussed in Section~\ref{proset}, for any $w(0) \in \mcl{D}_0$ the autonomous system admits a unique classical solution.
By Theorem~\ref{thm:jointpos}, if $\{M,K_1,K_2\} \in \Xi_{d_1,d_2,\epsilon}$, then
\begin{align*}
&V(w)=\ip{w}{\mcl{X}_{\{M,K_1,K_2\}}w}=\ip{w}{\mcl{P} w},
\end{align*}
satisfies  $\epsilon \norm{w}^2\le V(w)\le \theta \norm{w}^2$ for some $\theta>0$. The calculation of the time derivative $\dot V$ and its reformulation is lengthy. It involves integration by parts, the Wirtinger inequality and the assumption $a(x) \geq \alpha$. For this reason, we have included this proof in the appendix as Lemma~\ref{lem:appendix_1}. Continuing, by Lemma~\ref{lem:appendix_1}, for any $w$ which satisfies Equations~\eqref{eqn:stab:PDE_form}~-~\eqref{eqn:stab:PDE_form_BC},
\begin{align*}
&\dot V(w(t))\le \ip{w(t)}{\mcl{X}_{\{\hat{M},\hat{K}_1,\hat{K}_2\}}w(t)}.
\end{align*}

Now, since $\{-\hat{M}-2\delta M,-\hat{K}_1-2\delta K_1,-\hat{K}_2-2\delta K_2\} \in \Xi_{\hat{d}_1,\hat{d}_2,0}$, we have that $\mcl{X}_{\{\hat{M},\hat{K}_1,\hat{K}_2\}} \leq -2 \delta \mcl{P}$ and thus $-\dot V(w)-2 \delta V(w) \ge 0$. This implies that $\frac{d}{dt} V(w(t))\le -2 \delta V(w(t))$ for all $t\ge 0$. Thus, $V(w(t))\le V(w(0))e^{-2\delta t}$. Concluding, we have that
\[
\norm{w(t)} \leq \gamma \norm{w(0)}e^{-\delta t}, \quad \gamma=\sqrt{\frac{\theta}{\epsilon}}.
\]
\end{proof}
Note that using the arguments in the proof of~\cite[Theorem~$5.1.3$]{curtain1995introduction}, the above result holds for weak/mild solutions where the initial condition need only satisfy $w_0 \in \lt$.

To test the conditions of Theorem~\ref{thm:analysis}, the variables are the coefficients of the polynomials $M$, $K_1$ and $K_2$. The coefficients of $\hat M$, $\hat K_1$ and $\hat K_2$ are then linear combinations of these variables. Finally, the constraints $\in \Xi_{d_1,d_2,\epsilon}$ are LMI constraints, as discussed in Section~\ref{sec:posop}. Constructing the matrices which map these coefficients can be automated using SOSTOOLS or DelayTOOLs. The algorithm used can be adapted from the algorithm presented for output feedback controller in Section~\ref{sec:LMI_conditions}. Application of the conditions of Theorem~\ref{thm:analysis} to several numerical examples can be found in Section~\ref{sec:num_results}.


\section{Inversion and State Transformation}\label{sec:operators}
As discussed in Section~\ref{sec:framework}, for controller synthesis, we will use a state variable transformation $z=\mathcal{P}^{-1}w$ so that $\ip{\mathcal{Z}h}{P\mathcal{Z}(\mathcal{P}^{-1}w)}=\ip{h}{w}$.
Define $\mathcal{P}=\mcl{X}_{\{M,K_1,K_2\}}$, where $\mcl{X}_{\{M,K_1,K_2\}}$ is as defined in~\eqref{eqn:X}. Then $\mathcal{P}$ has the form
\begin{align*}
(\mathcal{P}z)(x)\hspace{-1mm}=\hspace{-1mm}&M(x)z(x) \hspace{-1mm}+ \hspace{-1mm} \int_0^x \hspace{-1mm} K_1(x,\xi) z(\xi) d \xi \hspace{-1mm} + \hspace{-1mm} \int_x^1 \hspace{-1mm} K_2(x,\xi) z(\xi) d \xi,
\end{align*}
where if $\{M, K_1, K_2\}\in \Xi_{\{d_1,d_2,\epsilon\}}$, the operator is coercive with $\ip{w}{\pop w}\ge \epsilon\norm{w}^2$. Operators of this type are a combination of a multiplier operator and two integral operators. Furthermore, since $K_1$ and $K_2$ are polynomials, there exist polynomials $F_i$ and $G_i$ such that $K_1(x,\xi)=F_1(x)^T G_1(\xi)$ and $K_2(x,\xi)=F_2(x)^T G_2(\xi)$. This implies that the two integral operators can be combined into a single integral of the form $\int_0^1K(x, \xi)z(\xi)d\xi$ where $K$ is a kernel of the semiseparable type. That is, there exist functions $F_i$ and $G_i$ such that
\[
K(x,\xi)=\begin{cases} F_1(x)^T G_1(\xi),& x \ge \xi\\
F_2(x)^T G_2(\xi),& \text{otherwise}\end{cases}.
\]
Integral operators with semiseparable kernels are used to represent the input-output map of well-posed Linear Time-Varying (LTV) systems, as explored in~\cite[Section~\Rmnum{1}.$4$, Theorem~$4.1$]{gohberg1984time}. These operators have certain properties which make them well-suited for use in Lyapunov functions. Specifically, they are not trace-class, which means that their eigenvalues may not be summable. Moreover, as discussed in~\cite[Section~\Rmnum{2}.$2$]{gohberg1984time}, since $M(x) \geq \epsilon>0$, $\mathcal{P}^{-1}$ is a bounded linear operator and can be calculated explicitly, as in the following theorem, which is adapted from~\cite[Section~\Rmnum{2}.$3$, Theorem~$3.1$]{gohberg1984time}.
\begin{theorem}\label{thm:invop}
Suppose that $\{M,K_1,K_2\} \in \Xi_{\{d_1,d_2,\epsilon\}}$ for some $d_1,d_2, \epsilon>0$ with $K_1(x,\xi)=F(x)^TG(\xi)$ and $K_2(x,\xi)=G(x)^T F(\xi)$. Let $\pop \in \mcl{L}(\lt)$ be defined as $\pop=\mcl{X}_{\{M,K_1,K_2\}}$, where $\mcl{X}_{\{M,K_1,K_2\}}$ is as defined in~\eqref{eqn:X}. Define
\begin{align*}
B(x)&=\bmat{G(x) \\ F(x)}, \quad C(x)=\bmat{F(x)^T & -G(x)^T},\\
 H&=\left[N_1+N_2U(1) \right]^{-1}N_2 U(1)\\
N_1&=\bmat{I & 0 \\ 0 & 0}, \quad N_2=\bmat{0 & 0 \\ 0 & I},
\end{align*}
and $U(x)=\lim_{n\rightarrow \infty}U_n(x)$, where
\begin{align}
&\label{eqn:inv:fundamental_matrix}
U_{n+1}(x)=I-\igzx B(\xi)M(\xi)^{-1}C(\xi)U_n(\xi)d\xi,
\end{align} and $U_1=I$.
Then, the inverse of the operator $\pop$ is given by
\begin{align*}
\left(\pinv w \right)(x) =& \underbar{M}(x)w(x) + \igzx \underbar{K}_1(x,\xi)w(\xi)d\xi  \\
& \qquad \qquad \qquad \qquad \qquad  +\igxo \underbar{K}_2(x,\xi)w(\xi)d\xi,\\
\underbar{M}(x)=&M(x)^{-1},\\
\underbar{K}_1(x,\xi)=&M(x)^{-1}C(x)U(x)(H-I)U(\xi)^{-1}B(\xi)M(\xi)^{-1},\\
\underbar{K}_2(x,\xi)=&M(x)^{-1}C(x)U(x)HU(\xi)^{-1}B(\xi)M(\xi)^{-1}.
\end{align*} \end{theorem}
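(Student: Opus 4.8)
The plan is to recast the integral equation $\pop z = w$ as a two-point boundary value problem for a finite-dimensional auxiliary state, solve that problem explicitly with the transition matrix $U$, and then read off the resolvent kernels $\underbar{M}$, $\underbar{K}_1$, $\underbar{K}_2$. Because $M(x)\geq \epsilon>0$, the multiplier $M(x)^{-1}$ exists, so I would first rewrite $\pop z = w$ using the factorizations $K_1=F^TG$, $K_2=G^TF$ as $M(x)z(x)=w(x)-F(x)^T\int_0^x G(\xi)z(\xi)d\xi-G(x)^T\int_x^1 F(\xi)z(\xi)d\xi$. Introducing the states $p(x)=\int_0^x G(\xi)z(\xi)d\xi$ and $q(x)=-\int_x^1 F(\xi)z(\xi)d\xi$ and collecting them into a vector $y$, the equation becomes $z=M^{-1}(w-Cy)$ with $C$ exactly the matrix defined in the statement, while the endpoint values $p(0)=0$ and $q(1)=0$ translate into $N_1 y(0)=0$ and $N_2 y(1)=0$.

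Differentiating $p$ and $q$ and eliminating $z$ via $z=M^{-1}(w-Cy)$ produces the linear system $y'(x)=-B(x)M(x)^{-1}C(x)y(x)+B(x)M(x)^{-1}w(x)$, with $B$ as defined. I would check that the Picard iteration \eqref{eqn:inv:fundamental_matrix} converges uniformly to the transition matrix $U$ of the homogeneous part: since $F,G$ are polynomials and $M^{-1}$ is continuous on the compact interval $[0,1]$, the kernel $BM^{-1}C$ is bounded, and the standard Neumann/Peano series estimate gives uniform convergence together with $U'=-BM^{-1}CU$, $U(0)=I$, and invertibility of $U(x)$ for every $x$. Variation of parameters then yields $y(x)=U(x)\left[y(0)+\int_0^x U(\xi)^{-1}B(\xi)M(\xi)^{-1}w(\xi)d\xi\right]$.

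Next I would determine the unknown $y(0)$ from the two boundary conditions. Writing $r:=\int_0^1 U(\xi)^{-1}B(\xi)M(\xi)^{-1}w(\xi)d\xi$, the conditions $N_1 y(0)=0$ and $N_2 U(1)[y(0)+r]=0$ add to $(N_1+N_2 U(1))y(0)=-N_2 U(1)r$, so $y(0)=-Hr$ with $H$ precisely as in the statement. Substituting this back into the variation-of-parameters formula, splitting the integral in $r$ at the point $x$, and inserting the result into $z=M^{-1}(w-Cy)$ gives $\underbar{M}=M^{-1}$ together with the kernels $\underbar{K}_1=M^{-1}CU(H-I)U^{-1}BM^{-1}$ on $\{x\geq\xi\}$ and $\underbar{K}_2=M^{-1}CUHU^{-1}BM^{-1}$ on $\{x<\xi\}$. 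This is essentially bookkeeping; the only care needed is tracking the sign in $C$ and the definition of $q$, and the $(H-I)$ versus $H$ factors that arise from the $\int_0^x$ and $\int_x^1$ pieces.

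The main obstacle is showing $N_1+N_2 U(1)$ is invertible, since otherwise $H$ and the whole construction are undefined. I would deduce this from the coercivity established in Section~\ref{sec:operators}: $\ip{w}{\pop w}\geq \epsilon\norm{w}^2$ makes $\pop$ injective. If $(N_1+N_2 U(1))\zeta=0$, then, because $N_1$ and $N_2$ project onto complementary coordinate blocks, the equation splits into $N_1\zeta=0$ and $N_2 U(1)\zeta=0$; hence $y(x):=U(x)\zeta$ solves the homogeneous boundary value problem and $z:=-M^{-1}Cy$ satisfies $\pop z=0$. Injectivity forces $z=0$, so $Cy=0$, so $y'=-BM^{-1}Cy=0$; the boundary conditions then give $y\equiv 0$ and $\zeta=y(0)=0$, proving invertibility. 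Alternatively, since $\pop$ is exactly a coercive semiseparable operator, one may match $B,C,U,H$ to the canonical data of \cite[Section~\Rmnum{2}.$3$, Theorem~$3.1$]{gohberg1984time} and invoke that theorem directly, its hypotheses reducing to the invertibility already guaranteed by coercivity.
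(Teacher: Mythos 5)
Your proposal is correct, but it is worth noting that the paper does not actually prove Theorem~\ref{thm:invop}: it adapts the statement from \cite[Section~\Rmnum{2}.$3$, Theorem~$3.1$]{gohberg1984time} and, in the remark that follows, only justifies the well-posedness of the construction (existence and nonsingularity of $U$, hence well-definedness of $H$). What you have written is a self-contained reconstruction of the realization-theoretic proof underlying that citation: converting $\pop z=w$ into the two-point boundary value problem $y'=-BM^{-1}Cy+BM^{-1}w$, $N_1y(0)=0$, $N_2y(1)=0$, solving by variation of parameters, and splitting $\int_0^1 U^{-1}BM^{-1}w$ at $x$ to produce the $(H-I)$ and $H$ factors in $\underbar{K}_1$ and $\underbar{K}_2$. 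All of the bookkeeping checks out, including the observation that summing $N_1y(0)=0$ and $N_2U(1)(y(0)+r)=0$ loses no information because $N_1$ and $N_2$ occupy complementary row blocks. Your argument for the invertibility of $N_1+N_2U(1)$ via injectivity of the coercive operator $\pop$ is a genuine addition: the paper's remark that ``$U(x)$ is non-singular on $[0,1]$, [so] the matrix $H$ is well defined'' is not by itself sufficient, since nonsingularity of $U(1)$ does not imply nonsingularity of $N_1+N_2U(1)$; what is really needed is unique solvability of the homogeneous boundary value problem, which is exactly what your coercivity argument supplies (and what the hypotheses of the cited Gohberg--Kaashoek theorem encode). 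In short, your route buys a self-contained proof and closes a gap the paper delegates to the reference, at the cost of reproving a known inversion formula.
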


Note that since $M(\xi)\ge \epsilon$, $M(\xi)^{-1}$ is bounded and continuous and hence the matrix of rational functions  $B(\xi)M(\xi)^{-1}C(\xi)$ is bounded and continuous. Therefore, it follows from~\cite[Chapter~$3$]{daleckii2002stability} that the uniform limit $U(x)$ exists and is non-singular for $x \in [0,1]$. Since $U(x)$  is non-singular on $[0,1]$, the matrix $H$ is well defined. Therefore, by construction $\underbar{M}, \underbar{K}_1,\underbar{K}_2 \in C^\infty$. Furthermore, note that since $\pop$ satisfies $\epsilon \norm{w}^2 \le \ip{w}{\pop w}\le \theta \norm{w}^2$ for some $\theta >0$, then $1/\theta \norm{w}^2 \le \ip{w}{\pinv w}\le 1/\epsilon \norm{w}^2$.

Theorem~\ref{thm:invop} not only proves existence, but gives a practical method for constructing the state variable transformation $\mathcal{P}^{-1}$ for which $\ip{\mathcal{Z}h}{P\mathcal{Z}(\mathcal{P}^{-1}w)}=\ip{h}{w}$. Specifically, if we truncate the sequence $U_n$ and approximate $M(x)^{-1}$ by a Chebyshev series, then construction of the functions $\underbar M$, $\underbar K_1$ and $\underbar K_2$ is simply a matter of polynomial multiplication and integration, which can be performed in MATLAB or Mathematica. In practice, we have found that $U_n$ converges after only a few iterations. To illustrate, in Figure~\ref{fig:opinverse1} we have applied this approach to a given $\{M,K_1,K_2\} \in \Xi_{1,1,1}$ and plot $\|w-\pop {\pop}_{n+1}^{-1}w\|$ as a function of $n$ for the arbitrarily chosen function $w(x)=x(x-0.4)(x-1)$. Here $\pop_{n+1}^{-1}$ denotes the construction for $\pinv$ defined in Theorem~\ref{thm:invop} with $U(x)$ replaced by $U_{n+1}(x)$. In this case, $n=5$ yields an $L_2$ norm error of $\approx 10^{-5}$. In this example, we approximated $M(\xi)^{-1}$ using the first five terms of its Chebyshev series.

Finally, we emphasize that construction of $\pinv$ is not part of the optimization algorithm, but rather is performed after the algorithm has solved the controller synthesis problem (to be defined in the following section) and returned the polynomial variables $M$, $K_1$, and $K_2$.
\begin{figure}[t]
\centering
\includegraphics[scale=0.25]{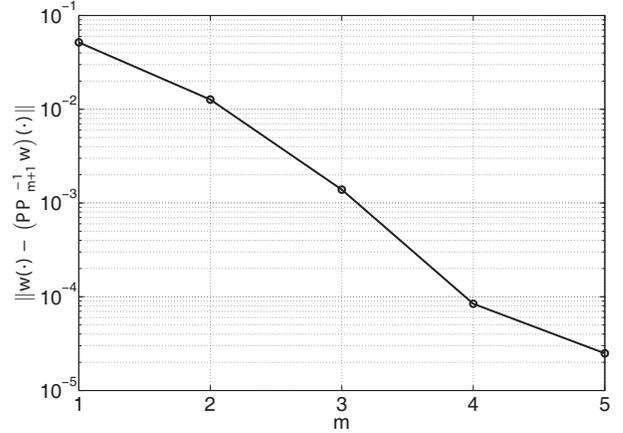}
\caption{$\|w-\mathcal{P}\mathcal{P}_{n+1}^{-1}w\|$ as a function of $n$.}
\label{fig:opinverse1}
\end{figure}


\section{State-Feedback Controller Synthesis}\label{sec:synthesis}
Our approach to controller synthesis is based on the use of a state variable transformation $y=\mathcal{P}^{-1}w$ which, by Theorem~\ref{thm:invop}, is guaranteed to exist for any $\mathcal{P}=\mcl{X}_{\{M,K_1,K_2\}}$ defined by $\{M, K_1, K_2\}\in \Xi_{\{d_1,d_2,\epsilon\}}$. Specifically, we will use the Lyapunov function
$V(w)=\ip{\mathcal{P}^{-1}w}{w}=\ip{y}{\mathcal{P}y}$. Ignoring the input for the moment and using the operator $\mathcal{A}$ defined in Equation~\eqref{eqn:exist:A}, the time-derivative of this function yields the dual stability condition
\[
\dot V(w)=2\ip{\mathcal{P}^{-1}w}{\mathcal{A} w}=2\ip{y}{\mathcal{A} \mathcal{P}y}\le 0,
\]
which we must enforce for all $y \in L_2$. The critical point is that the operator $\mathcal{P}^{-1}$ does not appear explicitly in the stability condition. Rather its existence is only inferred from the constraint on $\mathcal{P}$ that $\{M, K_1, K_2\}\in \Xi_{\{d_1,d_2,\epsilon\}}$. The next step in our approach is to combine this dual stability condition with a variable substitution through the use of a controller of the form
\begin{align*}
u(t) &= Y_1 (\mathcal{P}^{-1}w)(1,t)+\igzo Y_2(x)(\mathcal{P}^{-1}w)(x,t)dx\\
&= R_1 w(1,t)+\igzo R_2(x)w(x,t)dx,
\end{align*}
wherein we have replaced the original controller gains $R_1$ and $R_2$ with the new variables $Y_1$ and $Y_2$. Once $Y_1$ and $Y_2$ are determined by the SOS solver, the actual gains $R_1$ and $R_2$ can be recovered by computing $\mathcal{P}^{-1}$ and applying the formula listed here.

Before giving the main theorem, we recall that the input enters the dynamics as
\begin{align}
&\label{eqn:synth:PDE_form} w_t(x,t)=a(x)w_{xx}(x,t)+b(x)w_x(x,t)+c(x)w(x,t), \\
&\label{eqn:synth:PDE_form_BC} w(0,t)=0, \qquad w_x(1,t)=u(t).
\end{align}
The goal, then, is to define conditions on $P$ (which defines $M$, $K_1$ and $K_2$) as well as on $Y_1$ and the polynomial $Y_2$ such that the closed-loop system is exponentially stable.

To simplify exposition, we now define the following linear map, $\Omega_c$, which relates functions $M$, $K_1$ and $K_2$ to an upper bound on the time-derivative of the Lyapunov function defined by these functions for the controller dynamics. Specifically, we say that
\begin{equation}\label{eqn:omega_c}
\{\hat M, \hat K_1, \hat K_2\}:=\Omega_c(M,K_1,K_2),
\end{equation}
if
\begin{align}
\hat{M}(x)=&\;\left(a_{xx}(x)-b_x(x) \right)M(x)+b(x)M_x(x) \notag \\
&\quad  +a(x)M_{xx}(x)+2c(x)M(x)-\frac{\pi^2}{2}\alpha \epsilon \notag \\
&\label{eqn:synth:Mhat}  \quad \quad  +a(x) \left[2\pfx \left[K_1(x,\xi)-K_2(x,\xi) \right] \right]_{\xi=x}, \\
\hat{K}_1(x,\xi)=&\;a(x)(D_1^2 K_{1})(x,\xi)+b(x)(D_1 K_{1})(x,\xi) \notag \\
& \quad \quad +a(\xi)(D_2^2 K_{1})(x,\xi)+b(\xi)(D_2 K_{1})(x,\xi) \notag \\
&\label{eqn:synth:K1hat}\quad \quad \quad \quad +\left(c(x)+c(\xi) \right)K_1(x,\xi),\\
\hat{K}_2(x,\xi)=&\label{eqn:synth:K2hat}\;\hat{K}_1(\xi,x).
\end{align}
\begin{theorem}\label{thm:synthesis}
Suppose that there exist scalars $\epsilon,\mu>0$, $d_1,d_2,\hat{d}_1,\hat{d}_2 \in \N$ and polynomials $M$, $K_1$ and $K_2$ such that
\[
\{M,K_1,K_2\} \in \Xi_{d_1,d_2,\epsilon} \quad \text{and}\quad  K_2(0,x)=0.
\]
Further suppose
\[
\{-\hat{M}-2\mu M,-\hat{K}_1-2\mu K_1,-\hat{K}_2-2\mu K_2\} \in \Xi_{\hat{d}_1,\hat{d}_2,0},
\]
where $\{\hat M, \hat K_1, \hat K_2\}=\Omega_c(M,K_1,K_2)$. Let
\begin{equation}\label{eqn:synth:Y}
Y_1<\frac{M_x(1)}{2}+\frac{a_x(1)-b(1)}{2a(1)}M(1),\quad Y_2(x)=(D_1 K_{1})(1,x).
\end{equation}
If the control input $u(t)$ is defined as
\begin{align}
u(t) =& Y_1 (\mathcal{P}^{-1}w)(1,t)+\igzo Y_2(x)(\mathcal{P}^{-1}w)(x,t)dx \notag\\
 =&\label{eqn:synth:control} R_1 w(1,t)+\igzo R_2(x)w(x,t)dx,
 \end{align}
where $\mathcal{P}^{-1}$ is as defined for $\mcl{P}=\mcl{X}_{\{M,K_1,K_2\}}$ in Theorem~\ref{thm:invop} and $\mcl{X}_{\{M,K_1,K_2\}}$ is as defined in~\eqref{eqn:X}, then there exists a scalar $\gamma>0$ such that for any initial condition $w(0) \in \mcl{D}$ (where $\mcl{D}$ is as in Equation~\eqref{eqn:D}) the solution $w$ of \eqref{eqn:synth:PDE_form}~-~\eqref{eqn:synth:PDE_form_BC} exists, belongs to $C^{1,2}((0,\infty),[0,1])$ and satisfies
\[
\norm{w(t)} \leq \gamma \norm{w(0)} e^{-\mu t}, \quad t > 0.
\]
\end{theorem}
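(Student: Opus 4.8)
The plan is to mirror the finite-dimensional dual argument sketched in Section~\ref{sec:framework}, working with the transformed state $y=\pinv w$ and the Lyapunov function $V(w)=\ip{\pinv w}{w}=\ip{y}{\pop y}$, where $\pop=\mcl{X}_{\{M,K_1,K_2\}}$. First I would establish that $V$ is a legitimate, norm-equivalent Lyapunov candidate: since $\{M,K_1,K_2\}\in\Xi_{d_1,d_2,\epsilon}$, Theorem~\ref{thm:jointpos} gives $\epsilon\norm{w}^2\le\ip{w}{\pop w}\le\theta\norm{w}^2$ for some $\theta>0$, and Theorem~\ref{thm:invop} guarantees that $\pinv$ exists, is bounded and self-adjoint, with $\frac1\theta\norm{w}^2\le V(w)\le\frac1\epsilon\norm{w}^2$. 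The constraint $K_2(0,x)=0$ ensures $(\pop y)(0)=M(0)y(0)$, so that the boundary condition $w(0,t)=0$ is equivalent to $y(0,t)=0$, which I will need for the Wirtinger step.

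Next I would differentiate along closed-loop solutions. Using self-adjointness of $\pinv$ and $w_t=\mcl{A}w$,
\[
\dot V(w)=2\ip{\pinv w}{w_t}=2\ip{y}{\mcl{A}\pop y},
\]
so the goal reduces to showing $2\ip{y}{\mcl{A}\pop y}\le-2\mu\ip{y}{\pop y}$ for all admissible $y$. The decisive observation is that $\pinv$ never appears explicitly here: it enters only through the coercivity constraint on $\pop$ and through the controller, which in the transformed variable is simply $u(t)=Y_1 y(1,t)+\igzo Y_2(x)y(x,t)dx$.

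The core of the proof---and the step I expect to be the main obstacle---is the integration-by-parts reduction of $2\ip{y}{\mcl{A}\pop y}$ into the form $\ip{y}{\mcl{X}_{\{\hat M,\hat K_1,\hat K_2\}}y}$ with $\{\hat M,\hat K_1,\hat K_2\}=\Omega_c(M,K_1,K_2)$, exactly as $\Omega_s$ did in the primal analysis; this is the dual analogue of Lemma~\ref{lem:appendix_1}. Writing $(\pop y)(x)=M(x)y(x)+\igzx K_1 y\,d\xi+\igxo K_2 y\,d\xi$, I would expand $\mcl{A}\pop y$ via the Leibniz rule, then integrate by parts twice in $x$ to transfer the $a(x)\partial_x^2$ and $b(x)\partial_x$ actions off $\pop y$. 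The multiplier part produces the $\hat M$ terms $(a_{xx}-b_x)M+bM_x+aM_{xx}+2cM$; the diagonal $\xi=x$ contributions of the two kernels produce the $a(x)[2\partial_x(K_1-K_2)]_{\xi=x}$ term; and the off-diagonal kernel terms produce $\hat K_1,\hat K_2$. The delicate accounting is in the boundary terms at $x=0$ and $x=1$. The terms at $x=0$ vanish because $y(0)=0$ and $K_2(0,x)=0$. At $x=1$ I would substitute the natural boundary condition $(\pop y)_x(1)=w_x(1,t)=u(t)=Y_1 y(1)+\igzo Y_2 y\,d\xi$; the choice $Y_2(x)=(D_1K_1)(1,x)$ is precisely what cancels the residual integral boundary term, while the strict inequality $Y_1<\frac{M_x(1)}{2}+\frac{a_x(1)-b(1)}{2a(1)}M(1)$ forces the remaining coefficient of $y(1)^2$ to be nonpositive, so that this boundary term can be dropped to yield an inequality rather than an equality. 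Finally, invoking $a(x)\ge\alpha$ together with the Wirtinger inequality on $y$ (legitimate since $y(0)=0$) accounts for the $-\frac{\pi^2}{2}\alpha\epsilon$ term in $\hat M$, giving $\dot V(w)\le\ip{y}{\mcl{X}_{\{\hat M,\hat K_1,\hat K_2\}}y}$.

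With this bound in hand the conclusion is immediate: since $\{-\hat M-2\mu M,-\hat K_1-2\mu K_1,-\hat K_2-2\mu K_2\}\in\Xi_{\hat d_1,\hat d_2,0}$, Theorem~\ref{thm:jointpos} gives $\mcl{X}_{\{\hat M,\hat K_1,\hat K_2\}}\le-2\mu\pop$, whence $\dot V(w)\le-2\mu\ip{y}{\pop y}=-2\mu V(w)$. The differential inequality $\dot V\le-2\mu V$ integrates to $V(w(t))\le V(w(0))e^{-2\mu t}$, and the norm equivalence $\frac1\theta\norm{w}^2\le V(w)\le\frac1\epsilon\norm{w}^2$ converts this into $\norm{w(t)}\le\gamma\norm{w(0)}e^{-\mu t}$ with $\gamma=\sqrt{\theta/\epsilon}$. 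Existence and regularity of the classical solution in $C^{1,2}((0,\infty),[0,1])$ for $w(0)\in\mcl{D}$ then follow from the fixed-point and extension argument recalled in Section~\ref{proset}, once exponential decay has been established.
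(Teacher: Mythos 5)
Your proposal is correct and follows essentially the same route as the paper: the dual Lyapunov function $V(w)=\ip{\pinv w}{w}$ with $y=\pinv w$, the integration-by-parts/Wirtinger reduction to $\ip{y}{\mcl{X}_{\{\hat M,\hat K_1,\hat K_2\}}y}$ plus boundary terms (the paper's Lemma~\ref{lem:control}), the substitution of $w_x(1,t)=u(t)$ through the relation $(\pop y)_x(1)=M_x(1)y(1)+M(1)y_x(1)+\int_0^1(D_1K_1)(1,x)y(x)dx$ so that $Y_2=(D_1K_1)(1,x)$ cancels the integral term and the bound on $Y_1$ makes the $y(1)^2$ coefficient negative, and the final Gr\"onwall-plus-norm-equivalence step with $\gamma=\sqrt{\theta/\epsilon}$. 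No gaps of substance.
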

\begin{proof}
We start the proof by observing that since $\{M,K_1,K_2\}\in \Xi_{d_1,d_2,\epsilon}$, as per Theorem~\ref{thm:jointpos}, these polynomials define a positive operator $\pop=\mcl{X}_{\{M,K_1,K_2\}}$ such that $\epsilon \norm{w}^2 \le \ip{w}{\pop w}\le \theta \norm{w}^2$ for some $\theta>0$. Furthermore, by Theorem~\ref{thm:invop}, there exist bounded and continuously differentiable functions $\underbar M$, $\underbar K_1$ and $\underbar K_2$ such that $\pinv=\mcl{X}_{\{\underbar M,\underbar K_1, \underbar K_2\}}$ satisfying $1/\theta \norm{w}^2 \le \ip{w}{\pinv w} \le 1/\epsilon \norm{w}^2$. We now propose the Lyapunov function
\begin{align*}
V(w)&=\ip{\mathcal{P}^{-1} w}{w}=\ip{\mathcal{P}^{-1}w}{\mathcal{P}\mathcal{P}^{-1}w}\\
&=\int_0^1 (\mathcal{P}^{-1}w)(x)M(x)(\mathcal{P}^{-1}w)(x) dx \\
& \qquad + \int_0^1 \int_0^x (\mathcal{P}^{-1}w)(x) K_1(x,\xi) (\mathcal{P}^{-1}w)(\xi) d \xi dx \notag \\
&\qquad \qquad   + \int_0^1 \int_x^1 (\mathcal{P}^{-1}w)(x)K_2(x,\xi) (\mathcal{P}^{-1}w)(\xi) d \xi dx.
\end{align*}
Let $y=\pinv w$. Note that if $w \in H^2(0,1)$, then $y=\pinv w \in H^2(0,1)$.
Now, since $1/\theta \norm{w}^2 \le \ip{w}{\pinv w} \le 1/\epsilon \norm{w}^2$, we have that the Lyapunov function is upper and lower bounded. Now suppose that
\begin{align*}
u(t) &= Y_1 (\mathcal{P}^{-1}w)(1,t)+\igzo Y_2(x)(\mathcal{P}^{-1}w)(x,t)dx\\
&= R_1 w(1,t)+\igzo R_2(x)w(x,t)dx.
\end{align*}
Since $\underbar{M},\underbar{K}_1,\underbar{K}_2 \in C^\infty(0,1)$ and $Y_2$ is polynomial, we have that $R_2 \in C^\infty(0,1)$. Therefore, as discussed in Section~\ref{proset}, the closed loop System~\eqref{eqn:synth:PDE_form}~-~\eqref{eqn:synth:PDE_form_BC} admits a solution $w \in H^2(0,1)$ which implies $y=\pinv w \in H^2(0,1)$.
Again, the calculation of the time derivative $\dot V$ and its reformulation is lengthy. It involves integration by parts, the Wirtinger inequality and the assumption $a(x) \geq \alpha$. This proof is in the appendix as Lemma~\ref{lem:control} which establishes that for any $w$ which satisfies Equations~\eqref{eqn:synth:PDE_form}~-~\eqref{eqn:synth:PDE_form_BC},
\begin{align*}
\dot V(w(t))\le& \ip{y(t)}{\mcl{X}_{\{\hat{M},\hat{K}_1,\hat{K}_2\}}y(t)} \\
&\qquad  +y(1,t)Ny(1,t) + 2y(1,t)a(1)M(1)y_x(1,t),
\end{align*}
where $N=a(1)M_x(1)+(b(1)-a_x(1))M(1)$. Now, since $\{-\hat{M}-2\mu M,-\hat{K}_1-2\mu K_1,-\hat{K}_2-2\mu K_2\} \in \Xi_{d_1,d_2,0}$, we have that $\mcl{X}_{\{\hat{M},\hat{K}_1,\hat{K}_2\}} \leq -2 \mu \pop$ and hence $\ip{y(t)}{\mcl{X}_{\{\hat{M},\hat{K}_1,\hat{K}_2\}}y(t)}\le -2 \mu \ip{y(t)}{\pop y(t)}=-2 \mu \ip{w(t)}{\pinv w(t)}$. Applying this to the inequality, we get
\begin{align*}
\dot V(w(t))\le& -2 \mu \ip{w(t)}{\pinv w(t)} \\
&\qquad+y(1,t)Ny(1,t) + 2y(1,t)a(1)M(1)y_x(1,t).
\end{align*}
A sufficient condition for stability, then, is that $2y(1)a(1)M(1)y_x(1)\le -y(1)N y(1)$. Unfortunately, however, our control input enters via $w_x(1)$ and not $y_x(1)$. To see the relationship between $w_x(1)$ and $y_x(1)$, we expand the former and then solve for the latter as follows
\begin{align}
w_x(1,t)=&M_x(1)y(1,t)+M(1)y_x(1,t) \notag \\
&\label{eqn:synth:inter2} \qquad +\igzo (D_1 K_{1})(1,x)y(x,t)dx,
\end{align}
where solving for $M(1)y_x(1)$ yields
\begin{align}
M(1)y_x(1,t)=&w_x(1,t)-M_x(1)y(1,t) \notag \\
&\label{eqn:synth:inter2}\qquad-\igzo (D_1 K_{1})(1,x)y(x,t)dx.
\end{align}
This implies that the Lyapunov function satisfies
\begin{align*}
\dot V(w(t))\le& -2\mu V(w(t)) + y(1,t)Ny(1,t) \\
&\quad+ 2y(1,t)a(1)\left(w_x(1,t)-M_x(1)y(1,t)\right)\\
&\quad \quad -2y(1,t)a(1)\igzo (D_1 K_{1})(1,x)y(x,t)dx.
\end{align*}
Now, examining the proposed controller, we obtain
\begin{align*}
w_x(1,t)=u(t)&=R_1 w(1,t)+\int_0^1 R_2(x) w(x,t)dx\\
&=R_1 (\pop y)(1,t)+\int_0^1 R_2(x) (\pop y)(x,t)dx\\
&=Y_1 y(1,t)+\int_0^1 Y_2(x) y(x,t)dx,
\end{align*}
which is expressed in the new optimization variables $Y_1$ and $Y_2$.
Now, plugging $w_x(1)=Y_1 y(1)+\int_0^1 Y_2(x) y(x)dx$ into the time-derivative of the Lyapunov function, we get
\begin{align}
\dot V(w(t))\le& -2\mu V(w(t)) \notag \\
&+y(1,t)^2(N+2 a(1) Y_1-2a(1)M_x(1)) \notag \\
&  + 2y(1,t)a(1)\igzo Y_2(x)y(x,t)dx \notag \\
&\label{eqn:LMI:synth}  - 2y(1,t)a(1)\igzo (D_1 K_{1})(1,x)y(x,t)dx.
\end{align}
By inspection, we see that the stability conditions are now
\[
N+2 a(1) Y_1-2a(1)M_x(1)< 0
\]
and $Y_2(x)=(D_1 K_{1})(1,x)$.
This then implies that $\dot{V}(w(t)) \leq  -2\mu V(w(t))$ for all $t \ge 0$ and hence $V(w(t))\le V(w(0))e^{-2\mu t}$. Since $\displaystyle \norm{w}^2 \le \theta V(w)$, we have
\begin{align*}
\norm{w(t)} \le \sqrt{\theta V(w(t))} \leq \sqrt{\theta V(w(0))} &e^{-\mu t}\\
&\le \sqrt{\theta/\epsilon} \norm{w(0)} e^{-\mu t}.
\end{align*}
\end{proof}
At this point, it is significant to note that given values for the variables $Y_1$, $Y_2$, $M$, $K_1$ and $K_2$, the controller gains $R_1$ and $R_2$ can be found by calculating $\underbar{M}$, $\underbar{K}_1$ and $\underbar{K}_2$ via Theorem ~\ref{thm:invop} and using the formula
\begin{align*}
&R_1 w(1)+\int_0^1 R_2(x) w(x)dx \\
&=Y_1 y(1)+\int_0^1 Y_2(x) y(x)dx\\
&=Y_1 (\pinv w)(1)+\int_0^1 Y_2(x) (\pinv w)(x)dx\\
&=Y_1 \underbar{M}(1) w(1)+\int_0^1 Y_1 \underbar K_1(1,x)w(x)dx \\
& \hspace{-1mm}+\hspace{-1mm}\int_0^1Y_2(x)\left( \int_0^x\underbar K_1(x,\xi)w(\xi) d\xi dx \hspace{-1mm} + \hspace{-1mm} \int_x^1 \underbar K_2(x,\xi)w(\xi) d\xi\right) dx\\
&=Y_1 \underbar{M}(1) w(1)+\int_0^1 Y_1 \underbar K_1(1,x)w(x)dx \\
&+\hspace{-1mm} \int_0^1\left( \int_x^1 Y_2(\xi)\underbar K_1(\xi,x) d\xi \hspace{-1mm} + \hspace{-1mm} \int_0^x Y_2(\xi)\underbar K_2(\xi,x)d\xi \right)w(x)  dx,
\end{align*}
so that
\begin{align}
R_1 =&\label{eqn:synth:R1} Y_1 \underbar{M}(1),\\
R_2(x) \hspace{-1mm} =&\label{eqn:synth:R2}Y_1 \underbar K_1(1,x) \hspace{-1mm} + \hspace{-1mm} \int_x^1  \hspace{-1mm} Y_2(\xi)\underbar K_1(\xi,x)d\xi \hspace{-1mm} + \hspace{-1mm} \int_0^x  \hspace{-1mm} Y_2(\xi)\underbar K_2(\xi,x) d\xi,
\end{align}
where we have used the identity
\[
\int_0^1 \hspace{-1mm} \int_x^1 \hspace{-1mm} f(x,\xi)d\xi dx=\int_0^1 \hspace{-1mm} \int_0^\xi \hspace{-1mm} f(x,\xi)dx d\xi=\int_0^1 \hspace{-1mm} \int_0^x \hspace{-1mm} f(\xi,x)d\xi dx,
\] and the fact that $\underbar K_1(x,\xi)=\underbar K_2(\xi,x)$.

\section{Observer Synthesis}\label{sec:obsynth}
In Section~\ref{sec:synthesis}, we described LMI conditions under which one can obtain controller gains $R_1$ and $R_2(x)$ such that input $u(t)=R_1w(1,t)+\igzo R_2(x)w(x,t)dx$ ensures exponentially stability of the closed-loop system. However, this form of controller requires measurement of the state $w(x,t)$ at every point $x \in [0,1]$ at all times. Implementation of such a controller is problematic as such a distributed measurement is unlikely to be readily available. A more common scenario is one in which we may only measure the value of $w(x,t)$ at discrete points in the domain. In particular, we assume that only a single measurement is available at the boundary of the domain, and in particular, at $v(t)=w(1,t)$. As discussed in Section~\ref{proset}, in this scenario, we seek to find an estimator/observer which will yield a real-time estimate of the state of the system at all points and which, if used in closed-loop, will ensure exponential stability of the closed-loop. Specifically, our observer is a dynamic system with input $v(t)=w(1,t)$ and output $\hat w(x,t)$, where $\hat w(x,t)$ is the estimate of the state at time $t$. We adopt the Luenberger observer framework discussed previously, which implies that the dynamics of the observer are given by
\begin{align}
&\wh_t(x,t)=a(x)\wh_{xx}(x,t)+b(x)\wh_x(x,t) \notag \\
&\label{eqn:obs:observer_form}\qquad \qquad \qquad+c(x)\wh(x,t)+L_1(x)\left(\hat{v}(t)-v(t) \right), \\
&\label{eqn:obs:observer_form_BC} \wh(0,t)=0, \qquad \wh_x(1,t)=u(t)+L_2\left(\hat{v}(t)-v(t) \right),
\end{align}
where $\hat{v}(t)=\wh(1,t)$ is the predicted output and the scalar $L_2$ and function $L_1(x)$ are gains which map error in this predicted output to the dynamics of the observer state. In the following theorem, we seek conditions on $L_1$ and $L_2$ which ensure that if $R_1$ and $R_2$ are as defined in Theorem~\ref{thm:synthesis} and the controller is defined as
\begin{equation}
u(t)=R_1\wh(1,t)+\igzo R_2(x)\wh(x,t)dx,\label{eqn:obs:control_form}
\end{equation}
then Equations~\eqref{eqn:obs:observer_form}~-~\eqref{eqn:obs:observer_form_BC} coupled with Equations~\eqref{eqn:synth:PDE_form}~-~\eqref{eqn:synth:PDE_form_BC} and Equation~\eqref{eqn:obs:control_form} define an exponentially stable system.

Our approach is based on the separation principle~\cite[Chapter~$5$]{curtain1995introduction},~\cite[Chapter~$5$]{krstic2008boundary}. Specifically, we consider the error dynamics of the PDE coupled with the observer dynamics in Equations~\eqref{eqn:obs:observer_form}~-~\eqref{eqn:obs:observer_form_BC}. That is, if we define the error as $e=\wh-w$, then this quantity satisfies
\begin{align}
& e_t(x,t)=a(x)e_{xx}(x,t)+b(x)e_x(x,t)+c(x)e(x,t) \notag \\
&\label{eqn:obs:error_form} \qquad \qquad \qquad+z_1(x,t),\\
&\label{eqn:obs:error_form_BC} e(0,t)=0, \qquad e_x(1,t)=z_2(t),
\end{align}
where the feedback signals $z_1$ and $z_2$ are defined as
\begin{equation}\label{eqn:obs:zvars}
z_1(x,t):=L_1(x)e(1,t)\quad \text{and} \quad z_2(t):=L_2e(1,t).
\end{equation}
The key point is that the error dynamics do not depend on the choice of controller gains $R_1$ and $R_2$. In the following theorem, this will allow us to choose observer gains $L_1$ and $L_2$ which stabilize the error dynamics. Then, in Theorem~\ref{thm:LMI} we will show that if the controller gains are chosen as per Theorem~\ref{thm:synthesis} and the observer gains are chosen as per Theorem~\ref{thm:observer}, then the coupled dynamics are stable in both the state and state estimate. Unlike for controller synthesis, the conditions for stabilization of the error dynamics are based on the use of a simple Lyapunov function $V(e)=\ip{e}{\pop e}$ where the operator $\pop=\mcl{X}_{\{M,K_1,K_2\}}$ is defined by some $\{M,K_1,K_2\} \in \Xi_{d_1,d_2,\epsilon}$.

The following theorem is motivated by the LMI approach as defined in Section~\ref{sec:framework}, wherein as before the variables $M$, $K_1$ and $K_2$ are defined by a positive definite matrix $P$ and the observer variables are scalar $T_2$ and polynomials $T_1$ and $T_3$ (defined by their vector of coefficients). Referring to the LMI motivation, these observer variables are similar to the matrix $T$ and the observer gains are then recovered as $L_2=\pinv T_2$ and $L_1=\pinv (T_1+T_3)$, which is similar to the LMI observer gain matrix $L=P^{-1}T$.
\begin{theorem}\label{thm:observer}
Suppose there exist scalars $\epsilon,\delta>0$, $d_1,d_2,\hat d_1, \hat d_2 \in \N$ and polynomials $M$, $K_1$ and $K_2$ such that
\[
\{M,K_1,K_2\} \in \Xi_{d_1,d_2,\epsilon},\quad \text{and}\quad  K_2(0,x)=0.
\]
Further suppose
\[
\{-\hat{M}-2\delta M,-\hat{K}_1-2\delta K_1,-\hat{K}_2-2\delta K_2\} \in \Xi_{\hat{d}_1,\hat{d}_2,0},
\]
where $\{\hat M, \hat K_1, \hat K_2\}:=\Omega_s(M,K_1,K_2)$.
Let $\underbar M$, $\underbar K_1$ and $\underbar K_2$ define $\mcl{X}_{\{M,K_1,K_2\}}^{-1}=\mcl{X}_{\{\underbar M, \underbar K_1, \underbar K_2\}}$ as in Theorem~\ref{thm:invop} and
\begin{align}
L_2 \coloneqq& \label{eqn:obs:gain:1} (a(1)M(1))^{-1}T_2,\\
L_1(x)\coloneqq&  \underbar M(x)(T_1(x)+T_3(x)) \notag \\
&\qquad  +\int_0^x\underbar K_1(x,\xi)(T_1(\xi)+T_3(\xi))d \xi \notag \\
&\label{eqn:obs:gain:2}  \qquad \qquad  +\int_x^1\underbar K_2(x,\xi)(T_1(\xi)+T_3(\xi))d \xi,
\end{align}
where
\begin{align}
T_1(x)&\label{eqn:obs:T1}=-0.5( (b(1)-a_x(1))K_1(1,x) -a(1)(D_1 K_{1})(1,x)),\\
T_2&\label{eqn:obs:T2}< - 0.5((b(1)-a_x(1))M(1)-a(1)M_x(1))\\
T_3(x)&\label{eqn:obs:T3}=-L_2a(1)K_1(1,x),
\end{align} and $\mcl{X}_{\{M,K_1,K_2\}}$ and $\mcl{X}_{\{\underbar M, \underbar K_1, \underbar K_2\}}$ are as defined in~\eqref{eqn:X}.
Then for any $e$ which satisfies~\eqref{eqn:obs:error_form}~-~\eqref{eqn:obs:error_form_BC} with initial condition $e(0) \in \mcl{D}_e$ (See Equation~\eqref{eqn:D_e}), there exists a scalar $\gamma>0$
such that
\[
\norm{e(t)} \leq \gamma \norm{e(0)}e^{-\delta t}, \quad t>0.
\]
\end{theorem}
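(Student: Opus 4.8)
The plan is to mirror the stability argument of Theorem~\ref{thm:analysis}, but with the \emph{primal} Lyapunov function $V(e)=\ip{e}{\mcl{P}e}$ (where $\mcl{P}=\mcl{X}_{\{M,K_1,K_2\}}$) applied to the error dynamics~\eqref{eqn:obs:error_form}~-~\eqref{eqn:obs:error_form_BC} rather than to the autonomous PDE. Since $\{M,K_1,K_2\}\in\Xi_{d_1,d_2,\epsilon}$, Theorem~\ref{thm:jointpos} gives $\epsilon\norm{e}^2\le V(e)\le\theta\norm{e}^2$ for some $\theta>0$, and $\mcl{P}$ is self-adjoint, so along solutions
\[
\dot V(e)=2\ip{\mcl{P}e}{e_t}=2\ip{\mcl{P}e}{\mcl{A}e}+2\ip{\mcl{P}e}{z_1},
\]
where $\mcl{A}$ is the operator in~\eqref{eqn:exist:A} and $z_1(x,t)=L_1(x)e(1,t)$.

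First I would compute $2\ip{\mcl{P}e}{\mcl{A}e}$ by repeating the integration-by-parts and Wirtinger bookkeeping that underlies Lemma~\ref{lem:appendix_1} (and produces the map $\Omega_s$), but \emph{without} discarding the boundary contributions at $x=1$, since here $e_x(1)\ne 0$. This yields $2\ip{\mcl{P}e}{\mcl{A}e}\le\ip{e}{\mcl{X}_{\{\hat M,\hat K_1,\hat K_2\}}e}+\mcl{B}$, where $\mcl{B}$ collects the surviving boundary terms at $x=1$: an $e(1)^2$ term with coefficient $(b(1)-a_x(1))M(1)-a(1)M_x(1)$; a bilinear term whose kernel is $(b(1)-a_x(1))K_1(1,x)-a(1)(D_1K_1)(1,x)$; and the terms $2e(1)a(1)M(1)e_x(1)$ and $2e(1)a(1)\igzo K_1(1,x)e(x)dx\,e_x(1)$ carrying the unknown boundary slope. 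Substituting $e_x(1)=L_2e(1)=(a(1)M(1))^{-1}T_2\,e(1)$ converts the first of these into $2T_2\,e(1)^2$ and the second into a bilinear term proportional to $L_2a(1)\,e(1)\igzo K_1(1,x)e(x)dx$. The role of the hypothesis $K_2(0,x)=0$ is the analogous bookkeeping at $x=0$, ensuring no boundary residue appears there.

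The decisive step is the injection term. Using self-adjointness of $\mcl{P}$ together with the defining relation $\mcl{P}L_1=T_1+T_3$ (equivalently $L_1=\mcl{P}^{-1}(T_1+T_3)$ as in~\eqref{eqn:obs:gain:2}),
\[
2\ip{\mcl{P}e}{z_1}=2e(1)\ip{e}{\mcl{P}L_1}=2e(1)\igzo e(x)\big(T_1(x)+T_3(x)\big)dx,
\]
so the troublesome $\mcl{P}^{-1}$ cancels, exactly as the transformation $y=\mcl{P}^{-1}w$ convexifies the controller synthesis in Theorem~\ref{thm:synthesis}. I would then check that the prescribed gains are engineered precisely to annihilate $\mcl{B}$: the choice $T_1(x)=-\tfrac12[(b(1)-a_x(1))K_1(1,x)-a(1)(D_1K_1)(1,x)]$ cancels the bilinear term inherited from $\Omega_s$; the choice $T_3(x)=-L_2a(1)K_1(1,x)$ cancels the bilinear term generated by substituting $e_x(1)=L_2e(1)$; and the strict inequality $T_2<-\tfrac12[(b(1)-a_x(1))M(1)-a(1)M_x(1)]$ makes the total $e(1)^2$ coefficient $(b(1)-a_x(1))M(1)-a(1)M_x(1)+2T_2$ strictly negative. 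After these cancellations, $\dot V(e)\le\ip{e}{\mcl{X}_{\{\hat M,\hat K_1,\hat K_2\}}e}$.

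Finally, since $\{-\hat M-2\delta M,-\hat K_1-2\delta K_1,-\hat K_2-2\delta K_2\}\in\Xi_{\hat d_1,\hat d_2,0}$, Theorem~\ref{thm:jointpos} gives $\mcl{X}_{\{\hat M,\hat K_1,\hat K_2\}}\le-2\delta\mcl{P}$, whence $\dot V(e)\le-2\delta\ip{e}{\mcl{P}e}=-2\delta V(e)$. Integrating yields $V(e(t))\le V(e(0))e^{-2\delta t}$, and the two-sided bound on $V$ gives $\norm{e(t)}\le\sqrt{\theta/\epsilon}\,\norm{e(0)}e^{-\delta t}$, i.e. $\gamma=\sqrt{\theta/\epsilon}$. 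I expect the main obstacle to be the boundary-term bookkeeping in the first two steps: one must reproduce the $\Omega_s$ computation retaining every boundary contribution at $x=1$, correctly track the factors of two and the signs produced when $e_x(1)$ is replaced by $L_2e(1)$, and confirm term-by-term that $T_1$, $T_2$ and $T_3$ match these contributions. Everything downstream is then a verbatim repetition of the decay argument in Theorem~\ref{thm:analysis}.
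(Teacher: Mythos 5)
Your proposal is correct and follows essentially the same route as the paper's proof: the same Lyapunov function $V(e)=\ip{e}{\mcl{P}e}$, the same reuse of the $\Omega_s$ integration-by-parts computation with the boundary residues at $x=1$ retained, the same substitutions $T_2=L_2a(1)M(1)$, $\mcl{P}L_1=T_1+T_3$ with $T_3=-L_2a(1)K_1(1,\cdot)$ to cancel the terms generated by $e_x(1)=L_2e(1)$, and the same final decay argument. The one point deserving care is exactly the one you flag — the factors of two in matching $T_1$ against the inherited bilinear boundary term (the paper's own proof writes $T_1=-R_1$ while its theorem statement has $T_1=-0.5R_1$, so the bookkeeping there is worth verifying explicitly).
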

\begin{proof}
We start by observing that since $\{M,K_1,K_2\}\in \Xi_{d_1,d_2,\epsilon}$, as per Theorem~\ref{thm:jointpos}, these polynomials define a positive operator $\pop=\mcl{X}_{\{M,K_1,K_2\}}$ such that $\epsilon \norm{w}^2 \le \ip{w}{\pop w}\le \theta \norm{w}^2$ for some $\theta>0$. Furthermore, by Theorem~\ref{thm:invop}, there exist bounded and continuously differentiable functions $\underbar M$, $\underbar K_1$ and $\underbar K_2$ which define the positive operator $\pinv=\mcl{X}_{\{\underbar M, \underbar K_1, \underbar K_2\}}$. Therefore, since $L_1 \in C^\infty(0,1)$, we have that the closed-loop error dynamics~\eqref{eqn:obs:error_form}~-~\eqref{eqn:obs:error_form_BC} admit a local in time solution $e$ for any $e_0 \in \mcl{D}_e$.

We now propose the Lyapunov function
\begin{align*}
V(e)=&\ip{e}{\mathcal{P} e}\\
=&\int_0^1 e(x)M(x)e(x) dx  \\
&\qquad + \int_0^1 \int_0^x e(x) K_1(x,\xi) e(\xi) d \xi dx \\
&\qquad \qquad+ \int_0^1 \int_x^1 e(x)K_2(x,\xi)e(\xi) d \xi dx.
\end{align*}
The derivative of this Lyapunov function is identical to the one in Theorem~\ref{thm:analysis} except for the presence of the terms \textbf{$z_1$} and \textbf{$z_2$} defined in~\eqref{eqn:obs:zvars}. Specifically, we have
\begin{align*}
\dot V(e)\le& \ip{e}{\mcl{X}_{\{\hat M,\hat K_1,\hat K_2\}}e}  \\
&\qquad +2 \ip{ \pop e}{z_1} +2 z_2 a(1) (\pop e)(1) \\
&\qquad \qquad+2\ip{e}{\mathcal{M}_{R_1}e(1)}  +e(1) R_2 e(1),
\end{align*}
where $R_1(x)= (b(1)-a_x(1))K_1(1,x) -a(1)(D_1 K_{1})(1,x)$ and $R_2=(b(1)-a_x(1))M(1)-a(1)M_x(1)$.
In the proof of Theorem~\ref{thm:analysis}, we had $z_1=0$ and $z_2=0$ and hence the stability condition was that $R_1=R_2=0$ and that $\mcl{X}_{\{\hat M,\hat K_1,\hat K_2\}}\leq - 2\delta \mathcal{P}$. For the observer, we similarly require $\mcl{X}_{\{\hat M,\hat K_1,\hat K_2\}}\leq - 2 \delta \mathcal{P}$. However, we now have the observer gains $z_1(x)=L_1(x)e(1)$ and $e_x(1)=z_2=L_2 e(1)$ which the algorithm can choose in order to cancel out $R_1$ and $R_2$. Unfortunately, however, these gains depend on $M$ and $K_1$ and the gains are currently bilinear with the operator variable $\pop$ (and the functions $M$ $K_1$, and $K_2$ which define it). Hence we would like to perform a variable substitution. This is complicated, however, by the fact that there are two observer gains - one at the boundary and one directly injected into the dynamics. Let us first examine the second gain $z_2=L_2 e(1)$ which appears in the term
\begin{align*}
&z_2 a(1) (\pop e)(1)=e(1)L_2a(1)(\pop e)(1)\\
&\qquad =e(1)\underbrace{L_2a(1)M(1)}_{T_2}e(1) \\
&\qquad \qquad \qquad +\int^1_0 e(1)L_2a(1)K_1(1,x)e(x)dx\\
&\qquad =e(1)T_2e(1) +\int^1_0 e(1)L_2a(1)K_1(1,x)e(x)dx,
\end{align*}
where we have made the variable substitution $T_2=L_2a(1)M(1)$ which implies $T_2$ is a scalar variable. The variable $L_2$ is thereby partially eliminated from the search. However, since $a(x)>0$ and $M(x)>0$, given $T_2$, the gain $L_2$ can later be recovered as $L_2=(a(1)M(1))^{-1}T_2$. Of course, this variable substitution has not \textit{completely} eliminated the original variable $L_2$. To completely eliminate $L_2$ will require assistance from the second gain $L_1$. To see how this is done, we examine the second term in which $\mathbf{z_1}$ appears
\begin{align*}
\ip{ \pop e}{z_1}& \hspace{-1mm} = \hspace{-1mm} \ip{  e}{\underbrace{\pop \mathcal{M}_{L_1}}_{\mathcal{M}_{T_1}+\mathcal{M}_{T_3}}e(1)}\hspace{-1mm} = \hspace{-1mm} \ip{e}{\mathcal{M}_{T_1}e(1)} \hspace{-1mm} + \hspace{-1mm} \ip{e}{\mathcal{M}_{T_3}e(1)}.
\end{align*}
Here we have defined a new variable $T_1(x)$ which is defined by $T_1(x):=M(x)L_1(x)+\int_0^xK_1(x,\xi)L_1(\xi)d\xi +\int_x^1K_2(x,\xi)L_1(\xi)d\xi - T_3(x)$ for which $\mathcal{M}_{T_1}c=\pop \mathcal{M}_{L_1}c-\mathcal{M}_{T_3}c$ for any $c \in \R$ where $T_3$ will be defined shortly. Furthermore, for any $T_3$, the map $L_1 \mapsto T_1$ is invertible with
\begin{align*}
L_1(x):=&\underbar M(x)(T_1(x)+T_3(x)) \\
&\qquad +\int_0^x \underbar{K}_1(x,\xi)(T_1(\xi)+T_3(\xi))d\xi \\
&\qquad \qquad +\int_x^1\underbar{K}_2(x,\xi)(T_1(\xi)+T_3(\xi))d\xi
\end{align*}
if $\pinv=\mcl{X}_{\{\underbar M,\underbar{K}_1,\underbar{K}_2\}}$. In this way, we eliminate the variable $L_1$ and replace it with $T_1$ and $T_3$.
The next step, then, is to choose $T_3$ so as to cancel the remaining term which contains $L_2$. This is done using $\ip{e}{\mathcal{M}_{T_3}e(1)}$, which we expand to get
\begin{align*}
\ip{e}{\mathcal{M}_{T_3}e(1)}=\int_0^1 e(x)T_3(x)e(1)dx,
\end{align*}
which we would like to use to eliminate $\int^1_0 e(1)L_2a(1)K_1(1,x)e(x)dx$. Clearly, then, the appropriate choice for $T_3$ is
\[
T_3(x)=-L_2a(1)K_1(1,x).
\]
Note that the dependence of $T_3$ on $L_2$ is admissible because $T_3$ is not a free variable and $L_2$ is computed directly from $T_2$. This means that once feasible values for $T_1$ and $T_2$ have been found, we then calculate $L_2$ from $T_2$, then use $L_2$ to calculate $T_3$ and then use $T_1$ and $T_3$ to calculate the gain $L_1$.

Concluding the proof, the time-derivative of the Lyapunov function becomes
\begin{align*}
\dot V(e)&\le \ip{e}{\mcl{X}_{\{\hat M,\hat K_1,\hat K_2\}}e} \\
&+2\ip{e}{\left(\mathcal{M}_{T_1}+\mathcal{M}_{R_1}\right)e(1)}+ e(1)\left(2 T_2+R_2\right)e(1).
\end{align*}
Therefore, if $T_1=-R_1$, $2 T_2+R_2< 0$ and $\{-\hat{M}-2\delta M,-\hat{K}_1-2\delta K_1,-\hat{K}_2-2\delta K_2\} \in \Xi_{\hat{d}_1,\hat{d}_2,0}$, we have that
\[
\dot V(e)\le -2\delta V(e),
\]
which, in a similar manner as Theorem~\ref{thm:analysis} establishes exponential stability of the error dynamics with decay rate $\delta$.
\end{proof}

\section{An LMI condition for Output-Feedback Stabilization}\label{sec:LMI_conditions}
In this section we briefly summarize the results of the paper by giving an LMI formulation of the output-feedback controller synthesis problem.

\begin{theorem}\label{thm:LMI}
Given $d_1,d_2,\hat d_1,\hat d_2 \in \N$ and $\epsilon, \delta ,\mu>0$, suppose that there exist polynomials $M$, $N$, $K_1$, $K_2$, $S_1$ and $S_2$ such that
\begin{align}
\{M,K_1,K_2\} &\label{eqn:LMI:cond1}\in \Xi_{d_1,d_2,\epsilon},\\
\{-\hat M - 2\mu M, -\hat K_1 - 2\mu K_1,-\hat K_2 - 2\mu K_2\}&\label{eqn:LMI:cond2}\in \Xi_{\hat d_1,\hat d_2,0},\\
K_2(0,x)&\label{eqn:LMI:cond3}=0,\\
\{N,S_1,S_2\} &\label{eqn:LMI:cond4}\in \Xi_{d_1,d_2,\epsilon},\\
\{-\hat N - 2\delta N, -\hat S_1 - 2\delta S_1,-\hat S_2 - 2\delta S_2\}&\label{eqn:LMI:cond5}\in \Xi_{\hat d_1,\hat d_2,0},\\
S_2(0,x)&\label{eqn:LMI:cond6}=0,
\end{align} where $\{\hat M,\hat K_1,\hat K_2\} =\Omega_c(M,K_1,K_2)$, $\{\hat N,\hat S_1,\hat S_2\} =\Omega_s(N,S_1,S_2)$ and $2\hat d_1$ and $2\hat d_2+1$ are the degrees of $M$, $N$ and $K_1$, $S_1$, respectively.

Then, there exist gains $R_1$, $R_2(x)$, $L_1(x)$ and $L_2$ such that if
\begin{equation}\label{eqn:LMI:control}
u(t)=R_1 \wh(1,t)+\igzo R_2(x)\wh(x,t)dx,
\end{equation} and $w$ satisfies Equations~\eqref{eqn:synth:PDE_form}~-~\eqref{eqn:synth:PDE_form_BC} and $\wh$ satisfies Equations~\eqref{eqn:obs:observer_form}~-~\eqref{eqn:obs:observer_form_BC} with a zero initial condition then
\[
\norm{w(t)} \leq \gamma \norm{w(0)} e^{-\kappa t},
\] for some $\gamma>0$ and any $\kappa$ satisfying $0<\kappa<\min\{\mu,\delta\}$.
\end{theorem}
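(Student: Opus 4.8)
The plan is to read the six hypotheses as two decoupled packages and then glue the resulting subsystems together through a separation/cascade argument. First I would note that \eqref{eqn:LMI:cond1}--\eqref{eqn:LMI:cond3} are verbatim the hypotheses of Theorem~\ref{thm:synthesis}; applying that theorem produces the gains $R_1,R_2$, the operator $\pop=\mcl{X}_{\{M,K_1,K_2\}}$, and the Lyapunov function $V_w(w)=\ip{\pinv w}{w}$ for the state-feedback closed loop, decaying at rate $\mu$. Likewise \eqref{eqn:LMI:cond4}--\eqref{eqn:LMI:cond6} are the hypotheses of Theorem~\ref{thm:observer} with $(N,S_1,S_2)$ in the role of $(M,K_1,K_2)$; applying it produces the observer gains $L_1,L_2$ and the Lyapunov function $V_e(e)=\ip{e}{\mcl{X}_{\{N,S_1,S_2\}}e}$ for the error dynamics, decaying at rate $\delta$. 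This step is what justifies the phrase ``there exist gains'' in the statement. I would then invoke the separation principle already used in Section~\ref{sec:obsynth}: with $e=\wh-w$ the error obeys \eqref{eqn:obs:error_form}--\eqref{eqn:obs:error_form_BC}, a system \emph{independent} of $R_1,R_2$, so Theorem~\ref{thm:observer} applies directly and gives $\norm{e(t)}\le\gamma_e\norm{e(0)}e^{-\delta t}$. Because $\wh$ starts from zero, $e(\cdot,0)=-w(\cdot,0)$ and hence $\norm{e(0)}=\norm{w(0)}$, which is what lets every later bound be phrased in terms of $\norm{w(0)}$.

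For the state itself I would substitute $\wh=w+e$ into the control law \eqref{eqn:LMI:control}, turning the $w$-subsystem into exactly the state-feedback closed loop of Theorem~\ref{thm:synthesis} but with an extra boundary perturbation $d(t):=R_1 e(1,t)+\igzo R_2(x)e(x,t)dx$ injected through $w_x(1,t)$. Repeating the $\dot V_w$ computation of Theorem~\ref{thm:synthesis} while retaining this term, and writing $y=\pinv w$, yields $\dot V_w\le -2\mu V_w -2c\,y(1,t)^2+2a(1)y(1,t)d(t)$, where $-2c<0$ is precisely the strict slack left over from the strict inequality on $Y_1$ in \eqref{eqn:synth:Y}. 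The decisive device is Young's inequality $2a(1)y(1,t)d(t)\le 2c\,y(1,t)^2+\tfrac{a(1)^2}{2c}d(t)^2$, which absorbs the indefinite point-evaluation term into the available slack and leaves $\dot V_w\le -2\mu V_w+\tfrac{a(1)^2}{2c}d(t)^2$.

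I would close with a comparison/variation-of-constants estimate. Since $d(t)^2$ is dominated by $e(1,t)^2$ and $\norm{e}^2$, both decaying at rate $2\delta$, integrating $\dot V_w\le -2\mu V_w+Ce^{-2\delta t}$ gives $V_w(t)\le e^{-2\mu t}V_w(0)+C\int_0^t e^{-2\mu(t-s)}e^{-2\delta s}ds$. This convolution decays like $e^{-2\min\{\mu,\delta\}t}$ when $\mu\neq\delta$ and like $t\,e^{-2\mu t}$ when $\mu=\delta$, so in every case it is bounded by $C_\kappa e^{-2\kappa t}$ for any $\kappa<\min\{\mu,\delta\}$ --- exactly the loss recorded in the statement. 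Combining with the decay of $\norm{e}$ and the identity $\norm{e(0)}=\norm{w(0)}$ then delivers $\norm{w(t)}\le\gamma\norm{w(0)}e^{-\kappa t}$.

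The main obstacle I anticipate is making the coupling term rigorous rather than the routine LMI bookkeeping. The perturbation $d(t)$ contains the point evaluation $e(1,t)$, which is unbounded in the $L_2$ norm, so the $L_2$ estimate of Theorem~\ref{thm:observer} does not by itself control it. Here I would lean on the analyticity of the parabolic semigroup generated by $\mcl{A}$ on $\mcl{D}_e$ to upgrade the $L_2$ decay of $e$ to exponential decay of $\norm{e}_{H^1}$ away from $t=0$, and thence of $e(1,t)$ via Agmon's inequality, the near-$t=0$ behaviour being integrable and harmless inside the convolution. Getting this regularity bound on $e(1,t)$ and the absorption of $y(1,t)^2$ cleanly stated is where the genuine work lies.
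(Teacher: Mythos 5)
Your high-level architecture --- apply Theorems~\ref{thm:synthesis} and~\ref{thm:observer} to produce the gains, note that $\wh(\cdot,0)=0$ gives $e(\cdot,0)=-w(\cdot,0)$, then glue the two certified subsystems --- matches the paper, but your gluing step takes a genuinely different route, and that difference is precisely the source of the obstacle you flag at the end. You decompose the coupled dynamics into $(w,e)$ and treat the plant as the state-feedback closed loop driven by the boundary perturbation $d(t)=R_1e(1,t)+\igzo R_2(x)e(x,t)dx$; to integrate $\dot V_w\le -2\mu V_w+Cd(t)^2$ you then need a pointwise-in-time bound on the trace $e(1,t)$, which pushes you into analytic-semigroup smoothing and Agmon's inequality. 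The paper decomposes into $(\wh,e)$ instead: the observer state $\wh$ obeys the state-feedback closed loop perturbed only through $e(1,t)$ (via $L_1(x)e(1,t)$ in the interior and $L_2e(1,t)$ at the boundary), and --- this is the decisive device --- the strict inequalities in the two designs leave explicit quadratic slacks, $-\beta_2\,\hat y(1,t)^2$ in $\dot V_c(\wh)$ from $Y_1$ being strictly below its bound in~\eqref{eqn:synth:Y}, and $-\beta_1\,e(1,t)^2$ in $\dot V_o(e)$ from $2T_2+R_2<0$. Forming $rV_o(e)+V_c(\wh)$ with $r$ large makes the resulting operator matrix in $(\hat y,\hat y(1),e(1))$ negative semidefinite by a Schur-complement argument and yields $\dot V\le -2\kappa V$ directly for any $\kappa<\min\{\mu,\delta\}$. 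No regularity theory and no convolution estimate are needed, because $e(1,t)^2$ always appears accompanied by its own negative coefficient.

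As written, your proposal therefore has a genuine gap at exactly the step you identify: the $L_2$ estimate of Theorem~\ref{thm:observer} does not control $e(1,t)$, and upgrading it requires showing that the closed-loop error generator (including the interior injection $L_1(x)e(1,t)$ and the perturbed boundary condition $e_x(1,t)=L_2e(1,t)$) is sectorial, transferring the Lyapunov-certified $L_2$ decay rate to a graph-norm decay for $t\ge 1$, and handling the singularity of $|e(1,t)|$ near $t=0$ with constants depending only on $\norm{e(0)}_{L_2}=\norm{w(0)}_{L_2}$. This is all plausible but is substantial machinery the paper never invokes. The repair inside your own framework is simple: do not treat $d(t)^2$ as an exogenous decaying signal; instead differentiate $V_w+rV_o$ and absorb $\tfrac{a(1)^2}{2c}d(t)^2\le\tfrac{a(1)^2}{c}\bigl(R_1^2e(1,t)^2+\norm{R_2}^2\norm{e(t)}^2\bigr)$ into $-r\beta_1e(1,t)^2$ and into the surplus $-2r(\delta-\kappa)V_o(e)\le -2r(\delta-\kappa)\epsilon\norm{e}^2$ for $r$ sufficiently large. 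This recovers the paper's conclusion without leaving the Lyapunov framework, and it also shows that the strict loss $\kappa<\min\{\mu,\delta\}$ comes from needing a finite weight $r$, not from the $\mu=\delta$ resonance in your convolution (though your accounting of that resonance is itself correct).
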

 \begin{proof}
If the conditions in~\eqref{eqn:LMI:cond1}~-~\eqref{eqn:LMI:cond3} are satisfied, then the polynomials $M$, $K_1$ and $K_2$ satisfy the constraints of Theorem~\ref{thm:synthesis}. Therefore, we may construct $R_1$ and $R_2(x)$ using~\eqref{eqn:synth:R1}~-~\eqref{eqn:synth:R2}. Similarly, if $N$, $S_1$ and $S_2$ satisfy~\eqref{eqn:LMI:cond4}~-~\eqref{eqn:LMI:cond6}, then the conditions of Theorem~\ref{thm:observer} are satisfied with $M=N$, $K_1=S_1$ and $K_2=S_2$. Thus, we can construct observer gains $L_1(x)$ and $L_2$ using~\eqref{eqn:obs:gain:1}~-~\eqref{eqn:obs:gain:2}.
 Now, let $\pop_c=\mcl{X}_{\{M,K_1,K_2\}}$, $\hat \pop_c=\mcl{X}_{\{\hat M,\hat K_1, \hat K_2\}}$, $\pop_o=\mcl{X}_{\{N,S_1,S_2\}}$ and $\hat \pop_o=\mcl{X}_{\{\hat N,\hat S_1,\hat S_2\}}$. Therefore, the theorem conditions imply that $\hat \pop_c \leq -2 \mu \pop_c$ and $\hat \pop_o \leq -2 \delta \pop_o$.

Using the proof of Theorem~\ref{thm:observer}, there exists a scalar $\beta_1>0$ such that
\begin{equation}\label{eqn:LMI:Vdot_obs}
\dot{V}_o(e) \leq -2 \delta V_o(e)-\beta_1 e(1)^2,
\end{equation} where $V_o(e)=\ip{e}{\pop_o e}$. Similarly, for the observer dynamics in~\eqref{eqn:obs:observer_form}~-~\eqref{eqn:obs:observer_form_BC} with the input~\eqref{eqn:LMI:control}, using the proof of Theorem~\ref{thm:synthesis} one can prove that there exists a scalar $\beta_2>0$ such that
 \begin{align}
\dot{V}_c(\wh)\leq& -2 \mu V_c(\wh)+2\ip{\hat y}{L_1 e(1)}+\hat y(1)(2a(1)L_2)e(1) \notag \\
&\label{eqn:LMI:Vdot_synth} -\beta_2 \hat y(1)^2,
\end{align} where $\hat y=\pinv_c \wh$ and $ V_c(\wh)=\ip{\wh}{\pinv_c \wh}=\ip{\hat y}{\pop_c \hat y}$.
 From~\eqref{eqn:LMI:Vdot_obs}~-~\eqref{eqn:LMI:Vdot_synth} we infer that for any $r>0$ we have
\begin{align}
&\label{eqn:LMI:inter}r\dot{V}_o(e)+\dot{V}_c(\wh)\leq -2r \delta V_o(e)+\left\langle \bmat{\hat y\\ \hat y(1) \\ e(1)},\mcl{U} \bmat{\hat y\\ \hat y(1) \\ e(1)} \right\rangle,
\end{align} where
\[\mcl{U}=\bmat{-2\mu \pop_c & 0 & L_1 \\ \star & -\beta_2 & a(1)L_2 \\ \star & \star & -r\beta_1},\] and the inner product is defined on $\lt \times \R \times \R$. Now, for any $0<\kappa<\min\{\delta,\mu\}$, if we choose $r>0$ sufficiently large, it follows that $\mcl{U}\leq \diag(-2\kappa \pop_c,0,0)$. Thus, from~\eqref{eqn:LMI:inter} we get that
\begin{align*}
r\dot{V}_o(e)+\dot{V}_c(\wh)\leq& -2r \delta V_o(e)-2 \kappa V_c(\wh) \\
\leq & -2\kappa\left( r  V_o(e)+ V_c(\wh) \right).
\end{align*} Therefore defining $V(\wh,e)=r  V_o(e)+ V_c(\wh)$, we get that
\[\dot{V}(\wh,e)\leq -2 \kappa V(\wh,e).\] Integrating in time,
\begin{equation}\label{eqn:LMI:inter}
r\ip{e}{\pop_o e}+\ip{\hat w}{\pinv_c \wh} \leq e^{-2 \kappa t}r\ip{w(0)}{\pop_o w(0)},
\end{equation} where we have used the fact that $\wh(0)=0$ and thus $e(0)=-w(0)$. Now, as discussed, there exist scalars $\theta_1,\theta_2>0$ such that
\[
\epsilon \norm{e}^2 \leq \ip{e}{\pop_o e} \leq \theta_1 \norm{e}^2,\quad
\frac{1}{\theta_2} \norm{\wh}^2 \leq \ip{\wh}{\pinv_c \wh} \leq \frac{1}{\epsilon} \norm{\wh}^2.
\]
Therefore, using~\eqref{eqn:LMI:inter} we get
\[\norm{e}^2 + \norm{\wh}^2 \leq \frac{r \theta_1}{\sigma}\norm{w(0)}^2 e^{-2 \kappa t},\]
where $\sigma=\min(r \epsilon,1/\theta_2)$. Thus,
\[
\norm{e},\norm{\wh} \leq \sqrt{\frac{r \theta_1}{\sigma}}\norm{w(0)} e^{- \kappa t}.\]
Finally, using the fact that $\norm{w}\leq \norm{\wh}+\norm{e}$ produces
\[
\norm{w(t)} \leq  2\sqrt{\frac{r \theta_1}{\sigma}}\norm{w(0)} e^{- \kappa t}.
\]

\end{proof}

The variables in Theorem~\ref{thm:LMI} are polynomials which are parameterized by vectors of coefficients associated to a predetermined monomial basis. There are two types of constraints on these variables: equality constraints between polynomials; and constraints of the form $\in \Xi_{d_1,d_2,\epsilon}$. To test the conditions of Theorem~\ref{thm:LMI}, these variables and constraints must be converted to a form recognized by an SDP solver such as SeDuMi~\cite{sturm1999using}. Many of these tasks have already been automated in SOSTOOLS~\cite{prajna2001introducing} and our extended toolbox, DelayTOOLS~\cite{peetlmi}. Specifically, SOSTOOLS has functionality for declaring polynomial variables and enforcing scalar equality constraints. Furthermore, DelayTOOLS~\cite{peetlmi} allows the user to declare matrix-valued equality constraints and create new polynomial variables which satisfy $\in \Xi_{d_1,d_2,\epsilon}$. Furthermore, the multipoly toolbox allows one to manipulate polynomial variables in order to construct new dependent polynomials such as in $\{\hat M,\hat K_1,\hat K_2\} =\Omega_c(M,K_1,K_2)$. Once all variables and constraints have been declared, SOSTOOLS converts all constraints and variables to a format which can be accepted by SDP solvers such as SeDuMi, SDPT3 or MOSEK. The a-posteriori polynomial manipulations such as operator inversion can be performed using a combination of the multipoly toolbox and Mupad. To help with understanding this process, we define several subroutines which perform specific relevant tasks and combine them in the pseudo code which would be used to obtain the observer-based controllers.
\begin{enumerate}
\item[] \hspace{-8mm} \texttt{[M,$\texttt{K}_\texttt{1}$,$\texttt{K}_\texttt{2}$]=mult\_semisep($\epsilon$)}
\begin{itemize}
 \item Declares polynomial variables $M$, $K_1$ and $K_2$ and enforces the constraint $\{M,K_1,K_2\} \in \Xi_{d_1,d_2,\epsilon}$.
 \end{itemize}
\item[] \hspace{-8mm} \texttt{[$\hat{\texttt{M}}$,$\hat{\texttt{K}}_\texttt{1}$,$\hat{\texttt{K}}_\texttt{2}$]=omega\_primal(M,$\texttt{K}_\texttt{1}$,$\texttt{K}_\texttt{2}$)}
\begin{itemize}
\item Constructs $\hat M$, $\hat K_1$ and $\hat K_2$ as defined by the map $\Omega_s$ in~\eqref{eqn:omega_s}.
\end{itemize}
\item[] \hspace{-8mm} \texttt{[$\hat{\texttt{M}}$,$\hat{\texttt{K}}_\texttt{1}$,$\hat{\texttt{K}}_\texttt{2}$]=omega\_dual(M,$\texttt{K}_\texttt{1}$,$\texttt{K}_\texttt{2}$)}
\begin{itemize}
\item Constructs $\hat M$, $\hat K_1$ and $\hat K_2$ as defined by the map $\Omega_c$ in~\eqref{eqn:omega_c}.
\end{itemize}
\item[] \hspace{-8mm} \texttt{eq\_constr(F)}
\begin{itemize}
\item Given a set $F$ of univariate/bivariate polynomials, declares element wise equality constraint $F=0$.
\end{itemize}
\item[] \hspace{-8mm} \texttt{[$\underbar{\texttt{M}}$,$\underbar{\texttt{K}}_{\texttt{1}}$,$\underbar{\texttt{K}}_{\texttt{2}}$]=inv\_op(M,$\texttt{K}_\texttt{1}$,$\texttt{K}_\texttt{2}$)}
\begin{itemize}
\item Given $\{M,K_1,K_2\}=\Xi_{d_1,d_2,\epsilon}$ calculates the inverse multiplier $\underbar M$ and kernels $\underbar K_1$ and $\underbar K_2$ by approximating $U(x)$ by performing the integration in~\eqref{eqn:inv:fundamental_matrix} a finite number of times and using a Chebyshev series approximation of $M(x)^{-1}$.
\end{itemize}
\item[] \hspace{-8mm} \texttt{[$\texttt{R}_{\texttt{1}}$,$\texttt{R}_{\texttt{2}}$]=controller\_gains(M,$\texttt{K}_\texttt{1}$,$\texttt{K}_\texttt{1}$,$\underbar{\texttt{M}}$,$\underbar{\texttt{K}}_{\texttt{1}}$,$\underbar{\texttt{K}}_{\texttt{2}}$)}
\begin{itemize}
\item The function defines $Y_1$ and $Y_2(x)$ using~\eqref{eqn:synth:Y}. Consequently, $R_1$ and $R_2(x)$ are defined using~\eqref{eqn:synth:R1} and~\eqref{eqn:synth:R2}, respectively.
\end{itemize}
\item[] \hspace{-8mm} \texttt{[$\texttt{L}_{\texttt{1}}$,$\texttt{L}_{\texttt{2}}$]=observer\_gains(M,$\texttt{K}_\texttt{1}$,$\texttt{K}_\texttt{1}$,$\underbar{\texttt{M}}$,$\underbar{\texttt{K}}_{\texttt{1}}$,$\underbar{\texttt{K}}_{\texttt{2}}$)}
\begin{itemize}
\item The function constructs $T_2$ using~\eqref{eqn:obs:T2} and sets $L_2$ using~\eqref{eqn:obs:gain:1}. Then the function constructs $T_1(x)$ and $T_3(x)$ using~\eqref{eqn:obs:T1} and~\eqref{eqn:obs:T3} and constructs $L_1(x)$ using~\eqref{eqn:obs:gain:2}.
\end{itemize}
\end{enumerate}
A pseudo code for the SOSTOOLS implementation of the SDP is presented in Algorithm~\ref{algo:output_feedback}.
\begin{algorithm}[]
\declareopt{\begin{enumerate}
\item \texttt{[M,$\texttt{K}_\texttt{1}$,$\texttt{K}_\texttt{2}$]=mult\_semisep($\epsilon$)}
\item \texttt{[N,$\texttt{S}_\texttt{1}$,$\texttt{S}_\texttt{2}$]=mult\_semisep($\epsilon$)}
\end{enumerate}}
\declarepoly{\begin{enumerate}
\item \texttt{[$\hat{\texttt{M}}$,$\hat{\texttt{K}}_\texttt{1}$,$\hat{\texttt{K}}_\texttt{2}$]=omega\_dual(M,$\texttt{K}_\texttt{1}$,$\texttt{K}_\texttt{2}$)}
\item \texttt{[$\hat{\texttt{N}}$,$\hat{\texttt{S}}_\texttt{1}$,$\hat{\texttt{S}}_\texttt{2}$]=omega\_primal(N,$\texttt{S}_\texttt{1}$,$\texttt{S}_\texttt{2}$)}
\end{enumerate}}
\declareconst{\begin{enumerate}
\item \texttt{eq\_constr((-$\hat{\texttt{M}}$-2$\mu$M,-$\hat{\texttt{K}}_{\texttt{1}}$-2$\mu \texttt{K}_{\texttt{1}}$,-$\hat{\texttt{K}}_{\texttt{2}}$-2$\mu \texttt{K}_{\texttt{2}}$) \\
-mult\_semisep(0))}
\item \texttt{eq\_constr((-$\hat{\texttt{N}}$-2$\delta$N,-$\hat{\texttt{S}}_{\texttt{1}}$-2$\delta \texttt{S}_{\texttt{1}}$,-$\hat{\texttt{S}}_{\texttt{2}}$-2$\delta \texttt{S}_{\texttt{2}}$)\\-mult\_semisep(0))}
\item \texttt{eq\_constr($\texttt{K}_{\texttt{2}}$(0,x))}
\item \texttt{eq\_constr($\texttt{S}_{\texttt{2}}$(0,x))}
\end{enumerate}}
\uIf{SOS problem is feasible}
{
    \outputs{$\texttt{M}$, $\texttt{K}_\texttt{1}$, $\texttt{K}_\texttt{2}$, $\texttt{N}$, $\texttt{S}_\texttt{1}$, $\texttt{S}_\texttt{2}$}.
}
\controlgains{
\begin{enumerate}
\item \texttt{[$\underbar{\texttt{M}}$,$\underbar{\texttt{K}}_{\texttt{1}}$,$\underbar{\texttt{K}}_{\texttt{2}}$]=inv\_op(M,$\texttt{K}_\texttt{1}$,$\texttt{K}_\texttt{2}$)}
\item \texttt{[$\texttt{R}_{\texttt{1}}$,$\texttt{R}_{\texttt{2}}$]=controller\_gains(M,$\texttt{K}_\texttt{1}$,$\texttt{K}_\texttt{1}$,$\underbar{\texttt{M}}$,$\underbar{\texttt{K}}_{\texttt{1}}$,$\underbar{\texttt{K}}_{\texttt{2}}$)}
\end{enumerate}}
\observergains{\begin{enumerate}
\item \texttt{[$\underbar{\texttt{N}}$,$\underbar{\texttt{S}}_{\texttt{1}}$,$\underbar{\texttt{S}}_{\texttt{2}}$]=inv\_op(N,$\texttt{S}_\texttt{1}$,$\texttt{S}_\texttt{2}$)}
\item \texttt{[$\texttt{L}_{\texttt{1}}$,$\texttt{L}_{\texttt{2}}$]=observer\_gains(N,$\texttt{S}_\texttt{1}$,$\texttt{S}_\texttt{1}$,$\underbar{\texttt{N}}$,$\underbar{\texttt{S}}_{\texttt{1}}$,$\underbar{\texttt{S}}_{\texttt{2}}$)}
\end{enumerate}}
\caption{Output-feedback controller synthesis.}
\label{algo:output_feedback}
\end{algorithm}


\section{Numerical Results}\label{sec:num_results}
In this section we test the conditions of Theorems~\ref{thm:analysis},~\ref{thm:synthesis} and~\ref{thm:observer} by applying them to two parameterized instances of scalar parabolic PDEs. The first instance is a variation of the classical isotropic heat equation. Because this system is well-studied, we are able to compare our results with a number of existing results in the literature. The second system is an anisotropic PDE with arbitrarily chosen coefficients. Both instances have an instability term, parameterized by an instability factor, $\lambda$. For both systems, we test stability, find controllers and construct observer-based controllers.

\emph{Example 1:} Our first system is defined as follows.
\begin{equation}\label{eqn:exmp1_PDE}
w_t(x,t)=w_{xx}(x,t)+\lambda w(x,t), \quad \lambda \in \R,
\end{equation} with boundary conditions
\[
w(0,t)=0, \quad w_x(1,t)=u(t).
\]
The output of the PDE is $v(t)=w(1,t)$. For $u(t)=0$, the analytical solution of this PDE is given by
\[
w(x,t)=\sum_{n=1}^\infty e^{\lambda_n t}\ip{w_0}{\phi_n}\phi_n(x),
\]
where $\lambda_n=\lambda - (2n-1)^2 \pi^2/4$ and $\phi_n=\sqrt{2}\sin ((2n-1)\pi x/2)$. This implies that Equation~\eqref{eqn:exmp1_PDE} is unstable for $\lambda>\pi^2/4 \approx 2.467$. To test the numerical accuracy of the stability conditions in Theorem~\ref{thm:analysis}, we found the largest $\lambda>0$ for which the conditions of Theorem~\ref{thm:analysis} are feasible as a function of the parameters $d_1$ and $d_2$ which define the degree of the variables $M$, $K_1$ and $K_2$. Table~\ref{table:exmp1:stab} presents these results for $\epsilon,\delta=0.001$. For $d_1=d_2=7$, we can construct a Lyapunov function which proves stability for $\lambda=2.461$, which is $99.74 \%$ of the stability margin $\frac{\pi^2}{4}\approx 2.4674$.

\begin{table}{}
\begin{center}
    \begin{tabular}{l *{8}{c}}\hline \hline
  $d=3$ & $4$ & $5$ & $6$ & $7$ & \text{analytic} \\ \hline
   $\lambda=0.59$ & $2.19$ & $2.457$ & $2.46$ & $2.461$ & $2.467$
\end{tabular}
\end{center}

\caption{Max. $\lambda$ as a function of $d_1=d_2=d$ for which the exp. stability conditions of Theorem~\ref{thm:analysis} are feasible, implying stability of PDE~\eqref{eqn:exmp1_PDE} with $u(t)=0$.}
\label{table:exmp1:stab}
\end{table}

To test the accuracy of the conditions in Theorem~\ref{thm:synthesis}, we find the largest $\lambda$ for which the conditions of Theorem~\ref{thm:synthesis} are feasible with $\epsilon=0.001$ and $\mu=0.001$, thereby implying the existence of an exponentially stabilizing state-feedback controller.  Table~\ref{table:exmp1:cont} presents this maximum $\lambda$ as a function of the degree $d_1=d_2=d$. The results suggest that for sufficiently high degree, a static state-feedback controller can be constructed for any value of $\lambda>0$.

\begin{table}{}
\begin{center}
    \begin{tabular}{l *{7}{c}}\hline \hline
   $d=7$ & $8$ & $9$ & $10$ & $11$ \\ \hline
   $\lambda=14.3982$ & $17.9626$ & $22.8645$ & $23.3093$ & $27.1179$
\end{tabular}
\end{center}

\caption{Max. $\lambda$ as a function of $d_1=d_2=d$ for which the conditions of Theorem~\ref{thm:synthesis} are feasible, thereby implying the existence of an exp. stabilizing state-feedback controller for PDE~\eqref{eqn:exmp1_PDE}.}
\label{table:exmp1:cont}
\end{table}

To test the accuracy of the conditions of Theorem~\ref{thm:observer}, we find the largest $\lambda$ for which the conditions of Theorem~\ref{thm:observer} are feasible with $\epsilon=0.001$ and $\delta=0.001$, thereby implying the existence of an exponentially stabilizing dynamic output-feedback controller with output $v(t)=w(1,t)$.
Table~\ref{table:exmp1:obs} presents this maximum $\lambda$ as a function of the degree $d_1=d_2=d$. The results suggest that for sufficiently high degree, a dynamic output feedback controller can be constructed for any value of $\lambda>0$.

\begin{table}{}
\begin{center}
    \begin{tabular}{l *{7}{c}}\hline \hline
  $d=7$ & $8$ & $9$ & $10$ & $11$ \\ \hline
   $\lambda = 14.5233$ & $17.7643$ & $23.4406$ & $24.7772$ & $27.8820$
\end{tabular}
\end{center}

\caption{Max. $\lambda$ as a function of $d_1=d_2=d$ for which the conditions of Theorem~\ref{thm:observer} are feasible, thereby implying the existence of an exp. stabilizing output-feedback controller for PDE~\eqref{eqn:exmp1_PDE}.}
\label{table:exmp1:obs}
\end{table}
\emph{Example 2:} To illustrate the versatility of the proposed method, we next consider the following arbitrarily chosen anisotropic system
\begin{equation}\label{eqn:exmp2_PDE}
w_t(x,t)=a(x)w_{xx}(x,t)+b(x)w_x(x,t)+c(x) w(x,t),
\end{equation}
where $a(x)=x^3-x^2+2$, $b(x)=3x^2-2x$ and $c(x)=-0.5x^3+1.3x^2-1.5x+0.7+\lambda$ with $\lambda \in \R$. Although the analytical solution to this PDE is not readily available, we may use a finite-difference scheme to numerically simulate the system and thereby estimate the range of $\lambda$ for which the PDE~\eqref{eqn:exmp2_PDE} is stable. Specifically, we find that the system is unstable for $\lambda>4.66$. To determine the accuracy of the conditions of Theorem~\ref{thm:analysis}, we find the largest $\lambda$ for which the conditions of Theorem~\ref{thm:analysis} are feasible. Table~\ref{table:exmp2:stab} lists the largest such $\lambda$ using $\epsilon,\delta=0.001$ as a function of polynomial degree $d_1=d_2=d$. The maximum $\lambda$ for which we can prove the exponential stability for is $\lambda=4.62$, which is $99.14 \%$ of the predicted stability margin of $4.66$. The $<1\%$ discrepancy may be due to conservatism or inaccuracy in the predicted maximum $\lambda$ on account of inaccuracy in the discretization or poor choice of initial conditions in the simulation.

\begin{table}{}
\begin{center}
    \begin{tabular}{l *{8}{c}}\hline \hline
  $d=3$ & $4$ & $5$ & $6$ & $7$ & \text{simulation} \\ \hline
  $\lambda=4.37$ & $4.61$ & $4.61$ & $4.62$ & $4.62$ & $4.66$
\end{tabular}
\end{center}

\caption{Max. $\lambda$ as a function of $d_1=d_2=d$ for which the exp. stability conditions of Theorem~\ref{thm:analysis} are feasible, implying stability of PDE~\eqref{eqn:exmp2_PDE} with $u(t)=0$.}
\label{table:exmp2:stab}
\end{table}
To test the accuracy of the conditions in Theorem~\ref{thm:synthesis}, we again find the largest $\lambda$ for which the conditions of Theorem~\ref{thm:synthesis} are feasible with $\epsilon=0.001$ and $\mu=0.001$, thereby implying the existence of an exponentially stabilizing state-feedback controller.  Table~\ref{table:exmp2:cont} presents this maximum $\lambda$ as a function of the degree $d_1=d_2=d$. The results suggest that for sufficiently high degree, a static state-feedback controller can be constructed for any value of $\lambda>0$.

\begin{table}{}
\begin{center}
    \begin{tabular}{l *{7}{c}}\hline \hline
  $d=4$ & $5$ & $6$ & $7$ & $8$  \\ \hline
  $\lambda=19.0216$ & $36.1359$ & $39.7247$ & $43.5974$ & $44.5219$
\end{tabular}
\end{center}

\caption{Max. $\lambda$ as a function of $d_1=d_2=d$ for which the conditions of Theorem~\ref{thm:synthesis} are feasible, thereby implying the existence of an exp. stabilizing state-feedback controller for PDE~\eqref{eqn:exmp2_PDE}.}
\label{table:exmp2:cont}
\end{table}
To test the accuracy of the conditions of Theorem~\ref{thm:observer}, we again find the largest $\lambda$ for which the conditions of Theorem~\ref{thm:observer} are feasible with $\epsilon=0.001$ and $\delta=0.001$, thereby implying the existence of an exponentially stabilizing dynamic output-feedback controller with output $v(t)=w(1,t)$.
Table~\ref{table:exmp2:obs} presents this maximum $\lambda$ as a function of the degree $d_1=d_2=d$. The results suggest that for sufficiently high degree, a dynamic output feedback controller can be constructed for any value of $\lambda>0$.

\begin{table}{}
\begin{center}
    \begin{tabular}{l *{7}{c}}\hline \hline
  $d=4$ & $5$ & $6$ & $7$ & $8$  \\ \hline
  $\lambda=18.3090$ & $36.0199$ & $38.0478$ & $40.5931$ & $44.079$
\end{tabular}
\end{center}

\caption{Max. $\lambda$ as a function of $d_1=d_2=d$ for which the conditions of Theorem~\ref{thm:observer} are feasible, thereby implying the existence of an exp. stabilizing output-feedback controller for PDE~\eqref{eqn:exmp2_PDE}.}
\label{table:exmp2:obs}
\end{table}

We conclude with the conjecture that the proposed method is asymptotically accurate in the sense that, for any $\lambda>0$, if the PDE~\eqref{eqn:prob:PDE_form}~-~\eqref{eqn:prob:PDE_form_BC} is stable in the autonomous sense, then the conditions of Theorem~\ref{thm:analysis} will be feasible for sufficiently high $d_1$ and $d_2$. Moreover, we conjecture that if the system is observable and controllable for some suitable definition of controllability and observability, then the conditions of Theorems~\ref{thm:synthesis} and~\ref{thm:observer} will be feasible for sufficiently high $d_1$ and $d_2$. We emphasize, however, that this is only a conjecture and additional work must be done in order to make this statement rigorous and determine its veracity. A further caveat to these results is the observation that the maximum degree $d_1$ and $d_2$ for which the conditions can be tested is a function of the memory and processing speed of the computational platform on which the experiments are performed. Specifically, the number of optimization variables in the underlying SDP problem is determined by the number of polynomial coefficients which scales as $O(d^2)$. To illustrate, all numerical experiments presented in this paper were performed on a machine with $8$ gigabytes of random access memory, which limited our analysis to a maximum degree of $d_1=d_2=11$ for PDE~\eqref{eqn:exmp1_PDE} and $d_1=d_2=8$ for PDE~\eqref{eqn:exmp2_PDE}.

In the following subsection, we illustrate the controllers and observers which result from feasibility of the conditions of Theorems~\ref{thm:synthesis} and~\ref{thm:observer} using numerical simulation.

\subsection{Numerical Implementation of Observer-Based Controllers}
To illustrate the observer-based controllers which result from feasibility of the conditions of Theorems~\ref{thm:synthesis} and~\ref{thm:observer}, we take the anisotropic PDE~\eqref{eqn:exmp2_PDE} with $\lambda=35$. This value of $\lambda$ renders the autonomous system unstable. We then synthesize controller and observer gains using the results of Theorems~\ref{thm:synthesis} and~\ref{thm:observer} for $d_1=d_2=6$, along with the inverse state transformation defined in Theorem~\ref{thm:invop}. For the inverse state transformation, $M(x)^{-1}$ is approximated using a sixth order Chebyshev series approximation and 5 iterations are used to define $U_\infty \cong U_5$. The controllers are then applied to the state and estimator dynamics, which are then discretized using a trapezoidal approximation. The initial state is set to
\[
w_0(x)=e^{-\frac{(x-0.3)^2}{2 (0.07)^2}}-e^{-\frac{(x-0.7)^2}{2 (0.07)^2}},
\]
while the initial observer state is set to $\hat w(x,0)=0$.
Figures~\ref{fig:controlled_state}~-~\ref{fig:control_effort} illustrate the state evolution of the system, observer and the control effort respectively. Finally, Figure~\ref{fig:gain} illustrates the integral control gain $R_2(x)$. Note that its behavior at the boundaries is logical since at $x=0$, the boundary condition $w(0,t)=0$ ensures that no control effort is required. Whereas, at $x=1$, the control exerts maximum effort.
\begin{figure*}[htp]
  \centering
  \subfigure[Evolution of closed-loop state $w(x,t)$.]{\includegraphics[scale=0.17]{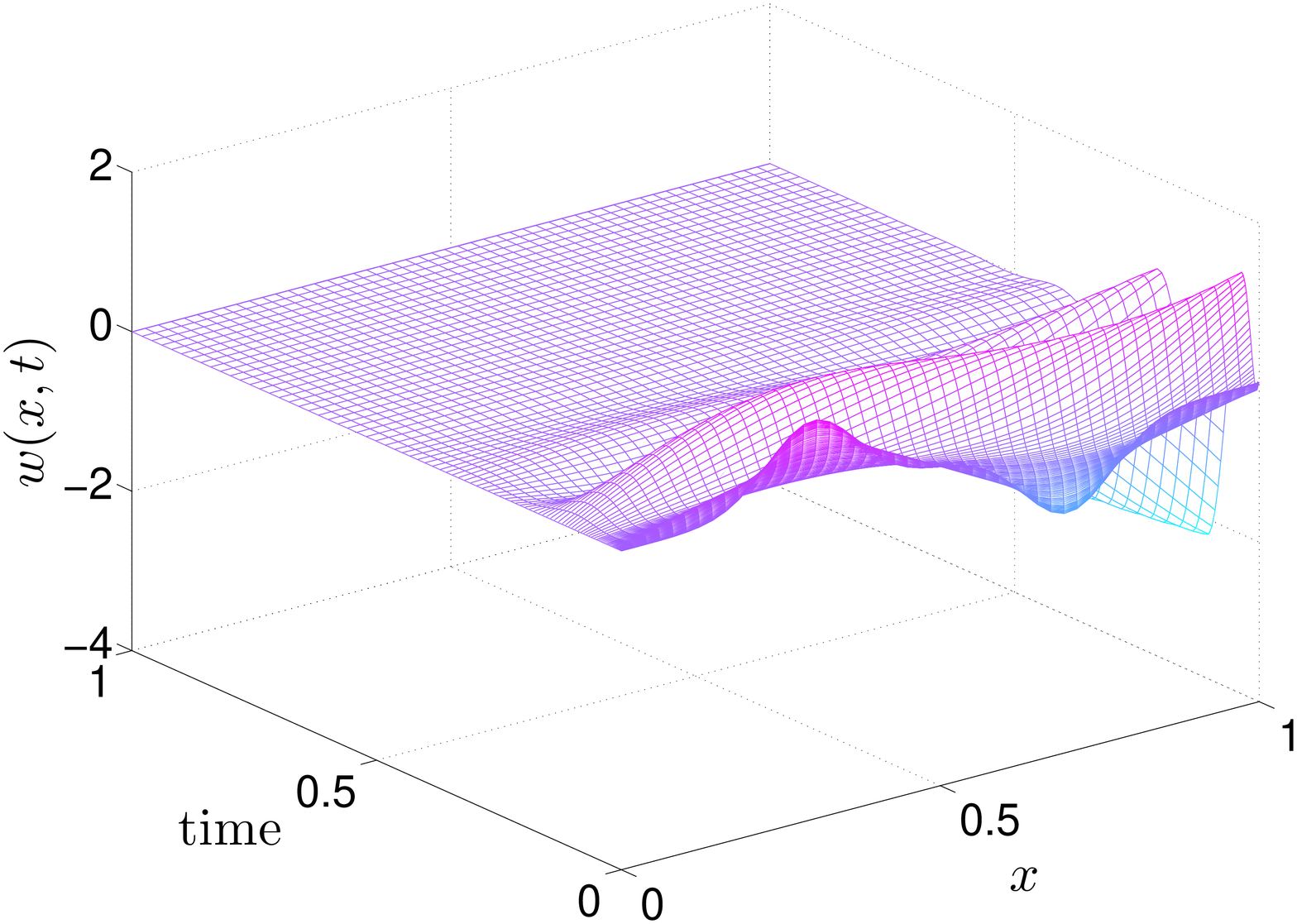}\label{fig:controlled_state}}
  \subfigure[Evolution of closed-loop state estimate $\hat w(x,t)$.]{\includegraphics[scale=0.17]{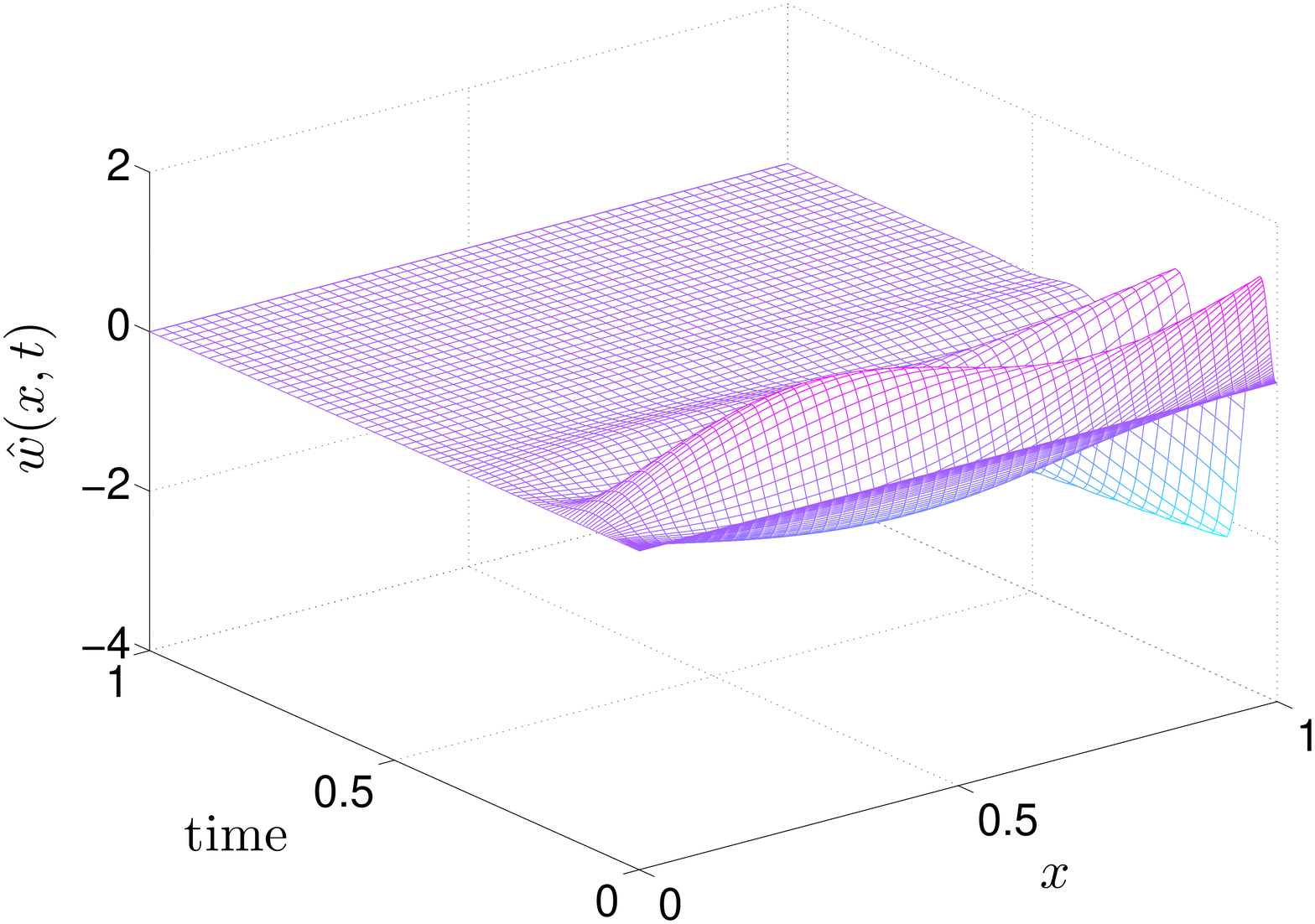}\label{fig:observer_state}}
  \subfigure[Control input $w_x(1,t)=u(t)$.]{\includegraphics[scale=0.17]{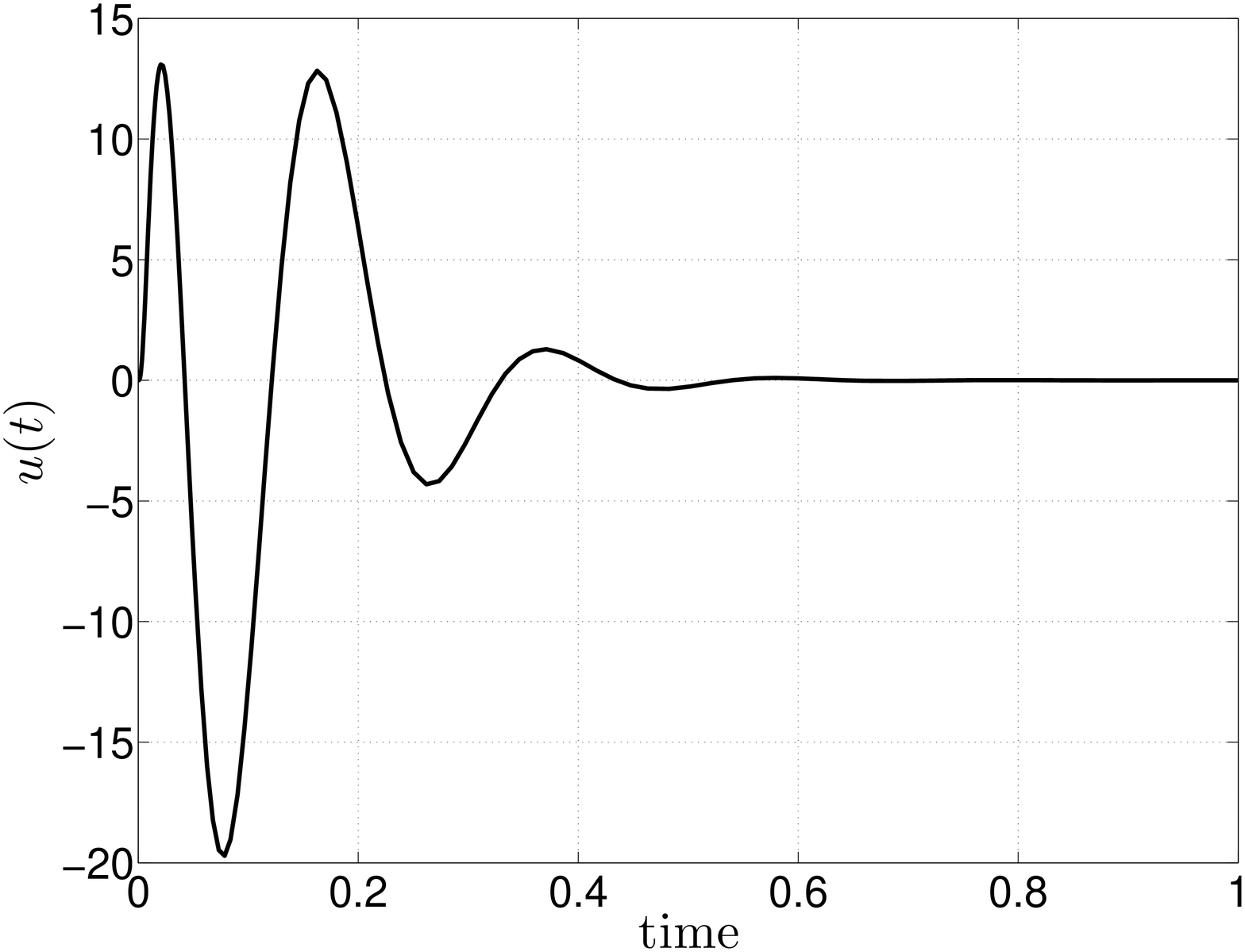}\label{fig:control_effort}}

  \caption{Evolution of closed loop system for Example 2 with $\lambda=35$ using controller from Theorem~\ref{thm:synthesis} and observer from Theorem~\ref{thm:observer}.}
\end{figure*}


\section{Necessity of Semi-Separable Kernels in the Lyapunov Function}\label{sec:simpler}
Recall that the Lyapunov functions used in Theorems~\ref{thm:analysis},~\ref{thm:synthesis}, and~\ref{thm:observer} all have the form
\begin{align*}
&V(w)= \int_0^1 w(x)M(x)w(x) dx \\
&+ \int_0^1 w(x) \bbbl( \int_0^x  K_1(x,\xi) w(\xi) d \xi  + \int_x^1 K_2(x,\xi) w(\xi) d \xi\bbbr) dx.
\end{align*}
As mentioned previously, this form is atypical in the study of parabolic PDEs and the reader may question the necessity of the terms $K_1$ and $K_2$ as their presence significantly complicates the analysis and increases the complexity of the stability conditions. Therefore, to illustrate the necessity of including these terms, in this section we repeat the numerical examples presented previously with the added restriction that $K_1=K_2=0$ (which translates to $P_{ij}=0$ for $i\neq j\neq 1$ in Theorem~\ref{thm:jointpos}). Table~\ref{table:simple_lyap1:cont} illustrates these results for the controller synthesis conditions of Theorem~\ref{thm:synthesis} using the same methodology as described in the previous section. These numerical tests indicate that while inclusion of $K_1$ and $K_2$ allows us to control the PDE for any $\lambda>0$, when $K_1=K_2=0$, our method will fail for some $\lambda$, regardless of the polynomial degree $d_1=d_1=d$. As indicated in Table~\ref{table:simple_lyap1:obs}, the results are similar for the observer synthesis conditions of Theorem~\ref{thm:observer}.
\begin{table}{}
\begin{center}
    \begin{tabular}{l *{8}{c}}\hline \hline
   & $d=1$ & $2$ & $3$ & $4\ldots 9$ & $10$ & \text{$K_1,K_2\neq 0$}\\ \hline
  Ex. $1$   & $\lambda=3.91$ & $4.78$  & $4.88$  & $4.88$ & $4.88$ & $27.1179$\\
  Ex. $2$ & $\lambda=3.51$ & $7.03$  & $8.59$  & $8.59$  & $8.59$ & $44.5219$
\end{tabular}
\end{center}

\caption{Re-evaluation of the results of Tables~\ref{table:exmp1:cont} and~\ref{table:exmp2:cont} with added constraint $K_1=K_2=0$.}
\label{table:simple_lyap1:cont}
\end{table}

\begin{table}{}
\begin{center}
    \begin{tabular}{l *{8}{c}}\hline \hline
   & $d=1$ & $2$ & $3$ & $4\ldots 9$ & $10$ & $K_1,K_2\neq0$\\ \hline
  Ex. $1$   & $\lambda=3.89$ & $4.79$  & $4.88$  & $4.88$ & $4.88$ & $27.8820$\\
  Ex. $2$ & $\lambda=3.51$ & $7.12$  & $8.43$  & $8.43$  & $8.43$ & $44.079$
\end{tabular}
\end{center}

\caption{Re-evaluation of the results of Tables~\ref{table:exmp1:obs} and~\ref{table:exmp2:obs} with added constraint $K_1=K_2=0$.}
\label{table:simple_lyap1:obs}
\end{table}


\section{Comparison With and Relation to Existing Results}\label{sec:comparison}
In this section, we compare our numerical results with several results in the literature which can be used for stability analysis and control, including those based on Sturm-Liouville theory and backstepping.

\subsection{Static Controllers Using Sturm Liouville Theory}\label{subsec:SL}
The output feedback controllers we construct are dynamic in that they rely on an auxiliary set of estimator dynamics which must be simulated in real-time. By contrast, static output feedback controllers do not use an estimator and instead rely only on a gain of the form, e.g. $u(t)=-\kappa v(t)=-\kappa w(1,t)$. Unfortunately, even for finite-dimensional systems the problem of static output feedback design is unsolved when $B \neq I$. That is, there is no LMI or polynomial-time algorithm which is guaranteed to find a stabilizing output feedback controller if one exists~\cite{syrmos1997static,fu2004pole}. However, there are numerous results which give sufficient conditions for the existence of such a controller, often based on the use of a fixed Lyapunov function. For the parabolic PDE which we consider, Sturm-Liouville theory~\cite[Chapter 2]{egorov1996spectral} can be used to express conditions for existence of static-output feedback controllers. Specfically, for $u(t)=-\kappa w(1,t)$, the stability of~\eqref{eqn:prob:PDE_form}~-~\eqref{eqn:prob:PDE_form_BC} depends on the first eigenvalue of the following Sturm-Liouville eigenvalue problem
\begin{equation}\label{eqn:SL:eigenproblem}
\frac{d}{dx}\left(p(x)\frac{d w(x)}{dx} \right)+q(x)w(x)=\mu \sigma(x)w(x),
\end{equation}
where $\mu$ is the eigenvalue and
\[
p(x)=e^{\int \frac{b(\xi)}{a(\xi)}d\xi}, \quad q(x)=c(x)\frac{p(x)}{a(x)}, \quad \sigma(x)=\frac{p(x)}{a(x)}.
\]
The boundary conditions for this eigenvalue problem are $w(0)=0$ and $w_x(1)+\kappa w(1)=0$. For our system, using the properties of the coefficients $a(x)$, $b(x)$ and $c(x)$ it can be established that $p$ is continuously differentiable, $q$ and $\sigma$ are continuous and there exist scalars $p_0$ and $\sigma_0$ such that $p(x) \geq p_0 >0$ and $\sigma(x) \geq \sigma_0 >0$. If $\mu_1$ is the first eigenvalue of~\eqref{eqn:SL:eigenproblem}, then it can be established using the Rayleigh quotient that $\mu_1 \leq \mu_1^{cc}$, where $\mu_1^{cc}$ is the first eigenvalue of the following constant coefficient Sturm-Liouville eigenvalue problem
\begin{equation}\label{eqn:SL:constant_eigenproblem}
p_0 \frac{d^2 w(x)}{w(x)}+q_1 w(x)=\mu^{cc}\sigma_1 w(x),
\end{equation}
subject to the boundary conditions $w(0)=0$ and $w_x(1)+\kappa w(1)=0$ and where $q_1$ and $\sigma_1$ are scalars such that
\[
q(x) \leq q_1 \quad \text{and} \quad \sigma(x) \leq \sigma_1.
\]
Now let us first consider Numerical Example 1, as defined in Equation~\eqref{eqn:exmp1_PDE} in Section~\ref{sec:num_results}. In this case, we have that $p_0=1$, $q_1=\lambda$ and $\sigma_1=1$. Therefore, estimating the first eigenvalue of~\eqref{eqn:SL:constant_eigenproblem} we get that $\mu_1^{cc} \approx \lambda- \pi^2$. Since, for stability we require $\mu_1^{cc} < 0$, for a large enough $\kappa>0$, a control input of the form $u(t)=-\kappa w(1,t)$ can stabilize~\eqref{eqn:exmp1_PDE} for $\lambda <  \pi^2$. This result is significantly more conservative than the results described in Tables~\ref{table:exmp1:cont}-\ref{table:exmp1:obs} which yield a stabilizing controller for at least $\lambda<27.1179$. Of course this result is not particularly surprising, as static output feedback controllers are a subset of dynamic output feedback controllers.

Similarly, for Numerical Example 2 (Equation~\eqref{eqn:exmp2_PDE}) we have $p(x)=x^3-x^2+2$, $q(x)=-0.5x^3+1.3 x^2-1.5x+0.7+\lambda$ and $\sigma(x)=1$.  Thus $p_0=50/27$, $q_1=0.7+\lambda$ and $\sigma_1=1$. Therefore, estimating the first eigenvalue of~\eqref{eqn:SL:constant_eigenproblem} we get that $\mu_1^{cc} \approx \lambda-17.58$. As before, we require $\mu_1^{cc} < 0$. Therefore for a large enough $\kappa>0$, a control input of the form $u(t)=-\kappa w(1,t)$ can stabilize~\eqref{eqn:exmp1_PDE} for $\lambda < 17.58$. Whereas, from Tables~\ref{table:exmp2:cont}-\ref{table:exmp2:obs} we see that Theorems~\ref{thm:synthesis} and~\ref{thm:observer} yield a dynamic output feedback controller for at least $\lambda<44.079$.

\subsection{The Case When $\mcl{A}+\mcl{A}^\star \leq 0$}\label{subsec:pos_real}
For some values of the coefficients $a(x)$, $b(x)$ and $c(x)$ we may have that $\mcl{A}+\mcl{A}^\star \leq 0$ on $\mcl{D}_0$, where the operator $\mcl{A}$ is defined in~\eqref{eqn:exist:A} and the set $\mcl{D}_0$ is defined in~\eqref{eqn:D0}. The output feedback stabilization of such systems, i.e. systems with $\mcl{A}+\mcl{A}^\star \leq 0$ and collocated control/observation, is considered in~\cite{curtain2006exponential}. The authors in~\cite{curtain2006exponential} show that for such systems there exists a scalar $\kappa >0$ (possibly $\kappa=\infty$) such that the control $u(t)=-\kappa v(t)$ exponentially stabilizes the system.  We wish to see if our methodology offers a performance gain over the controller proposed in~\cite{curtain2006exponential}. If we choose $a(x)=1$, $b(x)=0$ and $c(x)=\pi^2/4$, then
\begin{equation}\label{eqn:pos_real:A}
\mcl{A}=\frac{d^2}{dx^2}+\frac{\pi^2}{4}.
\end{equation} Applying integration by parts and Lemma~\ref{lem:wirtinger}, it can be established that $\mcl{A}+\mcl{A}^\star \leq 0$ on $\mcl{D}_0$. If we apply a controller of the form proposed in~\cite{curtain2006exponential}, then $u(t)=-\kappa v(t)=-\kappa w(1,t)$, for some $\kappa >0$. Using the theory in Subsection~\ref{subsec:SL} it is easily established that even for an arbitrarily large $\kappa > 0$, the closed loop system state will decay with a rate close to, but less then $3 \pi^2/4$. Whereas, from Table~\ref{table:pos_real} we observe that for $d_1=d_2=11$ we can construct an output feedback controller with a minimum exponential decay rate of $25.78$, a significant improvement over $3 \pi^2/4$.

\begin{table}{}
\begin{center}
    \begin{tabular}{l *{7}{c}}\hline \hline
    $d=6$ & $7$ & $8$ & $9$ & $10$ & $11$ \\ \hline
     $\delta=8.01$ & $12.7$  & $17.21$  & $20.31$ & $22.66$ & $25.78$
\end{tabular}
\end{center}

\caption{Max. exp. decay rate $\delta$ as a function of polynomial degree, $d_1=d_2=d$ for Equations~\eqref{eqn:prob:PDE_form}~-~\eqref{eqn:prob:PDE_form_BC} with $\mcl{A}$ as in Equation~\eqref{eqn:pos_real:A} for which we can construct output feedback controllers using Theorems~\ref{thm:synthesis} and~\ref{thm:observer}.}
\label{table:pos_real}
\end{table}

\begin{figure}[h]
\centering
\includegraphics[scale=0.2]{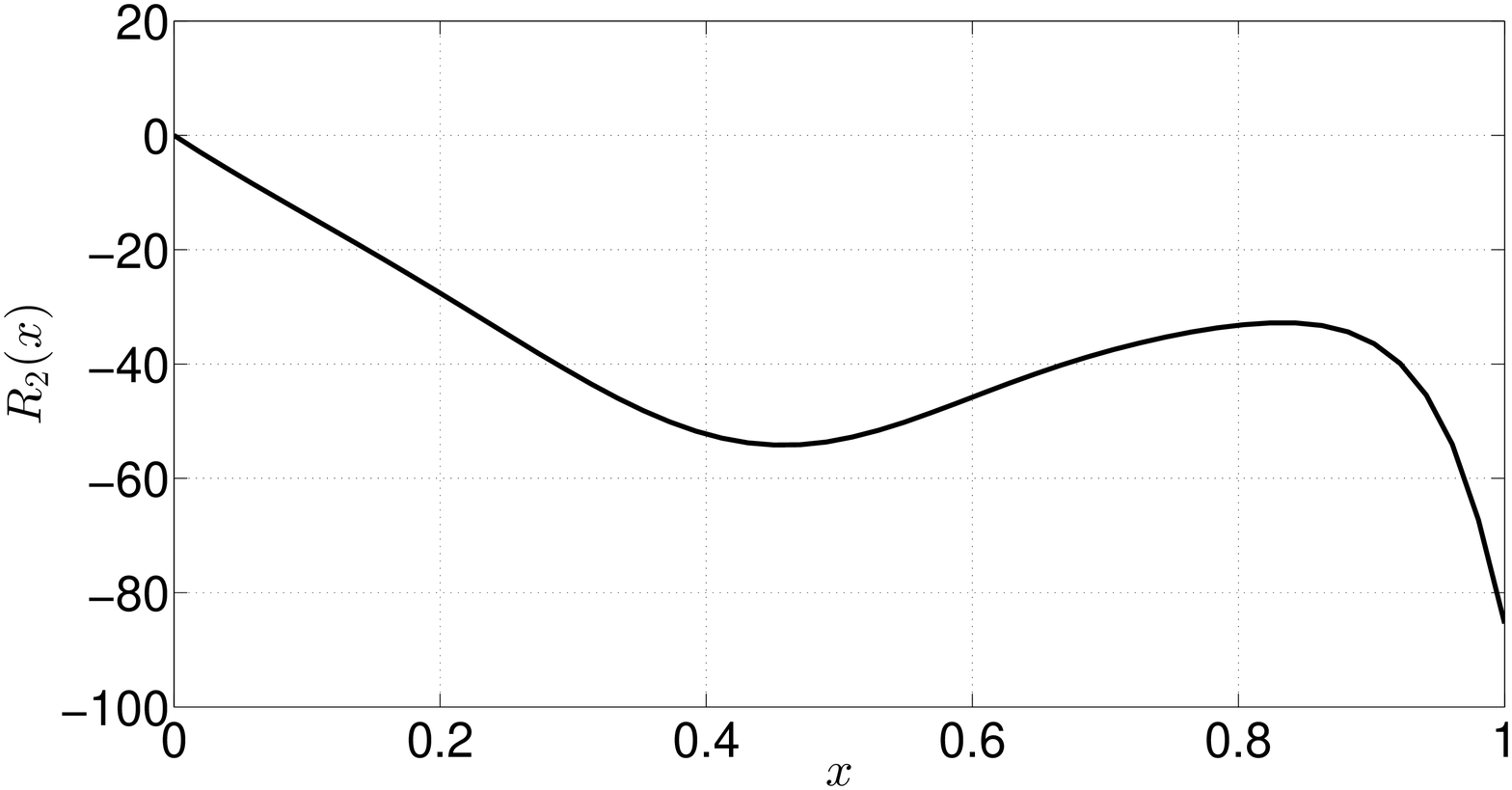}

\caption{ Control gain $R_2(x)$.}
\label{fig:gain}
\end{figure}

\subsection{Backstepping}\label{subsec:back}
Backstepping is a well-known alternative for the construction of stabilizing controllers for parabolic PDEs.
 Specifically, the backstepping approach defines a control law which, when coupled with an invertible state transformation, converts the controlled parabolic PDE to the form of a desired stable PDE (the target system). Although backstepping is not an optimization-based method and does not explicitly search for a Lyapunov-based stability proof, it turns out that the existence of a backstepping controller typically implies the existence of a Lyapunov function of the Form~\eqref{eqn:Lyapunov}, defined by a multiplier $M$ and semiseparable kernels $K_1$ and $K_2$. To demonstrate the existence of this Lyapunov function, let us consider the system defined by Example 1,
\begin{align}
&\label{eqn:back:system}w_t(x,t)= w_{xx}(x,t)+\lambda w(x,t),\\
&\label{eqn:back:system_BC}w(0,t)=0, \quad w_x(1,t)=u(t),
\end{align} where $\lambda > 0$.
Now define the \textit{target system}
\begin{align}
&\label{eqn:back:target}z_t(x,t)= z_{xx}(x,t),\\
&\label{eqn:back:target_BC}z(0,t)=0, \quad z_x(1,t)=0.
\end{align}
The key backstepping result is that there exists a function $E$ such that if
\[
u(t)=E(1,1)w(1,t)+\igzo (D_1 E)(1,x)w(x,t)dx,
\]
then for any solution $w$ of Equations~\eqref{eqn:back:system}~-~\eqref{eqn:back:system_BC},
\begin{equation}\label{eqn:back:transform}
z(x,t)=w(x,t)-\igzx E(x,\xi)w(\xi,t)d \xi, \nonumber
\end{equation}
is a solution of the target system in Equations~\eqref{eqn:back:target}~-~\eqref{eqn:back:target_BC}. Furthermore, if the map $\mcl{E}:w \mapsto z$ is invertible, then stability of the target system implies stability of the original closed-loop PDE. For the example problem given, this $E$ is obtained as a solution of a \textit{kernel-PDE} and can be found explicitly as~\cite{krstic2008boundary}
\begin{equation}\label{eqn:back:transform_kernel}
E(x,\xi)=-\lambda \xi \frac{I_1 \left(\sqrt{\lambda\left(x^2-\xi^2 \right)} \right)}{\sqrt{\lambda\left(x^2-\xi^2 \right)}}, \quad 0 \leq \xi \leq x \leq 1,
\end{equation} where $I_1$ is the first order modified Bessel function of the first kind. Moreover, $\mcl{E}$ has an inverse of the form
\begin{align}
\left(\mcl{E}^{-1}z \right)(x)&=z(x,t)+\igzx F (x,\xi)z(\xi,t)d \xi,\label{eqn:back:inverse_transform}
\end{align} where
\begin{equation}\label{eqn:back:inverse_transform_kernel}
F(x,\xi)=-\lambda \xi \frac{J_1 \left(\sqrt{\lambda\left(x^2-\xi^2 \right)} \right)}{\sqrt{\lambda\left(x^2-\xi^2 \right)}}, \quad 0 \leq \xi \leq x \leq 1,
\end{equation} where $J_1$ is the first order Bessel function of the first kind. Using properties of Bessel functions, it can be shown that both kernels $E$ and $F$ are bounded on the domain $\{(\xi,x)\,:\,0 \leq \xi \leq x \leq 1\}$. This implies that both $\mcl{E}$ and $\mcl{E}^{-1}$ are bounded with induced norms which we denote by $\norm{\mcl{E}}_{\mcl{L}}$ and $\norm{\mcl{E}^{-1}}_{\mcl{L}}$.

Now, to understand how this backstepping transformation implies the existence of a Lyapunov function with semi-separable kernels, we first note that stability of the target system in Equations~\eqref{eqn:back:target}~-~\eqref{eqn:back:target_BC} is established using the simple Lyapunov function
\begin{align}
V_{target}(z)=&\igzo z(x)^2dx=\ip{z}{z}, \nonumber
\end{align}
for which, using~\eqref{eqn:back:target}~-~\eqref{eqn:back:target_BC}, integration by parts and Lemma~\ref{lem:wirtinger}, we obtain
\begin{equation}\label{eqn:back:target_decay}
\frac{d}{dt} V_{target}(z(t))\le -\epsilon V_{target}(z(t)),
\end{equation} for any $z$ which satisfies~\eqref{eqn:back:target}~-~\eqref{eqn:back:target_BC}, where $\epsilon=\frac{\pi^2}{2}$. This implies
\[V_{target}(z(t))\le e^{-\epsilon t} V_{target}(z(0)) \Rightarrow \norm{z(x,t)} \leq e^{-\frac{\epsilon}{2}t}\norm{z(x,0)}.\]

Now, for the original system we define the Lyapunov function
\begin{align}
V_{plant}(w)=\ip{\mcl{E}w}{\mcl{E}w}.
\end{align}
Now, since for any solution, $w(t)$, of the original system, $z=\mcl{E}w(t)$ is a solution of the target system, we have that
\begin{align}
\frac{d}{dt} V_{plant}(w(t))&=\frac{d}{dt}\ip{\mcl{E}w(\cdot,t)}{\mcl{E}w(\cdot,t)} \notag \\
&=\frac{d}{dt}\ip{z(\cdot,t)}{z(\cdot,t)} \notag \\
&\label{eqn:back:inter1}=\frac{d}{dt}V_{target}(z(t))\le -\epsilon V_{target}(z(t))\notag \\
&=-\epsilon V_{target}(\mcl{E}w(t))=-\epsilon\ip{\mcl{E}w(t)}{\mcl{E}w(t)}\notag \\
& = -\epsilon V_{plant}(w(t)).\notag
\end{align}
Therefore,
\[V_{plant}(w(t)) \leq e^{-\epsilon t} V_{plant}(w(0)),\] which means
\begin{equation}\label{eqn:back:inter2}
\norm{\mcl{E}w(\cdot,t)} \leq e^{-\frac{\epsilon}{2}t} \norm{\mcl{E}w(\cdot,0)}.
\end{equation}
Boundedness of $\mcl{E}$ and $\mcl{E}^{-1}$ now implies $\norm{w(t)}\leq \norm{\mcl{E}^{-1}}_{\mcl{L}}\norm{\mcl{E}w(t)}$ and $\norm{\mcl{E}w(0)} \leq \norm{\mcl{E}}_{\mcl{L}}\norm{w(0)}$, which yields
\[
\norm{w(t)} \leq \norm{\mcl{E}^{-1}}_{\mcl{L}} \norm{\mcl{E}}_{\mcl{L}} e^{-\frac{\epsilon}{2}t} \norm{w(0)},\] which proves that $V_{plant}(w)=\norm{\mcl{E}w}^2$ establishes exponential stability of the original system.

We now show that $V_{plant}(w)$ has a form consistent with Theorem~\ref{thm:synthesis}. Expanding
\[V_{plant}(w)=\ip{\mcl{E}w}{\mcl{E}w},\]
we get
\begin{align*}
V_{plant}(w)=&\igzo w(x)^2-\igzo\igzx w(x) E(x,\xi)w(\xi)d\xi dx \\
&  -\igzo \igxo w(x) E(\xi,x)w(\xi)d\xi dx  \\
&+\igzo \igxo \int_0^\xi w(x)E(\xi,x)E(\xi,\eta)w(\eta)d\eta d\xi dx.
\end{align*} Changing the order of integration twice in the last integral and collecting like terms, we obtain
\begin{align*}
V_{plant}(w)=&\int_0^1 w(x)^2 dx \\
&\qquad  +\igzo \igzx w(x)H_1(x,\xi)w(\xi)d\xi dx  \\
& \qquad \qquad +   \igzo \igxo w(x)H_2(x,\xi)w(\xi)d\xi dx,\\
&=\ip{w}{\mcl{X}_{\{I,H_1,H_2\}}w},
\end{align*}
where
\begin{align*}
H_1(x,\xi)=&\igxo E(\eta,x)E(\eta,\xi)d \eta - E(x,\xi),\\
H_2(x,\xi)=&\int_\xi^1 E(\eta,x)E(\eta,\xi)d \eta - E(\xi,x),
\end{align*}
which has the form of a Lyapunov function consistent with Equation~\eqref{eqn:Lyapunov} using a semi-separable kernel where we have $M(x)=1$, $K_1=H_1$ and $K_2=H_2$. In a similar manner, if we define $\mcl{P}=\mcl{X}_{\{I,G_1,G_2\}}$ where
\begin{align*}
G_1(x,\xi)=&\igxo F(\eta,x)F(\eta,\xi)d \eta + F(x,\xi),\\
G_2(x,\xi)=&\int_\xi^1 F(\eta,x)F(\eta,\xi)d \eta + F(\xi,x),
\end{align*}
then $\mcl{P}^{-1}=\mcl{X}_{\{I,H_1,H_2\}}$ and hence
\begin{align}
V_{plant}(w)=\ip{\mathcal{P}^{-1} w}{\mcl{P}\mathcal{P}^{-1} w},
\end{align}
which is a form consistent with Theorem~\ref{thm:synthesis}. Thus we conclude that for this class of systems, if we assume the function $F$ may be approximated by polynomials, then the existence of a backstepping controller implies the feasibility of Theorem~\ref{thm:synthesis} for some degree. Despite this similarity, there are, of course, differences between the proposed method and backstepping. Specifically, our approach is optimization based, whereas the search for the backstepping transformation is not. Advantages of the proposed method include the ability to analyze stability of autonomous PDEs and simple extensions to robust control of PDEs with parametric uncertainty via Positivstellensatz results~\cite{putinar1993positive}.


\subsection{Finite-Dimensional Approximations}\label{subsec:LQR}
In this subsection we consider the merits of the SOS approach with respect to finite-dimensional approximation. That is, we consider whether there are advantages over model reduction techniques wherein the PDE is reduced to a set of coupled ODEs - as in, e.g.~\cite{balas1979feedback}.

Before continuing, we note that establishing a suitable metric for comparison of finite-dimensional and infinite-dimensional approaches is complicated by the fact that that the methods proposed in this paper are suboptimal. That is, we are not seeking observer-based controllers which are optimal in any sense. Rather, we simply seek observer-based controllers which establish closed-loop stability. In this sense, our methods are roughly equivalent to existing finite-dimensional approaches in that for all numerical examples considered, we are able to construct observer-based controllers for suitably high polynomial degree. In a sense, then, one could argue that finite-dimensional approaches are superior in that they are able to go beyond stabilization and construct \textit{optimal} observer-based controllers using a suitably high level of discretization. In practice, however, our experience has shown that there are disadvantages to discretization-based methods such as pole-placement. Specifically, we have seen that if the reduction scheme is not carefully chosen, discretization may result in loss of controllability or poorly conditioned controllability matrices. To illustrate, consider the following model:
\begin{align*}
&w_t(x,t)=w_{xx}(x,t)+15 w(x,t),\\
&w(0,t)=0, \quad w_x(1,t)=u(t).
\end{align*}
One approach to reduction of this PDE to a system of ODEs is to use a finite difference method to approximate the spatial derivative as
\begin{align*}w_{xx}(x,t)\approx& \frac{2}{\Delta x_1+\Delta x_2}\left(\frac{w(x+\Delta x_2,t)-w(x,t)}{\Delta x_2}   \right)\\
&-\frac{2}{\Delta x_1+\Delta x_2} \left(\frac{w(x,t)-w(x-\Delta x_1,t)}{\Delta x_1}   \right)  ,
\end{align*} where $\Delta x_1$ is the step size to the left of $x$ and $\Delta x_2$ is the step size to the right. Using this scheme we obtain an ODE model of the form
\begin{equation}\label{eqn:LQR:reduced_model}
\dot{w}^m(t)=A^m w^m(t) + B^m u(t),
\end{equation} where $w^m(t),B^m \in \R^{m \times 1}$ and $A^m \in \R^{m \times m}$ and $m \in \N$ is the order of reduction. While relatively straightforward, this approach creates significant technical challenges. For example:

\textit{a) Controllability of the Reduced Model:} The reduced-order model must be chosen so as to maintain the properties of controllability and observability. In most cases, however, there is no guarantee that a finite-difference approximation scheme will preserve these properties. For example, for the finite difference scheme defined above, it is known that if the original system is controllable and a uniform grid size is chosen, then the reduced system is also controllable. However, if one were to chose a non-uniform grid, then controllability is no longer guaranteed. For example if one were to chose a logarithmic grid, for $m>13$ the reduced model is not controllable (although it is still  stabilizable). In such a case, the performance of the closed loop system will be limited by the location of the uncontrollable eigenvalues.

\textit{b) Ill-conditioned Controllability Matrix:} Now suppose we wish to perform pole placement by applying Ackermann's formula to the reduced order model. As mentioned, it can be shown that the reduced order model in~\eqref{eqn:LQR:reduced_model} is controllable  for any $m \in \N$ when derived using uniform step sizes ($\Delta x_1=\Delta x_2$)  as established by the Hautus test. However, the pole placement problem (which is similar to our condition for exponential stabilization with desired decay rate) relies on inversion of the controllability matrix $\mcl{C}(A^m,B^m)$ - a step which is numerically sensitive to conditioning of $\mcl{C}(A^m,B^m)$. This is problematic since, as seen in Table~\ref{table:controllability_condition}, the controllability matrix for this system is ill-conditioned and the condition number \textit{worsens} as the level of disretization $m$ \textit{increases}. This implies that as the level of discretization increases, numerical errors may dominate - potentially resulting in unstable or unpredictable controllers. Naturally, these issues are well-known and have been addressed in the literature through methods such as robust place placement~\cite{tits1996globally} or Galerkin schemes~\cite{kunisch2001galerkin}. The advantage of the SOS approach, however, is that the controllers are provably stable at the pre-lumping stage and thus the only numerical concern is implementation, which does not appear to be sensitive to issues such as condition number.
\begin{table}{}
\begin{center}
    \begin{tabular}{l *{7}{c}}\hline \hline
  $m$  & $5$ & $10$ & $20$ \\ \hline
  $cond(\mcl{C}(A^m,B^m))\approx$ & $ 10^7$ & $10^{25}$ & $10^{63}$
\end{tabular}
\end{center}

\caption{Condition number of $\mcl{C}(A^m,B^m)$ as a function of order of reduction $m$.}
\label{table:controllability_condition}
\end{table}

\section{Alternative Boundary Conditions}\label{sec:ABC}
The results of this paper may be readily adapted to other types of boundary conditions. Specifically, the conditions of Theorems~\ref{thm:analysis}, ~\ref{thm:synthesis} and~\ref{thm:observer} can be easily modified to consider alternative boundary conditions. Although economy of space prohibits us from presenting these conditions in full, in this section we give the results of numerical tests performed using Dirichlet, Neumann and Robin boundary conditions. Specifically, for the two PDEs~\eqref{eqn:exmp1_PDE} and~\eqref{eqn:exmp2_PDE} which define Examples~$1$ and~$2$, respectively, in Section~\ref{sec:num_results}, we consider the boundary conditions and the outputs as listed in Table~\ref{table:alt_BC}.
\begin{table}{}
\begin{center}
    \begin{tabular}{l *{3}{c}}\hline \hline
  & Boundary Condition & Output $v(t)$  \\ \hline
   Dirichlet & {$\!\begin{aligned}
               w(0,t) &= 0 \\
               w(1,t) &= u(t) \end{aligned}$} & $w_x(1,t)$  \\ \hline
   Neumann & {$\!\begin{aligned}
               w_x(0,t) &= 0 \\
               w_x(1,t) &= u(t) \end{aligned}$} & $w(1,t)$ \\ \hline
   Robin & {$\!\begin{aligned}
               w(0,t)+w_x(0,t) &= 0 \\
               w(1,t)+w_x(1,t) &= u(t) \end{aligned}$} & $w(1,t)$
\end{tabular}
\end{center}

\caption{Alternative boundary conditions and outputs for PDEs~\eqref{eqn:exmp1_PDE} and~\eqref{eqn:exmp2_PDE}.}
\label{table:alt_BC}
\end{table}

Tables~\ref{table:alt:exmp1:output} and~\ref{table:alt:exmp2:output} illustrate the maximum $\lambda$ for which we can construct output-feedback based controllers as a function of $d_1=d_2=d$ for PDEs~\eqref{eqn:exmp1_PDE} and~\eqref{eqn:exmp2_PDE}, respectively, for the boundary conditions listed in Table~\ref{table:alt_BC} using exponential decay rates of $\delta=\mu=0.001$.
\begin{table}{}
\begin{center}
    \begin{tabular}{l *{7}{c}}\hline \hline
    & $d=8$ & $9$ & $10$ & $11$ \\ \hline
  Dirichlet  & $\lambda=17.7634$ & $22.8645$ & $23.3093$ & $27.1179$ \\
  Neumann & $14.8163$ & $17.1814$ & $21.8781$ & $21.8781$  \\
  Robin & $13.8367$ & $16.6565$ & $18.6050$ & $18.9758$\\
\end{tabular}
\end{center}

\caption{Max. $\lambda$ as a function of polynomial degree, $d_1=d_2=d$ for PDE~\eqref{eqn:exmp1_PDE} with boundary conditions as in Table~\ref{table:alt_BC} for which we can construct output-feedback boundary controllers.}
\label{table:alt:exmp1:output}
\end{table}
\begin{table}{}
\begin{center}
    \begin{tabular}{l *{7}{c}}\hline \hline
  & $d=5$ & $6$ & $7$ & $8$  \\ \hline
  Dirichlet & $\lambda=36.0199$ & $38.0478$ & $40.5930$ & $44.079$   \\
  Neumann & $29.8492$ & $31.1447$ & $31.1447$ & $34.1584$ \\
  Robin & $24.6490$ & $27.8503$ & $27.8503$ & $29.4373$ \\
\end{tabular}
\end{center}

\caption{Max. $\lambda$ as a function of polynomial degree, $d_1=d_2=d$ for PDE~\eqref{eqn:exmp2_PDE} with boundary conditions as in Table~\ref{table:alt_BC} for which we can construct output-feedback boundary controllers.}
\label{table:alt:exmp2:output}
\end{table}
Similar to the observation made in Section~\ref{sec:num_results}, the numerical results in this section suggest that our methodology is asymptotically accurate for the considered alternative boundary conditions, that is, given any $\lambda>0$, we can construct controllers/observers by choosing a large enough $d_1=d_2=d$. A more detailed study of alternative boundary conditions can be found in the thesis work of~\cite{Gahlawatthesis}.

 \section{Conclusion and Future Work}\label{sec:conclusion}
We have defined an algorithmic, polynomial-time approach to the design of observer-based controllers for a general class of scalar parabolic partial differential equations using measurements and feedback at the boundary. The results use polynomials and semidefinite programming to parameterize a convex set of positive Lyapunov functions on the Hilbert space $L_2$. By combining these Lyapunov functions with an invertible state transformation, we obtain convex conditions for stability, controller synthesis and Luenberger observer design. Furthermore, we have tested our results using parameterized numerical examples in order to show that the stability conditions are accurate to several significant figures and the synthesis conditions yield controllers for a large class of controllable and observable systems. Furthermore, we have adapted the approach to three alternative classes of boundary measurements and actuators. Finally, we have performed a series of comparisons with existing results in the literature, showing, e.g. that the method is analytically equivalent to backstepping for controller synthesis and furthermore is numerically competitive for the examples considered. By using an optimization-based algorithm defined by polynomials, the results presented here have the advantage that they may be further extended to the problem of nonlinear stability analysis, robust control, and control of coupled, multivariate, hyperbolic and elliptic PDEs - topics of ongoing research.



\appendix\label{sec:appendix}

To facilitate presentation in this appendix, we use the following lemmas. The first is simply a restatement of the Wirtinger inequality
\begin{lemma}[\cite{seuret2012use}]
\label{lem:wirtinger}
 Let $z \in H^2(0,1)$ be a scalar function. Then
  \[\int_0^1 (z(x)-z(0))^2dx \leq \frac{4}{\pi^2} \int_0^1 z_x(x)^2 dx.
  \]
\end{lemma}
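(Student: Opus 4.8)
The statement is the Wirtinger (Poincar\'e) inequality, and the plan is to prove it by reducing it to a Rayleigh-quotient estimate for the Sturm--Liouville operator $-d^2/dx^2$ with a Dirichlet condition at $x=0$ and a Neumann condition at $x=1$. First I would set $y := z - z(0)$, so that $y \in H^2(0,1)$ with $y(0)=0$, and the claim becomes $\int_0^1 y(x)^2\,dx \le \tfrac{4}{\pi^2}\int_0^1 y_x(x)^2\,dx$. The relevant eigenpairs are $\phi_n(x) = \sqrt2\,\sin\big((2n-1)\pi x/2\big)$ and $\mu_n = \big((2n-1)\pi/2\big)^2$, satisfying $-\phi_n'' = \mu_n\phi_n$ with $\phi_n(0)=0$ and $\phi_n'(1)=0$. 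Crucially, $\{\phi_n\}_{n\ge1}$ is a complete orthonormal basis of $L_2(0,1)$ (these are exactly the eigenfunctions appearing in Example~1), and its smallest eigenvalue is $\mu_1 = \pi^2/4$.

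Next I would expand $y = \sum_n c_n\phi_n$ with $c_n = \int_0^1 y\,\phi_n\,dx$, so that Parseval gives $\int_0^1 y^2\,dx = \sum_n c_n^2$. To reach the Dirichlet energy I would integrate by parts against $\phi_n'' = -\mu_n\phi_n$,
\[
c_n = -\frac{1}{\mu_n}\int_0^1 y\,\phi_n''\,dx = -\frac{1}{\mu_n}\Big(\big[y\phi_n'\big]_0^1 - \int_0^1 y_x\,\phi_n'\,dx\Big) = \frac{1}{\mu_n}\int_0^1 y_x\,\phi_n'\,dx,
\]
where the boundary term vanishes because $y(0)=0$ and $\phi_n'(1)=0$ (note this uses nothing about $y_x(1)$). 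Since $\phi_n' = \sqrt{\mu_n}\,\sqrt2\cos\big((2n-1)\pi x/2\big)$ and the cosine system $\{\sqrt2\cos((2n-1)\pi x/2)\}_{n\ge1}$ is itself orthonormal and complete in $L_2(0,1)$, the displayed identity says that the $n$-th cosine coefficient $a_n$ of $y_x$ satisfies $\sqrt{\mu_n}\,a_n = \mu_n c_n$, i.e. $a_n = \sqrt{\mu_n}\,c_n$. Applying Parseval in the cosine basis then yields $\int_0^1 y_x^2\,dx = \sum_n a_n^2 = \sum_n \mu_n c_n^2$.

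Combining the two identities gives $\int_0^1 y_x^2\,dx = \sum_n \mu_n c_n^2 \ge \mu_1\sum_n c_n^2 = \tfrac{\pi^2}{4}\int_0^1 y^2\,dx$, which rearranges to the claim. The main obstacle is the rigorous justification of the second Parseval identity $\int_0^1 y_x^2 = \sum_n\mu_n c_n^2$ --- that is, that the cosine coefficients of $y_x$ are exactly $\sqrt{\mu_n}\,c_n$ --- which rests entirely on the vanishing of the boundary term $[y\phi_n']_0^1$ established above and on the completeness of the two trigonometric systems; once these are in hand the estimate is immediate from $\mu_n \ge \mu_1$. An equivalent route, avoiding explicit series, is to note that $\tfrac{\pi^2}{4}$ is the minimum of the Rayleigh quotient $\int_0^1 y_x^2 \big/ \int_0^1 y^2$ over $\{y\in H^1(0,1): y(0)=0\}$, attained by $\phi_1$, since the natural boundary condition at the free endpoint $x=1$ is precisely $y_x(1)=0$; I would nonetheless favour the Fourier argument, as it makes the optimal constant transparent and the derivation self-contained.
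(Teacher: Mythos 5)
Your argument is correct. Note, however, that the paper does not prove this lemma at all: it is stated as a restatement of the Wirtinger inequality and simply cited from the reference [Seuret--Gouaisbaut], so there is no in-paper proof to compare against. Your spectral derivation is a legitimate self-contained substitute. Setting $y=z-z(0)$, expanding $y$ in the orthonormal eigenbasis $\phi_n(x)=\sqrt{2}\sin((2n-1)\pi x/2)$ of $-d^2/dx^2$ with Dirichlet condition at $0$ and Neumann condition at $1$, and using the vanishing of the boundary term $[y\phi_n']_0^1$ (which indeed requires only $y(0)=0$ and $\phi_n'(1)=0$, nothing about $y_x(1)$) to identify the cosine coefficients of $y_x$ as $\sqrt{\mu_n}\,c_n$, you obtain $\int_0^1 y_x^2 \ge \mu_1\int_0^1 y^2$ with the sharp constant $\mu_1=\pi^2/4$. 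Two small remarks: the hypothesis $z\in H^2$ is stronger than your proof needs ($z\in H^1$ suffices, since the point value $z(0)$ and the integration by parts against smooth $\phi_n$ only require one weak derivative); and you do not actually need completeness of the cosine system, since Bessel's inequality already gives $\int_0^1 y_x^2\,dx \ge \sum_n a_n^2 = \sum_n \mu_n c_n^2$, which is the only direction the estimate uses. This makes the argument slightly more economical and consistent with the fact that the two trigonometric systems play asymmetric roles: completeness is genuinely needed only for the sine system, to get Parseval for $\int_0^1 y^2\,dx$.
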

The second lemma is accomplished by splitting the integral in two parts and applying a change in the variable of integration to the second part.
\begin{lemma}\label{lem:order}
For any bivariate polynomials $K$ and $P$ the following identity holds for any $w \in \lt$
\begin{align*}
&\igzo w(x)\left(\igzx K(x,\xi)w(\xi)d\xi + \igxo P(x,\xi)w(\xi)d\xi\right)dx \\
&= \igzo \igzx w(x) \hlf \left[K(x,\xi)+P(\xi,x) \right] w(\xi) d\xi dx \\
&\qquad + \igzo \igxo w(x) \hlf \left[P(x,\xi)+K(\xi,x) \right] w(\xi) d\xi dx.
\end{align*}
\end{lemma}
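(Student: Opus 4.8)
The plan is to prove the identity by splitting the left-hand side into its two constituent double integrals, one supported on the lower triangle $\{0\le\xi\le x\le 1\}$ and one on the upper triangle $\{0\le x\le\xi\le 1\}$, and then symmetrizing each term by a change in the dummy variable of integration. Write the left-hand side as $I_1+I_2$, where
\begin{align*}
I_1&=\igzo\igzx w(x)K(x,\xi)w(\xi)\,d\xi\,dx,\\
I_2&=\igzo\igxo w(x)P(x,\xi)w(\xi)\,d\xi\,dx.
\end{align*}

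First I would observe that each of $I_1$ and $I_2$ may be rewritten over the opposite triangle by interchanging the labels of the two variables of integration. Since the map $(x,\xi)\mapsto(\xi,x)$ has unit Jacobian and carries the region $\{0\le\xi\le x\le1\}$ onto $\{0\le x\le\xi\le1\}$, and since $w(x)$ and $w(\xi)$ are scalars and hence commute, one obtains
\begin{align*}
I_1&=\igzo\igxo w(x)K(\xi,x)w(\xi)\,d\xi\,dx,\\
I_2&=\igzo\igzx w(x)P(\xi,x)w(\xi)\,d\xi\,dx.
\end{align*}
Thus each term admits two equivalent representations, one on each triangle.

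Next I would write $I_1=\hlf I_1+\hlf I_1$ and $I_2=\hlf I_2+\hlf I_2$, keeping one half of $I_1$ in its original lower-triangle form and relabeling the other half into upper-triangle form, and symmetrically for $I_2$. Collecting the two lower-triangle halves yields $\hlf\igzo\igzx w(x)\left[K(x,\xi)+P(\xi,x)\right]w(\xi)\,d\xi\,dx$, while collecting the two upper-triangle halves yields $\hlf\igzo\igxo w(x)\left[P(x,\xi)+K(\xi,x)\right]w(\xi)\,d\xi\,dx$; summing these two contributions reproduces exactly the right-hand side.

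The computation is entirely elementary, relying only on Fubini's theorem—valid since $K$ and $P$ are polynomials and $w\in\lt$, so the integrands are integrable on the compact triangles—together with the unit-Jacobian swap of the integration variables. The only point requiring care, and the sole place an error could creep in, is the bookkeeping of the integration limits under the relabeling: one must verify that swapping $x$ and $\xi$ sends the lower triangle precisely to the upper triangle and correctly transposes the arguments of $K$ and $P$. There is no genuine analytic obstacle.
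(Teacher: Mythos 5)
Your proof is correct and follows essentially the same route as the paper, which establishes the identity by splitting the left-hand side into its two triangular integrals and relabeling the dummy variables of integration; your explicit $\hlf I+\hlf I$ symmetrization is just a careful spelling-out of that relabeling step. No gaps.
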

\begin{lemma}[Analysis]\label{lem:analysis} \label{lem:appendix_1}
Given polynomials $a$, $b$ and $c$ with $a(x) \geq \alpha >0$, for all $x \in [0,1]$, suppose that there exists a scalar $\epsilon>0$ and polynomials $M$, $K_1$ and $K_2$ such that
\begin{align*}
&\{M,K_1,K_2\} \in \Xi_{d_1,d_2,\epsilon},\\
& (b(1)-a_x(1))K_1(1,x)-a(1)(D_1 K_{1})(1,x)=0,\\
& (b(1)-a_x(1))M(1)-a(1)M_x(1) \leq 0,\\
& K_2(0,x)=0.
\end{align*}
Let
\begin{align*}
&V(w)=\ip{w}{\mcl{X}_{\{M,K_1,K_2\}}w},
\end{align*} where $\mcl{X}_{\{M,K_1,K_2\}}$ is as defined in~\eqref{eqn:X}.
Then, for any $w$ which satisfies Equations~\eqref{eqn:stab:PDE_form}~-~\eqref{eqn:stab:PDE_form_BC},
\begin{align*}
&\dot V(w(t))\le
\ip{w(t)}{\mcl{X}_{\{\hat{M},\hat{K}_1,\hat{K}_2\}}w(t)},
\end{align*}
where $\{\hat{M},\hat{K}_1,\hat{K}_2\}=\Omega_s \{M,K_1,K_2\}$.
\end{lemma}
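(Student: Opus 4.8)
The plan is to compute the time-derivative of $V$ along trajectories of~\eqref{eqn:stab:PDE_form}~-~\eqref{eqn:stab:PDE_form_BC} and then transfer every spatial derivative off of $w$ and onto the multiplier $M$ and the kernels $K_1,K_2$ by integration by parts, discarding the nonpositive boundary contributions to reach the stated upper bound. Since the constraint $K_2(x,\xi)=K_1(\xi,x)$ makes $\mcl{X}_{\{M,K_1,K_2\}}$ self-adjoint, I would first differentiate the explicit quadratic form~\eqref{eqn:Lyapunov} under the integral sign. The multiplier term contributes $2\int_0^1 M(x)w(x)w_t(x)\,dx$, while each kernel term contributes one copy in which $w_t$ sits in the $x$-slot and one in which it sits in the $\xi$-slot. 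Substituting $w_t=a w_{xx}+b w_x+c w$ and symmetrizing with Lemma~\ref{lem:order} together with $K_2(x,\xi)=K_1(\xi,x)$ is exactly what forces the derivatives of $K_1$ to appear symmetrically in both arguments, explaining the structure of $\hat K_1$ in~\eqref{eqn:obs:K1hat} and the identity $\hat K_2(x,\xi)=\hat K_1(\xi,x)$.

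Next I would integrate by parts in $x$ for the multiplier and the $x$-slot kernel terms, and in $\xi$ for the $\xi$-slot kernel terms, to move the first- and second-order derivatives from $w$ onto $aM$, $bM$, $aK_1$, $bK_1$, and so on. A short computation confirms that $\partial_x[\partial_x(aM)-bM]+2Mc$ reproduces the volume part of $\hat M$ in~\eqref{eqn:obs:Mhat}, and similarly for $\hat K_1$. Two bookkeeping subtleties must be tracked. First, because the inner integrals $\int_0^x$ and $\int_x^1$ have an $x$-dependent limit, integrating the $x$-slot kernel terms by parts generates diagonal contributions evaluated at $\xi=x$; these assemble precisely into the term $2\bigl[\partial_x[a(K_1-K_2)]\bigr]_{\xi=x}$ appearing in $\hat M$. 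Second, the genuine endpoint terms at $x=0$ and $x=1$ must be collected with care.

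For the boundary terms I would use $w(0)=0$ to annihilate every contribution at $x=0$, with the hypothesis $K_2(0,x)=0$ removing the single kernel endpoint term not already killed by $w(0)=0$, and $w_x(1)=0$ to annihilate every term carrying $w_x(1)$. The surviving endpoint terms live at $x=1$. The multiplier endpoint term works out to $\bigl((b(1)-a_x(1))M(1)-a(1)M_x(1)\bigr)w(1)^2$, which is $\le 0$ by the inequality hypothesis and is therefore discarded to obtain the upper bound; the kernel endpoint term carries the factor $(b(1)-a_x(1))K_1(1,x)-a(1)(D_1K_1)(1,x)$, which vanishes identically by the equality hypothesis.

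The final ingredient is the diffusion self-term $-2\int_0^1 a(x)M(x)w_x(x)^2\,dx$ produced by the multiplier integration by parts. Writing $aM=(aM-\alpha\epsilon)+\alpha\epsilon$ and using $a\ge\alpha$ together with $M(x)\ge\epsilon$ — a consequence of the constraint~\eqref{eqn:jointpos:const1} in Theorem~\ref{thm:jointpos}, since the leading entry of $Z_1$ equals $1$ — the term $-2\int_0^1(aM-\alpha\epsilon)w_x^2\,dx$ is nonpositive and is dropped, while the Wirtinger inequality (Lemma~\ref{lem:wirtinger}, applicable because $w(0)=0$) gives $-2\alpha\epsilon\int_0^1 w_x^2\,dx\le -\frac{\pi^2}{2}\alpha\epsilon\int_0^1 w^2\,dx$, producing exactly the constant $-\frac{\pi^2}{2}\alpha\epsilon$ in $\hat M$. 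Collecting the surviving volume integrals then reproduces $\ip{w}{\mcl{X}_{\{\hat M,\hat K_1,\hat K_2\}}w}$ with $\{\hat M,\hat K_1,\hat K_2\}=\Omega_s(M,K_1,K_2)$. I expect the main obstacle to be the first subtlety above: correctly tracking the $\xi=x$ diagonal terms generated by the variable integration limits and verifying, through Lemma~\ref{lem:order}, that the cross-terms symmetrize so that the $x=1$ kernel endpoint term collapses to precisely the quantity set to zero by hypothesis.
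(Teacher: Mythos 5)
Your proposal is correct and follows essentially the same route as the paper's proof of this lemma: differentiate the quadratic form, substitute the PDE, integrate by parts to shift derivatives onto $aM$, $bM$ and $aK_i$, $bK_i$, collect the $\xi=x$ diagonal contributions from the variable integration limits into the $2\bigl[\partial_x[a(K_1-K_2)]\bigr]_{\xi=x}$ term, kill or discard the boundary terms using $w(0)=0$, $w_x(1)=0$, $K_2(0,x)=0$ and the two $x=1$ hypotheses, and apply Wirtinger with $aM\ge\alpha\epsilon$ to produce the $-\tfrac{\pi^2}{2}\alpha\epsilon$ constant. The only difference is organizational (the paper groups the computation into terms $\Gamma_1,\dots,\Gamma_5$), and your accounting of which hypothesis eliminates which surviving boundary term matches the paper exactly.
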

\begin{proof}
Let $\mcl{P}=\mcl{X}_{\{M,K_1,K_2\}}$ so that $V(w)=\ip{w}{\pop w}$. If $w$ satisfies~\eqref{eqn:stab:PDE_form}~-~\eqref{eqn:stab:PDE_form_BC}, then taking the time derivative of $V(w(t))$ and since $\{M,K_1,K_2\} \in \Xi_{d_1,d_2,\epsilon}$ implies $\pop$ is self-adjoint, we can write
$\dot{V}(w(t))=2\ip{w_t}{\pop w}$. Using Equation~\eqref{eqn:stab:PDE_form} we expand this out to get
\begin{equation}\label{eqn:Lyapunov_derv:1}
\dot{V}(w(t))=2\ip{w_t}{\pop w}=2\sum_{n=1}^5 \Gamma_n,
\end{equation} where
 \begin{align*}
 \Gamma_1 =& \igzo w_{xx}(x,t)a(x)M(x)w(x,t)dx, \\
 \Gamma_2 =& \igzo w_x(x,t)b(x)M(x)w(x,t)dx, \\
 \Gamma_3 =& \sum_{i=1}^2 \int_{\Delta_i} w_{xx}(x,t)a(x)  K_i(x,\xi)w(\xi,t) d\xi dx,\\
 \Gamma_4 = &\sum_{i=1}^2 \int_{\Delta_i} w_x(x,t)b(x)  K_i(x,\xi)w(\xi,t) d\xi dx\\
 \Gamma_5 = & \igzo w(x,t)^2 M(x) c(x) dx \\
 &\qquad +  \sum_{i=1}^2 \int_{\Delta_i} w(x,t) c(x) K_i(x,\xi)w(\xi,t) d\xi dx,
\end{align*} where $\Delta_1=\{(\xi, x) \,:\, 0 \le \xi \le x \le 1\}$ and $\Delta_2=\{(\xi, x) \,:\, 0 \le x \le \xi \le 1\}$.
Applying integration by parts twice and using the boundary condition $w(0,t)=w_x(1,t)=0$ yields
 \begin{align*}
 \Gamma_1 =& - \igzo w_x(x,t)^2 a(x)M(x)dx  \\
 &\qquad + \hlf \igzo  \frac{\partial^2}{\partial x^2} \left[ a(x)M(x)\right] w(x,t)^2 dx \\
 &\qquad \qquad -  \frac{1}{2}\left(a_x(1)M(1)+ a(1)M_x(1) \right) w(1,t)^2 .
 \end{align*}
 Since $a(x) \geq \alpha >0$ and $\{M,K_1,K_2\} \in \Xi_{d_1,d_2,\epsilon}$, we have $a(x)M(x) \geq \alpha \epsilon$. Thus, by application of Lemma~\ref{lem:wirtinger} we get
 \[
 - \igzo w_x(x,t)^2 a(x)M(x) dx \leq -\frac{\pi^2}{4}\alpha \epsilon \igzo w(x,t)^2 dx.
 \]
Therefore, we conclude that
 \begin{align}
 \Gamma_1 \leq & \hlf \igzo w(x,t)^2\left( \frac{\partial^2}{\partial x^2} \left[ a(x)M(x)\right]  -\frac{\pi^2}{2} \alpha \epsilon \right) dx \notag \\
 &\label{eqn:Gamma_1}\qquad - \frac{1}{2}\left(a_x(1)M(1)+ a(1)M_x(1) \right) w(1,t)^2 .
\end{align}
Again, applying integration by parts once and using $w(0,t)=0$,
 \begin{align}
 \Gamma_2 \hspace{-1mm} = \hspace{-1mm}&\label{eqn:Gamma_2}  - \hspace{-1mm} \hlf \hspace{-1mm} \igzo \hspace{-1mm} w(x,t)^2   \pfx \left[ b(x)M(x)\right] dx \hspace{-1mm}  + \hspace{-1mm}   \frac{1}{2}b(1)M(1) w(1,t)^2.
 \end{align}
Since $\{M,K_1,K_2\} \in \Xi_{d_1,d_2,\epsilon}$, we have $K_1(x,\xi)=K_2(\xi,x)$ and thus $K_1(x,x)=K_2(x,x)$. Exploiting this property, the constraint $K_2(0,x)=0$, and the boundary conditions $w(0,t)=w_x(1,t)=0$, we apply integration by parts twice to obtain
\begin{align}
 \Gamma_3  = & \igzo \hspace{-1mm} w(x,t)^2\left( \left[ \pfx \left[a(x)(K_1(x,\xi) \hspace{-1mm} - \hspace{-1mm}K_2(x,\xi)) \right] \right]_{\xi=x} \right) dx \notag \\
 &  +  \sum_{i=1}^2 \int_{\Delta_i} w(x,t)  \left( \frac{\partial^2}{\partial x^2}\left[a(x)K_i(x,\xi) \right] \right) w(\xi,t) d\xi dx \nonumber \\
  &     - w(1,t) \igzo a_x(1)K_1(1,x) w(x,t) dx \nonumber \\
  &  -w(1,t) \igzo a(1)(D_1 K_{1})(1,x) w(x,t)dx. \nonumber
 \end{align}
   Applying Lemma~\ref{lem:order} and using $K_1(x,\xi)=K_2(\xi,x)$, we get
 \begin{align}
 \Gamma_3 \hspace{-1mm} &= \hspace{-1mm} \igzo w(x,t)^2\left( \left[ \pfx \left[a(x)(K_1(x,\xi)-K_2(x,\xi)) \right] \right]_{\xi=x} \right) dx  \notag \\
 & - \hspace{-1mm} w(1,t)  \hspace{-1mm} \igzo \hspace{-1mm} \left(a_x(1)K_1(1,x) \hspace{-1mm} + \hspace{-1mm} a(1)(D_1 K_{1})(1,x) \right) w(x,t) dx \nonumber \\
   &\label{eqn:Gamma_3}  +\sum_{i=1}^2 \hlf \int_{\Delta_i} w(x,t) \bmat{\frac{\partial^2}{\partial x^2}  \\ \frac{\partial^2}{\partial \xi^2}}^T \bmat{a(x)K_i(x,\xi) \\ a(\xi)K_i(x,\xi)}w(\xi,t)d\xi dx.
 \end{align}
  Applying integration by parts once and following the same procedure as for $\Gamma_3$, we get
 \begin{align}
 \Gamma_4 =&- \sum_{i=1}^2 \hlf  \int_{\Delta_i} w(x,t) \bmat{\frac{\partial}{\partial x} \\ \frac{\partial}{\partial \xi}}^T \bmat{b(x)K_i(x,\xi) \\ b(\xi)K_i(x,\xi)}w(\xi,t)d\xi dx \notag \\
  &\label{eqn:Gamma_4} \qquad \qquad + w(1,t) \igzo b(1)K_1(1,x)w(x,t)dx.
 \end{align}
Finally, employing Lemma~\ref{lem:order} produces
 \begin{align}
\Gamma_5 &=  \igzo w(x,t)^2 M(x) c(x) dx  \notag \\
 &\label{eqn:Gamma_5} + \sum_{i=1}^2 \hlf \int_{\Delta_i} w(x,t) \left( \left[c(x)+c(\xi) \right] K_i(x,\xi) \right) w(\xi,t) d\xi dx.
 \end{align}
Finally, we combine the terms~\eqref{eqn:Gamma_1}~-~\eqref{eqn:Gamma_5} into the derivative~\eqref{eqn:Lyapunov_derv:1} and use the constraints
 \begin{align*}
& (b(1)-a_x(1))K_1(1,x)-a(1)(D_1 K_{1})(1,x)=0,\\
& (b(1)-a_x(1))M(1)-a(1)M_x(1) \leq 0,
\end{align*} to eliminate extraneous terms, thereby completing the proof.
\end{proof}

\begin{lemma}[Controller Synthesis]\label{lem:control}
Given polynomials $a$, $b$ and $c$ with $a(x) \geq \alpha >0$, for all $x \in [0,1]$, suppose that there exists a scalar $\epsilon>0$ and polynomials $M$, $K_1$ and $K_2$ such that
\begin{align*}
&\{M,K_1,K_2\} \in \Xi_{d_1,d_2,\epsilon},\quad K_2(0,x)=0.
\end{align*} Let
\begin{align*}
V(w)=&\ip{w}{\pinv w},
\end{align*} where $\mcl{P}=\mcl{X}_{\{M,K_1,K_2\}}$ and $\mcl{X}_{\{M,K_1,K_2\}}$ is as defined in~\eqref{eqn:X}. Then, for any $w$ which satisfies~\eqref{eqn:synth:PDE_form}~-~\eqref{eqn:synth:PDE_form_BC}
\begin{align*}
\dot V(w(t))\le& \ip{y(t)}{\mcl{X}_{\{\hat{M},\hat{K}_1,\hat{K}_2\}}y(t)} \\
& \qquad +\left[a(1)M_x(1)+(b(1)-a_x(1))M(1)  \right]y(1,t)^2 \nonumber \\
&  \qquad \qquad +2a(1)M(1)y_x(1,t)y(1,t),
\end{align*}
where $y=\pinv w$ and $\{\hat{M},\hat{K}_1,\hat{K}_2\} \in \Omega_c \{M,K_1,K_2\}$.
\end{lemma}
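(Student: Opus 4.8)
The plan is to mirror the structure of the analysis computation in Lemma~\ref{lem:analysis}, but now working with the dual Lyapunov function and the transformed state $y=\pinv w$. Since $\{M,K_1,K_2\}\in\Xi_{d_1,d_2,\epsilon}$, the operator $\pop=\mcl{X}_{\{M,K_1,K_2\}}$ is self-adjoint and coercive, and by Theorem~\ref{thm:invop} so is $\pinv$. Writing $V(w)=\ip{w}{\pinv w}$ and differentiating along solutions of \eqref{eqn:synth:PDE_form}, self-adjointness of $\pinv$ gives $\dot V(w)=2\ip{w_t}{\pinv w}=2\ip{\mcl{A}w}{y}=2\ip{y}{\mcl{A}\pop y}$, where $w=\pop y$. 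Thus the whole calculation reduces to expanding $2\igzo y(x)\big[a(x)w_{xx}(x)+b(x)w_x(x)+c(x)w(x)\big]\,dx$ with $w=\pop y$.

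First I would differentiate $w=\pop y$ explicitly. Writing $w(x)=M(x)y(x)+\igzx K_1(x,\xi)y(\xi)d\xi+\igxo K_2(x,\xi)y(\xi)d\xi$ and using $K_1(x,x)=K_2(x,x)$ (valid since $K_2(x,\xi)=K_1(\xi,x)$), the pointwise contributions from the variable integration limits cancel in $w_x$, leaving $w_x$ and $w_{xx}$ as a multiplier part in $M,M_x,M_{xx}$ acting on $y,y_x,y_{xx}$, plus two kernel integrals involving $D_1K_i$ and $D_1^2K_i$, together with the diagonal term $\big([D_1K_1]_{\xi=x}-[D_1K_2]_{\xi=x}\big)y(x)$ generated when the limits of the first-derivative integrals are differentiated. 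Substituting into $2\ip{y}{\mcl{A}w}$ and separating the multiplier and kernel contributions produces an analogue of the decomposition $\dot V=2\sum_n\Gamma_n$ used in Lemma~\ref{lem:analysis}.

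Next I would integrate by parts. In the multiplier terms, integrating $2\igzo y\,a(x)M(x)y_{xx}\,dx$ by parts twice yields $-2\igzo a(x)M(x)y_x^2\,dx$, the boundary contribution $2a(1)M(1)y(1)y_x(1)$, and lower-order volume terms that combine with the $b$- and $c$-terms to form the multiplier $\hat M$ of $\Omega_c$ in \eqref{eqn:synth:Mhat}; since $a(x)M(x)\ge\alpha\epsilon$ and $y(0)=0$, Lemma~\ref{lem:wirtinger} gives $-2\igzo aM\,y_x^2\,dx\le-\frac{\pi^2}{2}\alpha\epsilon\igzo y^2\,dx$, which is where the deliberate inequality enters and where the $-\frac{\pi^2}{2}\alpha\epsilon$ summand of $\hat M$ originates. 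For the kernel integrals over $\Delta_1$ and $\Delta_2$, because the derivatives already sit on $w=\pop y$, the terms appear directly as $a(x)D_1^2K_i+b(x)D_1K_i+\dots$ rather than in adjoint form, and applying Lemma~\ref{lem:order} symmetrizes them into the kernel part of $\ip{y}{\mcl{X}_{\{\hat M,\hat K_1,\hat K_2\}}y}$ with $\hat K_1,\hat K_2$ exactly as in \eqref{eqn:synth:K1hat}--\eqref{eqn:synth:K2hat}; the diagonal term feeds the $a(x)\big[2\pfx(K_1-K_2)\big]_{\xi=x}$ summand of $\hat M$.

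Finally I would dispose of the boundary data. The constraint $K_2(0,x)=0$ together with $M(0)\ge\epsilon>0$ forces $w(0)=M(0)y(0)=0$, hence $y(0)=0$, which annihilates every boundary evaluation at $x=0$ and legitimizes the use of Lemma~\ref{lem:wirtinger}. Collecting the surviving boundary evaluations at $x=1$ from all integrations by parts yields precisely $\big[a(1)M_x(1)+(b(1)-a_x(1))M(1)\big]y(1)^2+2a(1)M(1)y_x(1)y(1)$, and assembling everything gives the claimed bound with $\{\hat M,\hat K_1,\hat K_2\}=\Omega_c\{M,K_1,K_2\}$. I expect the main obstacle to be the bookkeeping: differentiating the semiseparable operator correctly (tracking the diagonal terms produced when the variable limits are differentiated) and then reorganizing the resulting double integrals via Lemma~\ref{lem:order} so that the non-adjoint kernel operators of $\Omega_c$ emerge cleanly, all while \emph{retaining} — rather than cancelling — the $x=1$ boundary terms, since these are exactly what the controller in Theorem~\ref{thm:synthesis} is later designed to absorb.
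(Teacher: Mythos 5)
Your proposal is correct and follows essentially the same route as the paper: differentiate $V$ using self-adjointness of $\pinv$, substitute $w=\pop y$ so that $\dot V=2\ip{y}{\mcl{A}\pop y}$, split into multiplier and kernel contributions (the paper's $\Gamma_1,\dots,\Gamma_5$), deduce $y(0,t)=0$ from $w(0,t)=0$, $K_2(0,x)=0$ and $M(0)\ge\epsilon$, apply integration by parts plus Lemma~\ref{lem:wirtinger} to the $aM y_{xx}$ term, symmetrize the kernel integrals with Lemma~\ref{lem:order} using $K_1(x,\xi)=K_2(\xi,x)$, and retain the $x=1$ boundary terms. Your observations about the diagonal terms cancelling in $w_x$ and surviving in $w_{xx}$, and about the kernels appearing in non-adjoint ($\Omega_c$) form because the derivatives act on $\pop y$, match the paper's computation exactly.
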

\begin{proof}
Taking the time derivative of $V(w(t))$ and since $\pinv$ is self-adjoint, we obtain
\begin{align}
\dot{V}(w(t)) &= 2 \ip{w_t}{\pinv w} \notag \\
&= 2 \ip{a(\cdot)w_{xx}+b(\cdot)w_x+c(\cdot)w}{\pinv w} \nonumber \\
& = 2 \ip{a(\cdot)\frac{\partial^2}{\partial x^2}(\pop y)+b(\cdot)\frac{\partial}{\partial x}(\pop y)+c(\cdot) \pop y}{y}  \notag \\
&\label{eqn:lem2:Vdot1}= 2 \sum_{n=1}^5 \Gamma_n,
\end{align}
where $y=\pinv w$ and
\begin{align*}
\Gamma_1=& \igzo a(x) \frac{\partial^2}{\partial x^2}(M(x)y(x,t))y(x,t)dx, \\
\Gamma_2=& \igzo b(x) \frac{\partial}{\partial x}(M(x)y(x,t))y(x,t)dx, \\
\Gamma_3=& \sum_{i=1}^2 \igzo a(x) \frac{\partial^2}{\partial x^2} \left(\int_{\beta_i} K_i(x,\xi)y(\xi,t)d\xi \right)y(x,t)dx,\\
\Gamma_4=&\sum_{i=1}^2 \igzo b(x) \frac{\partial}{\partial x} \left(\int_{\beta_i} K_i(x,\xi)y(\xi,t)d\xi \right)y(x,t)dx ,\\
\Gamma_5=& \igzo c(x)M(x)y(x,t)^2 dx  \\
&\qquad + \sum_{i=1}^2\int_{\Delta_i} y(x,t)c(x)K_i(x,\xi)y(\xi,t)d\xi dx,
\end{align*} where  $\Delta_1=\{(\xi,x)\,:\,0 \le \xi \le x \le 1\}$, $\Delta_2=\{(\xi,x)\,:\,0 \le x \le \xi \le 1\}$, $\beta_1=[0,x]$ and $\beta_2=[x,1]$.
 Before proceeding we calculate $y(0,t)$. The definition $y=\pinv w$ implies
\[
w(0,t)=M(0)y(0,t)+\igzo K_2(0,x)y(x,t)dx.
\]
 Therefore, since $w(0,t)=0$ and $K_2(0,x)=0$, we get $y(0,t)=0$. Now, since $M(x)a(x) \geq \alpha \epsilon$ and $y(0,t)=0$, applying integration by parts twice and using Lemma~\ref{lem:wirtinger} produces
\begin{align}
\Gamma_1 \hspace{-1mm} \leq  & \hlf \igzo  \hspace{-1mm} \left( a_{xx}(x)M(x) \hspace{-1mm} + \hspace{-1mm} a(x)M_{xx}(x) \hspace{-1mm} - \hspace{-1mm} \frac{\pi^2}{2}\alpha \epsilon   \right)y(x,t)^2 dx  \notag \\
& \qquad \qquad + \hlf \bl(a(1)M_x(1)-a_x(1)M(1) \br)y(1,t)^2 \nonumber \\
& \qquad \qquad \qquad \qquad \label{eqn:lem2:Gamma1} \quad +a(1)M(1)y_x(1,t)y(1,t).
\end{align} Similarly, applying integration by parts once yields
\begin{align}
\Gamma_2=& \hlf \igzo \left(b(x)M_x(x)-b_x(x)M(x) \right)y(x,t)^2 dx\notag \\
&\label{eqn:lem2:Gamma2}\qquad \qquad \qquad \qquad  +\hlf b(1)M(1)y(1,t)^2.
\end{align}
Applying integration by parts twice and Lemma~\ref{lem:order} yields
\begin{align}
\Gamma_3=& \igzo \left(a(x) \left[\frac{\partial}{\partial x}\left[K_1(x,\xi)-K_2(x,\xi) \right]  \right]_{\xi=x}  \right)y(x,t)^2 dx \notag \\
&\label{eqn:lem2:Gamma3}+  \sum_{i=1}^2 \hlf \int_{\Delta_i} y(x,t)\bmat{a(x) \frac{\partial^2}{\partial x^2} \\ a(\xi) \frac{\partial^2}{\partial \xi^2}}^T \bmat{K_i(x,\xi)\\ K_i(x,\xi)}  y(\xi,t)d\xi dx.
\end{align} In a similar manner as for $\Gamma_3$, we obtain
\begin{align}
\Gamma_4=&\label{eqn:lem2:Gamma4} \sum_{i=1}^2 \hlf \int_{\Delta_i} y(x,t)\bmat{b(x) \frac{\partial}{\partial x} \\ b(\xi) \frac{\partial}{\partial \xi}}^T \bmat{K_i(x,\xi)\\ K_i(x,\xi)}y(\xi,t)d\xi dx.
\end{align} Finally, applying Lemma~\ref{lem:order} to $\Gamma_5$ produces
\begin{align}
\Gamma_5=&\igzo c(x)M(x)y(x,t)^2 dx \notag \\
&\label{eqn:lem2:Gamma5}+ \hlf \sum_{i=1}^2 \int_{\Delta_i} y(x,t) (c(x)+c(\xi))K_i(x,\xi)y(\xi,t)d\xi dx.
\end{align}
 Substituting Equations~\eqref{eqn:lem2:Gamma1}~-~\eqref{eqn:lem2:Gamma5} into~\eqref{eqn:lem2:Vdot1} completes the proof.
\end{proof}

\bibliographystyle{plain}
\bibliography{TAC}

\begin{IEEEbiographynophoto} {Aditya Gahlawat}
received the B.Tech degree in mechanical engineering from Punjabi University, Patiala, India in 2007, the M.S. degree in mechanical and aerospace engineering from Illinois Institute of Technology, Chicago, USA in 2009 and the Ph.D. degree in automatique-productique from Universit\'e Grenoble Alpes, St. Martin d'Heres, France, in 2015.
He is currently a Ph.D. candidate in mechanical and aerospace engineering at Illinois Institute of Technology, Chicago, USA
His research focuses on the application of convex optimization based methods for the analysis and control of systems governed by partial differential equations with application to thermonuclear fusion.
Aditya Gahlawat was awarded the Chateaubriand fellowship in 2011 and 2012.
\end{IEEEbiographynophoto}
\begin{IEEEbiographynophoto}{Matthew M. Peet}
received the B.Sc. degree in physics and in aerospace engineering from the University of Texas, Austin, TX, USA, in 1999 and the M.S. and Ph.D. degrees in aeronautics and astronautics from Stanford University in 2001 and 2006, respectively. He was a Postdoctoral Fellow at the National Institute for Research in Computer Science and Control (INRIA), Paris, France, from 2006 to 2008. He was an Assistant Professor of Aerospace Engineering in the Mechanical, Materials, and Aerospace Engineering Department at the Illinois Institute of Technology in Chicago, IL, USA, from 2008 to 2012. Currently, he is an Assistant Professor of Aerospace Engineering in the School for the Engineering of Matter, Transport, and Energy (SEMTE) at Arizona State University, Tempe, AZ, USA, and director of the Cybernetic Systems and Controls Laboratory (CSCL). His research interests are in the role of computation as it is applied to the understanding and control of complex and large-scale systems with an emphasis on methods such as SOS for the optimization of polynomials. Dr. Peet received a National Science Foundation CAREER award in 2011.
\end{IEEEbiographynophoto}

\end{document}